\documentclass[12pt,oneside,english]{amsart}
\usepackage[T1]{fontenc}
\usepackage[latin9]{inputenc}
\usepackage{geometry}
\geometry{verbose}
\usepackage{babel}
\usepackage{textcomp}
\usepackage{url}
\usepackage{amstext}
\usepackage{amsthm}
\usepackage{amssymb}
\usepackage{microtype}
\usepackage[unicode=true,
 bookmarks=true,bookmarksnumbered=false,bookmarksopen=false,
 breaklinks=false,pdfborder={0 0 1},backref=false,colorlinks=false]
 {hyperref}

\makeatletter
\numberwithin{equation}{section}
\numberwithin{figure}{section}
\theoremstyle{plain}
\newtheorem{thm}{\protect\theoremname}[section]
\theoremstyle{plain}
\newtheorem{lem}[thm]{\protect\lemmaname}
\theoremstyle{plain}
\newtheorem{prop}[thm]{\protect\propositionname}
\theoremstyle{definition}
\newtheorem{defn}[thm]{\protect\definitionname}
\theoremstyle{plain}
\newtheorem{cor}[thm]{\protect\corollaryname}

\IfFileExists{lmodern.sty}{\usepackage{lmodern}}{}

\makeatother

\providecommand{\corollaryname}{Corollary}
\providecommand{\definitionname}{Definition}
\providecommand{\lemmaname}{Lemma}
\providecommand{\propositionname}{Proposition}
\providecommand{\theoremname}{Theorem}

\begin{document}
\title{Small scale equidistribution for a point scatterer on the torus}
\author{Nadav Yesha}
\address{Department of Mathematics, University of Haifa, 3498838 Haifa, Israel}
\email{nyesha@univ.haifa.ac.il}
\begin{abstract}
We study the small scale distribution of the eigenfunctions of a point
scatterer (the Laplacian perturbed by a delta potential) on two- and
three-dimensional flat tori. In two dimensions, we establish small
scale equidistribution for the ``new'' eigenfunctions holding all
the way down to the Planck scale. In three dimensions, small scale
equidistribution is established for \emph{all} of the ``new'' eigenfunctions
at certain scales.
\end{abstract}

\maketitle

\section{Introduction}

\subsection{Background}

One of the main goals in the field of Quantum Chaos is understanding
the distribution of quantum eigenstates in the semiclassical limit.
For example, the celebrated Quantum Ergodicity Theorem (\textquotedblleft Shnirelman\textquoteright s
Theorem\textquotedblright ) \cite{Shnirelman,Zelditch,ColinDeVerdiere2}
asserts that if the underlying classical dynamics of a quantum system
is ergodic, then almost all eigenstates are equidistributed in phase
space. In particular, let $\mathcal{M}$ be a smooth, compact Riemannian
manifold, and let $\left\{ \phi_{j}\right\} $ be an orthonormal basis
of $L^{2}\left(\mathcal{M},d\text{vol}\right)$ consisting of Laplace
eigenfunctions with corresponding eigenvalues $\left\{ E_{j}\right\} $
(where $d\text{vol}$ is the normalized Riemannian volume form). If
the geodesic flow on $\mathcal{M}$ is ergodic, then there exists
a density one subsequence $\left\{ \phi_{j_{k}}\right\} $, such that
for every ``nice'' $\mathcal{A}\subseteq\mathcal{M}$, we have
\begin{equation}
\int_{\mathcal{A}}\left|\phi_{j_{k}}\left(y\right)\right|^{2}d\text{vol}(y)\sim\text{vol}\left(\mathcal{A}\right)\hspace{1em}\left(k\to\infty\right).\label{eq:PositionSpaceEquidistribution}
\end{equation}
Moreover, it is expected \cite{Berry,Berry2} that generically equidistribution
should hold at \emph{smaller scales}, i.e., when $\mathcal{A}=B_{r}\left(x\right)$
is the radius $r$ geodesic ball centred at $x$, and $r$ decays
slower than the Planck scale $E_{j}^{-1/2}.$

In this paper, we study small scale equidistribution for a point scatterer,
or the Laplacian perturbed with a delta potential on the flat torus
$\mathbb{T}^{d}=\mathbb{R}^{d}/2\pi\mathbb{Z}^{d}$ ($d=2,3)$\footnote{The normalization by $2\pi$ is introduced to facilitate the notation
below. }, an important model in Quantum Chaos for studying the transition
between chaos and integrability. The underlying classical dynamics
of a toral point scatterer is integrable, since it is identical to
the geodesic flow on the torus (excluding a measure zero set of trajectories); on the other hand, numerical experiments suggest that the eigenfunctions
and spectrum of this system display chaotic features such as Gaussian-like
value distribution and level repulsion (see \cite{Seba}).

\subsection{Small scale equidistribution}

Small scale equidistribution is a very active area of research nowadays,
though most of the results are partial. On the modular surface, Luo
and Sarnak \cite{LuoSarnak} established (\ref{eq:PositionSpaceEquidistribution})
with balls $\mathcal{A}=B_{r}\left(x\right)$ of radii $r>E_{j_{k}}^{-\alpha}$
(for some small $\alpha>0)$ for a density one sequence of Hecke-Maass
forms; Young \cite{Young} showed that under the Generalized Riemann
Hypothesis, (\ref{eq:PositionSpaceEquidistribution}) holds with balls
$\mathcal{A}=B_{r}\left(x\right)$ of radii $r>E_{j}^{-1/6}$ for
$all$ such forms. Hezari and Rivière \cite{HezariRiviere} and Han
\cite{Han} established (a non-uniform version of) (\ref{eq:PositionSpaceEquidistribution})
with balls $\mathcal{A}=B_{r}\left(x\right)$ of radii $r>(\log E_{j_{k}})^{-\alpha}$
on compact negatively curved manifolds. Further results are due to
Han \cite{Han2} (small scale equidistribution for random eigenbases
on a certain class of ``symmetric'' manifolds), Han and Tacy \cite{HanTacy}
(random \emph{combinations} of Laplace eigenfunctions on compact manifolds),
Humphries \cite{Humphries} (small scale equidistribution for Hecke-Maass
forms, with balls $\mathcal{A}=B_{r}\left(x\right)$ whose \emph{centres}
are random. See also \cite{GranvilleWigman,WigmanYesha} for results
on the torus), and de Courcy-Ireland \cite{DeCourcyIreland} (discrepancy
estimates for random spherical harmonics).

An example of a manifold with small scale equidistribution holding
(almost) all the way down to Planck scale is the the two-dimensional
torus $\mathbb{T}^{2}$, as was demonstrated by Lester and Rudnick
\cite{LesterRudnick} (see also Hezari and Rivière \cite{HezariRiviere2}),
who showed that for every orthonormal basis $\left\{ \phi_{j}\right\} $
of toral Laplace eigenfunctions, there exists a density one subsequence
$\left\{ \phi_{j_{k}}\right\} $ such that
\begin{equation}
\int_{B_{r}\left(x\right)}\left|\phi_{j_{k}}\left(y\right)\right|^{2}d\text{vol}(y)\sim\text{vol}\left(B_{r}\left(x\right)\right)\hspace{1em}\left(k\to\infty\right)\label{eq:SmallScaleEquidistribution}
\end{equation}
uniformly for $r>E_{j_{k}}^{-1/2+o\left(1\right)}$ and $x\in\mathbb{T}^{2}$.
More generally, for the $d$-dimensional torus $\mathbb{T}^{d}$,
they established (\ref{eq:SmallScaleEquidistribution}) for a density
one sequence of toral Laplace eigenfunctions uniformly for $r>E_{j_{k}}^{-\frac{1}{2\left(d-1\right)}+o\left(1\right)}$
and $x\in\mathbb{T}^{d}$, and proved that the lower bound on the
radii is sharp (a refined version of Lester and Rudnick's result for
the two-dimensional torus was obtained by Granville and Wigman \cite{GranvilleWigman}
more recently).

\subsection{Toral point scatterers}

A point scatterer on the torus is formally defined as the rank one
singular perturbation
\begin{equation}
-\Delta+\alpha\left\langle \delta_{x_{0}},\cdot\right\rangle \delta_{x_{0}}\label{eq:PointScatterer}
\end{equation}
where $\Delta$ is the Laplace-Beltrami operator, $\alpha\in\mathbb{R}$
is a coupling parameter, and $\delta_{x_{0}}$ is the Dirac delta
potential at $x_{0}\in\mathbb{T}^{d}.$ Rigorously, as described in
\cite{ColinDeVerdiere}, the operator (\ref{eq:PointScatterer}) is
realized as a self-adjoint operator acting on $L^{2}\text{\ensuremath{\left(\mathbb{T}^{d}\right)} }$
via the theory of self-adjoint extensions. One begins with the Laplacian
acting on the domain of smooth functions vanishing near $x_{0}$;
there exists a one parameter family of self-adjoint extensions of
this operator denoted by $\Delta_{\phi}$ with $\phi\in(-\pi,\pi]$,
each corresponding to an operator (\ref{eq:PointScatterer}) with
a particular value of $\alpha$ ($\phi=\pi$ gives the trivial extension
$\Delta_{\pi}=\Delta$ corresponding to $\alpha=0$, the only extension
for manifolds of dimension $d\ge4$).

The spectrum of a non-trivial self-adjoint extension consists of two
types of eigenvalues:
\begin{itemize}
\item The ``old'', or unperturbed eigenvalues -- these are the nonzero
Laplace eigenvalues with multiplicities reduced by $1$. The corresponding
eigenfunctions are precisely the Laplace eigenfunctions vanishing
at $x_{0}$. 
\item A set $\Lambda=\Lambda_{\phi}$ of ``new'' or ``perturbed'' eigenvalues,
of multiplicity $1$ each, interlacing with the Laplace eigenvalues,
so that there is a unique new eigenvalue between every two Laplace
eigenvalues (its exact position depends on the choice of the self-adjoint
extension). For a new eigenvalue $\lambda\in\Lambda$, the corresponding
$L^{2}$-normalized new eigenfunction is $g_{\lambda}=G_{\lambda}/\left\Vert G_{\lambda}\right\Vert _{2}$,
where $G_{\lambda}$ is Green's function $G_{\lambda}\left(x;x_{0}\right)=\left(\Delta+\lambda\right)^{-1}\delta_{x_{0}}.$
\end{itemize}
The semiclassical limits of the new eigenfunctions of point scatterers
on flat tori have been extensively studied in recent years (for a
survey on some of the results, see \cite{Ueberschaer}). Rudnick and
Ueberschär \cite{RudnickUeberschaer} showed that for a point scatterer
on a two-dimensional torus, a density one subsequence of the new eigenfunctions
are equidistributed in configuration space, i.e., (\ref{eq:PositionSpaceEquidistribution})
holds along a density one subset $\Lambda'\subseteq\Lambda$; by a
density one subset we mean that
\[
\lim_{X\to\infty}\frac{\#\left\{ \lambda\in\Lambda':\,\lambda\le X\right\} }{\#\left\{ \lambda\in\Lambda:\,\lambda\le X\right\} }=1.
\]
In \cite{Yesha1}, it was shown that for a point scatterer on the
standard three-dimensional torus $\mathbb{T}^{3}$, equidistribution
in configuration space holds for \emph{all} of the new eigenfunctions
(and along a density one subsequence of the new eigenfunctions for
point scatterers on tori with a Diophantine aspect ratio). Recently,
we were able to establish equidistribution in configuration space
for tori with two point scatterers \cite{Yesha3}.

Equidistribution in full phase space (along a density one subsequence)
was established both on the standard two-dimensional torus $\mathbb{T}^{2}$
by Kurlberg and Ueberschär \cite{KurlbergUeberschaer}, and on the
standard three-dimensional torus $\mathbb{T}^{3}$ \cite{Yesha2}.
The quantum limits of a point scatterer on a torus with an irrational
aspect ratio (also known as the Šeba billiard \cite{Seba}) were further
studied by Kurlberg-Ueberschär \cite{KurlbergUeberschaer2}, who proved
the existence of ``scars'', i.e., localized quantum limits. The
existence of scars for arithmetic point scatterers was established
by Kurlberg and Rosenzweig \cite{KurlbergRosenzweig}.

\subsection{Statement of the main results}

We now state our main results concerning small scale equidistribution
for the new eigenfunctions $g_{\lambda}$ of toral point scatterers.
We only discuss the \emph{standard} flat two- and three-dimensional
tori, although using our arguments, analogous (albeit weaker) results
may also be obtained for any torus of the form $\mathbb{T}^{2}_{\mathcal{L}_0}=\mathbb{R}^{2}/2\pi\mathcal{L}_0$ where $ \mathcal{L}_0 = \mathbb{Z}(1/a,0)\oplus\mathbb{Z}(0,a)$ is a unimodular lattice (as in \cite{RudnickUeberschaer}), and for three-dimensional tori with Diophantine aspect ratio (as in \cite{Yesha1}). The
first principal result asserts the small scale equidistribution for
the new eigenfunctions of a point scatterer on the standard flat two-dimensional
torus $\mathbb{T}^{2}$, holding (almost) all the way down to the
Planck scale. In particular, we significantly strengthen the main
result in \cite{RudnickUeberschaer} in this case.
\begin{thm}
\label{thm:MainThm2d}Let $d=2,$ and fix $\phi\in\left(-\pi,\pi\right).$
There is a subset $\Lambda'\subseteq\Lambda_{\phi}$ of density one,
such that for every $\epsilon>0$,
\[
\sup_{\substack{r>\lambda^{-1/2+\epsilon}\\
x\in\mathbb{T}^{2}
}
}\left|\frac{\int_{B_{r}\left(x\right)}\left|g_{\lambda}\left(y\right)\right|^{2}dy}{\pi r^{2}}-1\right|\to0
\]
as $\lambda\to\infty$ along $\Lambda'$.
\end{thm}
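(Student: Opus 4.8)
\emph{Sketch of the proof.} The plan is to compute the ball average of $|g_{\lambda}|^{2}$ on the Fourier side and show that only the diagonal contributes to leading order. Since Green's function $G_{\lambda}(\,\cdot\,;x_{0})=(\Delta+\lambda)^{-1}\delta_{x_{0}}$ has Fourier coefficients proportional to $(\lambda-|\xi|^{2})^{-1}e^{-i\xi\cdot x_{0}}$, its $L^{2}$-normalisation satisfies $\widehat{g_{\lambda}}(\xi)=\kappa_{\lambda}\,(\lambda-|\xi|^{2})^{-1}e^{-i\xi\cdot x_{0}}$ for a normalising constant $\kappa_{\lambda}$; in particular $|\widehat{g_{\lambda}}(\xi)|$ depends only on $|\xi|^{2}$, so $g_{\lambda}$ is a radially weighted superposition of the frequency shells $|\xi|^{2}=m$. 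Expanding $|g_{\lambda}|^{2}$ as a double Fourier series, integrating over $B_{r}(x)$ using $\int_{B_{r}(x)}e^{i\zeta\cdot y}\,dy=e^{i\zeta\cdot x}\,\widehat{\chi_{B_{r}}}(\zeta)$ and $\widehat{\chi_{B_{r}}}(0)=\pi r^{2}$, the diagonal terms $\xi=\eta$ reproduce exactly the main term $\pi r^{2}$, and we are left with
\[
\frac{\int_{B_{r}(x)}|g_{\lambda}(y)|^{2}\,dy}{\pi r^{2}}-1=\frac{1}{\pi r^{2}V_{\lambda}(0)}\sum_{\zeta\in\mathbb{Z}^{2}\setminus\{0\}}V_{\lambda}(\zeta)\,\widehat{\chi_{B_{r}}}(\zeta)\,e^{i\zeta\cdot(x-x_{0})},\quad V_{\lambda}(\zeta):=\sum_{\eta\in\mathbb{Z}^{2}}\frac{1}{(\lambda-|\eta+\zeta|^{2})(\lambda-|\eta|^{2})}.
\]
Bounding by absolute values removes the dependence on $x$, so it suffices to exhibit a density one $\Lambda'\subseteq\Lambda_{\phi}$ along which $\sup_{r>\lambda^{-1/2+\epsilon}}\bigl(r^{2}V_{\lambda}(0)\bigr)^{-1}\sum_{\zeta\neq0}|V_{\lambda}(\zeta)|\,|\widehat{\chi_{B_{r}}}(\zeta)|\to0$.

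Two structural facts make this tractable. First, the ball's Fourier transform obeys $\widehat{\chi_{B_{r}}}(\zeta)=2\pi r|\zeta|^{-1}J_{1}(r|\zeta|)$, hence $|\widehat{\chi_{B_{r}}}(\zeta)|\ll\min(r^{2},\,r^{1/2}|\zeta|^{-3/2})$, the second bound being effective once $|\zeta|\gtrsim 1/r$. Second, the partial fraction identity $\bigl((\lambda-|\eta+\zeta|^{2})(\lambda-|\eta|^{2})\bigr)^{-1}=(|\eta+\zeta|^{2}-|\eta|^{2})^{-1}\bigl((\lambda-|\eta+\zeta|^{2})^{-1}-(\lambda-|\eta|^{2})^{-1}\bigr)$, valid when $|\eta+\zeta|^{2}\neq|\eta|^{2}$, rewrites $V_{\lambda}(\zeta)$ for $\zeta\neq0$, after a shift of the summation variable, as a combination of \emph{first-order} resolvent sums together with the ``resonant'' contribution of those $\eta$ with $|\eta+\zeta|^{2}=|\eta|^{2}$. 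Consequently $V_{\lambda}(\zeta)$ carries no squared denominator and decays in $|\zeta|$: one finds $|V_{\lambda}(\zeta)|\ll\lambda^{o(1)}\,r_{2}(n)\,\delta(\lambda)^{-1}\lambda^{-1/2}|\zeta|^{-1}$ for non-resonant $\zeta$, where $\delta(\lambda)=\mathrm{dist}(\lambda,\mathrm{spec}(-\Delta))$ and $n=n(\lambda)$ is the nearest Laplace eigenvalue, whereas $V_{\lambda}(0)=\sum_{\eta}(\lambda-|\eta|^{2})^{-2}\gg r_{2}(n)\delta(\lambda)^{-2}$; and the resonant $\zeta$ are exactly the differences $\xi-\eta$ of lattice points lying on shells close to $\sqrt{\lambda}$, so their effect is governed by the counting functions $P_{m}(K):=\#\{(\xi,\eta):|\xi|^{2}=|\eta|^{2}=m,\ |\xi-\eta|\le K\}$ with $m$ near $\lambda$.

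It remains to feed in the arithmetic. From the analysis of the Krein spectral equation of the point scatterer in \cite{RudnickUeberschaer,KurlbergUeberschaer}, on a density one subset of $\Lambda_{\phi}$ the bracketing Laplace eigenvalues $n(\lambda)<\lambda<n'(\lambda)$ satisfy $\delta(\lambda)\gg\lambda^{-o(1)}$ (indeed $\delta(\lambda)$ is comparable to the local spacing of sums of two squares), $r_{2}(n(\lambda)),r_{2}(n'(\lambda))\to\infty$, and $V_{\lambda}(0)\gg\lambda^{-o(1)}$ with the two shells nearest $\sqrt{\lambda}$ carrying a positive proportion of $V_{\lambda}(0)$. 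From the work of Lester and Rudnick \cite{LesterRudnick} on lattice points on circles, for a density one set of integers $m$ one has $P_{m}(K)=o(r_{2}(m)^{2})$ uniformly for $1\le K\le m^{1/2-\epsilon}$, quantitatively enough that $r_{2}(m)^{-1}\sum_{K}\min(r^{2},r^{1/2}K^{-3/2})P_{m}(K)=o(r^{2})$ uniformly for $r>m^{-1/2+\epsilon}$. Taking $\Lambda'$ to be the intersection of the Rudnick--Uebersch\"ar set with the set of $\lambda$ whose bracketing values $n(\lambda),n'(\lambda)$ are Lester--Rudnick good, one splits $\sum_{\zeta\neq0}|V_{\lambda}(\zeta)|\,|\widehat{\chi_{B_{r}}}(\zeta)|$ into: the resonant $\zeta$, handled by the Lester--Rudnick bound applied to the shells $n(\lambda),n'(\lambda)$ (after accounting for the resolvent weights $(\lambda-m)^{-2}$ and for the domination of $V_{\lambda}(0)$ by these shells); the non-resonant $\zeta$, handled by the $|\zeta|^{-1}$ decay together with $\sum_{\zeta}|\zeta|^{-1}|\widehat{\chi_{B_{r}}}(\zeta)|\ll r$, $\delta(\lambda)\lambda^{-1/2}\to0$ and $r_{2}(n(\lambda))^{-1}\to0$; and the cross-terms between the two near shells and the background of far shells, where the resolvent weights are small.

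The crux is the uniformity in $r$ all the way to the Planck scale $r\approx\lambda^{-1/2+\epsilon}$, combined with the need to genuinely beat the normalising factor $V_{\lambda}(0)$, which is itself small --- of order $1/\sqrt{\log\lambda}$ for typical $\lambda$. Because $g_{\lambda}$ is spread over an entire thick shell of width $\lambda^{o(1)}$ about $\sqrt{\lambda}$ rather than concentrated on a single circle, the single-circle Lester--Rudnick estimate cannot be quoted directly: the resonant $\zeta$ come from pairs of near-sphere lattice points on possibly different shells and with arbitrary angular separation (the latter is forced once $|\zeta|\approx\sqrt{\lambda}$, where the ``close pairs on an arc'' picture degenerates), and the density one lattice-point input must be propagated through the resolvent weighting and the cross-shell and background terms while losing at most a $\lambda^{o(1)}$ factor --- a factor that must then be absorbed by the $\lambda^{-\epsilon}$-type gains coming from $r>\lambda^{-1/2+\epsilon}$ and from $r_{2}(n(\lambda))\to\infty$. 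Finally one must check, using the interlacing of the new spectrum with the Laplace spectrum, that the union of the exceptional sets --- large gaps between sums of two squares near $\lambda$, clustered lattice points on the near circles, slowly growing $r_{2}(n(\lambda))$ --- pulls back to a density zero subset of the new eigenvalues.
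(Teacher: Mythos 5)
Your high-level plan matches the paper's: expand the ball average of $|g_\lambda|^2$ in Fourier series, peel off the diagonal which contributes $\pi r^2$, bound the rest by absolute values to remove the $x$-dependence, split the off-diagonal frequencies $\zeta$ according to whether the resonance $|\eta+\zeta|^2=|\eta|^2$ occurs (same-shell versus cross-shell pairs), and handle the same-shell part via Bourgain--Rudnick/Lester--Rudnick separation of lattice points on circles. That last piece is essentially what the paper does with the set $BR(\epsilon)$ and the bound (3.2).

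The genuine gap is in the non-resonant (cross-shell) term, and it is exactly where you yourself flag the ``crux.'' The claimed pointwise bound
$|V_\lambda(\zeta)| \ll \lambda^{o(1)}\,r_2(n)\,\delta(\lambda)^{-1}\lambda^{-1/2}|\zeta|^{-1}$
is false for fixed $\lambda$ as stated. Take any near-shell lattice point $\eta$ (say $|\eta|^2 = n$, the nearest Laplace eigenvalue to $\lambda$) that happens to have a coordinate of size $O(\lambda^{o(1)})$, and let $\zeta$ be a short vector so that $\eta+\zeta$ is still within $O(\lambda^{o(1)})$ of the sphere $|\xi|^2 = \lambda$. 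Then the single pair $(\eta,\eta+\zeta)$ already contributes $\asymp \delta(\lambda)^{-1}$ to $V_\lambda(\zeta)$, whereas your claimed bound is $\ll \lambda^{o(1)}\delta(\lambda)^{-1}\lambda^{-1/2}$, smaller by a factor $\lambda^{1/2-o(1)}$. More generally, for each fixed $\lambda$ there \emph{are} bad frequencies $\zeta$ for which cross-shell pairs with small $|\eta+\zeta|^2-|\eta|^2$ exist, and for those $\zeta$ the partial-fraction decomposition does not give the $|\zeta|^{-1}$ savings you need; nothing in the proposal shows such $\zeta$ are absent along a density-one set of $\lambda$.

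The paper circumvents this precisely because it does \emph{not} attempt a pointwise bound on $V_\lambda(\zeta)$. It first truncates to $g_{\lambda,L}$ with $L=\lambda^\delta$ (Lemmas 3.2--3.3), which caps the relevant shells at $||\xi|^2-\lambda|<L$. Then, and this is the step missing from your proposal, it proves a counting lemma in $\lambda$ for each fixed frequency $\zeta$: the number of $\lambda\le X$ for which some near-shell $\xi$ satisfies $0<|\langle 2\xi-\zeta,\zeta\rangle|\le 3|\xi|^{2\delta}$ is $\ll X^{1/2+2\delta}/|\zeta|$ (Lemmas 3.8--3.9, via lattice points near a line). Summing over $\zeta$ against the Bessel weight and applying Markov/Chebyshev (Lemma 3.11 and the proof of Proposition 3.4) yields the density-one set. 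In other words, the $|\zeta|^{-1}$ gain you want appears only \emph{on average over $\lambda$}, not for each $\lambda$; your sketch needs to be rewritten as an averaging argument, with a counting lemma of the above type replacing the unjustified pointwise bound on $V_\lambda(\zeta)$.
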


Next, we establish the small scale equidistribution for the new eigenfunctions
of a point scatterer on the standard flat three-dimensional torus
$\mathbb{T}^{3}$. For balls with radii $r>\lambda^{-1/12+o\left(1\right)},$
our statement will hold for $all$ new eigenfunctions, improving upon
the principal result in \cite{Yesha1}.
\begin{thm}
\label{thm:MainThm3d}Let $d=3$, and fix $\phi\in\left(-\pi,\pi\right).$
For every $\epsilon>0$,
\[
\sup_{\substack{r>\lambda^{-1/12+\epsilon}\\
x\in\mathbb{T}^{3}
}
}\left|\frac{\int_{B_{r}\left(x\right)}\left|g_{\lambda}\left(y\right)\right|^{2}dy}{\frac{4}{3}\pi r^{3}}-1\right|\to0
\]
as $\lambda\to\infty$ along $\Lambda_{\phi}$.
\end{thm}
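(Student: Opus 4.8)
The plan is to pass to the Fourier side and reduce the statement to an off-diagonal estimate for the Fourier coefficients of $|G_\lambda|^2$. Write $G_\lambda(y)=\sum_{\xi\in\mathbb Z^3}(|\xi|^2-\lambda)^{-1}e^{i\langle\xi,y-x_0\rangle}$, so that $g_\lambda=G_\lambda/\|G_\lambda\|_2$ and
\[
|G_\lambda(y)|^2=\sum_{k\in\mathbb Z^3}c_\lambda(k)\,e^{i\langle k,y-x_0\rangle},\qquad
c_\lambda(k)=\sum_{\xi\in\mathbb Z^3}\frac{1}{(|\xi|^2-\lambda)(|\xi-k|^2-\lambda)},
\]
with $c_\lambda(0)=\|G_\lambda\|_2^2$. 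To replace a smooth test function by $\mathbf 1_{B_r(x)}$, fix a parameter $\delta=\delta(r,\lambda)\in(0,1)$ and radial bumps $\chi_-\le\mathbf 1_{B_r(x)}\le\chi_+$ supported in $B_{(1\mp\delta)r}(x)$ with $|\partial^\alpha\chi_\pm|\ll_\alpha(\delta r)^{-|\alpha|}$; then $\widehat{\chi_\pm}(0)=(1+O(\delta))\tfrac43\pi r^3$ and $\widehat{\chi_\pm}$ is essentially supported in $|k|\lesssim(\delta r)^{-1}$ with $|\widehat{\chi_\pm}(k)|\ll r^3(1+r|k|)^{-2}$ there. Since $|e^{i\langle k,x-x_0\rangle}|=1$, testing $|G_\lambda|^2$ against $\chi_\pm$ gives, uniformly in $x$,
\[
\frac{\int_{B_r(x)}|g_\lambda|^2}{\tfrac43\pi r^3}=1+O(\delta)+O\!\left(\frac{1}{\|G_\lambda\|_2^2\,r^3}\sum_{k\neq0}|c_\lambda(k)|\,|\widehat{\chi_\pm}(k)|\right),
\]
so everything reduces to bounding the off-diagonal sum and choosing $\delta$.

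Next I would pin down the normalising constant and locate $\lambda$. Using the Krein/Colin de Verdi\`ere spectral equation $\sum_\xi\bigl((|\xi|^2-\lambda)^{-1}-(|\xi|^2+1)^{-1}\bigr)=c_\phi$, together with the facts that the gaps in $S_3=\{m:r_3(m)>0\}$ are bounded and that $r_3(m)=m^{1/2+o(1)}$ for \emph{every} $m\in S_3$ (class number formula and Siegel's theorem), one shows that $\lambda$ lies at distance $\delta_0:=\mathrm{dist}(\lambda,S_3)\asymp r_3(n_\ast)/\sqrt\lambda$ from its nearest three-square integer $n_\ast$, with $\lambda^{-\epsilon}\ll\delta_0\ll1$. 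A dyadic decomposition of $c_\lambda(0)=\sum_m r_3(m)(m-\lambda)^{-2}$ then yields $\|G_\lambda\|_2^2=r_3(n_\ast)\delta_0^{-2}+O_\epsilon(\lambda^{1/2+\epsilon})=\lambda^{1/2+o(1)}$, and in particular the uniform lower bound $\|G_\lambda\|_2^2\gg\lambda^{1/2-o(1)}$. It is exactly this uniformity over $n_\ast$ that lets the conclusion hold for \emph{all} new eigenvalues rather than along a density-one subset (in contrast with the two-dimensional case, where $r_2$ fluctuates).

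The heart of the proof is the off-diagonal bound. Decompose $G_\lambda=(n_\ast-\lambda)^{-1}P_{n_\ast}+H_\lambda$, where $P_{n_\ast}=\sum_{|\xi|^2=n_\ast}e^{i\langle\xi,\cdot-x_0\rangle}$ and $H_\lambda$ collects the other shells, and split $|G_\lambda|^2$ into $\delta_0^{-2}|P_{n_\ast}|^2$, a cross term (whose zeroth coefficient vanishes since $P_{n_\ast}\perp H_\lambda$), and $|H_\lambda|^2$. Estimating $|c_\lambda(k)|\le\sum_{\xi-\eta=k}|{|\xi|^2-\lambda}|^{-1}|{|\eta|^2-\lambda}|^{-1}$ and grouping $\xi,\eta$ by their spheres $|\xi|^2=m_1,\ |\eta|^2=m_2$ (the weight of a pair being $\asymp(|m_1-\lambda|\,|m_2-\lambda|)^{-1}$, which is $\delta_0^{-1}$ only at $m_i=n_\ast$ and $O(1)$ otherwise) reduces everything to sums $\sum_{\xi\neq\eta,\ |\xi|^2=m_1,\ |\eta|^2=m_2}|\widehat{\chi_\pm}(\xi-\eta)|$, each governed by the number of lattice points lying on a sphere of radius $\asymp\sqrt\lambda$ in a Euclidean ball of radius $\rho\lesssim(\delta r)^{-1}$. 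The decisive arithmetic inputs are a uniform estimate of the shape $\#\{\xi\in\mathbb Z^3:|\xi|^2=m,\ |\xi-\xi_0|\le\rho\}\ll_\epsilon\rho^{3/2+\epsilon}$ (supplemented by the trivial $\ll\rho^3$ and $\ll r_3(m)=m^{1/2+o(1)}$) and the circle bound $\#\{\xi\in\mathbb Z^3:|\xi|^2=m_1,\ |\xi-k|^2=m_2\}\ll_\epsilon\lambda^{\epsilon}$ for the cross term; one also separates the contribution of the singularity of $G_\lambda$ at $x_0$ (the far shells rebuild the free resolvent, so $c_\lambda(k)\sim c|k|^{-1}$ with $c=O(1)$ for large $|k|$, which is negligible once $r>\lambda^{-1/4}$). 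Summing dyadically in $|k|$ gives $\sum_{k\neq0}|c_\lambda(k)|\,|\widehat{\chi_\pm}(k)|\ll_\epsilon r\,\lambda^{1/3+\epsilon}$ as long as $(\delta r)^{-1}\gtrsim\lambda^{1/3}$; taking $\delta\asymp(r\lambda^{1/3})^{-1}$, the relative error in the displayed identity becomes $O(\lambda^{-1/4})+O(\lambda^{-1/6+o(1)}r^{-2})$, which is $o(1)$ exactly when $r>\lambda^{-1/12+\epsilon}$. Every estimate is uniform in $x\in\mathbb T^3$ and in $n_\ast\in S_3$, giving the stated uniform convergence along the full sequence $\Lambda_\phi$.

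The main obstacle is this off-diagonal step: producing a bound on $c_\lambda(k)$ — equivalently on lattice points on spheres of radius $\asymp\sqrt\lambda$ in small balls — that is at once strong enough in the $|k|$ and $\lambda$ aspects to reach the scale $\lambda^{-1/12+\epsilon}$ and uniform over \emph{all} new eigenvalues; the exponent $1/12$ is precisely what the $\rho^{3/2+\epsilon}$ cap estimate yields, the loss relative to a single toral Laplace eigenfunction reflecting the fact that $g_\lambda$ spreads over an entire band of frequency shells rather than one. A secondary, and in the stated range of $r$ harmless, difficulty is disentangling the $\tfrac1{|y-x_0|}$ singularity of $G_\lambda$ from the oscillatory part.
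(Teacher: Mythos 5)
Your outline diverges substantially from the paper's argument, and the central step does not work as written.

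The paper proceeds by truncating the Green's function to the band $|\,|\xi|^2-\lambda|<L=\lambda^{\delta}$ (so there is no singularity to deal with and only finitely many Fourier modes), tests against Harman's Beurling--Selberg majorant/minorant $b_x^{\pm}$ with \emph{hard} frequency cutoff $T$ (so all Fourier weights are simply $\ll r^3$ and the only loss is the Selberg error $O(r^2/T)$, which forces $T\gtrsim 1/r$ and is exactly the source of the $-1/12$), and then bounds each Fourier coefficient by the \emph{strip count} of Lemma~\ref{lem:SphericalStripsLemma}: for a fixed frequency $\zeta$, lattice points $\xi$ with $|\xi|^2=n$, $|n-\lambda|<L$ and $|\langle 2\xi-\zeta,\zeta\rangle|<CL$ number $\ll_\eta L\lambda^{\eta}$, uniformly in $|\zeta|$, by a conic/divisor argument. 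Crucially, the near-diagonal shell $n_\lambda$ is handled by the parity split $\mathcal N_0^{\zeta}$ versus $\mathcal N_1^{\zeta}$ of Lemma~\ref{lem:LemmaN_0}: if the power of $4$ in $n_\lambda$ exceeds that in $|\zeta|^2$, no two lattice points on the $n_\lambda$-sphere differ by $\zeta$, and in the opposite case $|\zeta|\ge 2^{a_\zeta}\ge 2^{a}$, which together with $|\zeta|<T$ and Siegel's bound gives $r_3(n_\lambda)\gg_\eta \lambda^{1/2-\eta}/T$. That is the precise mechanism allowing the bound to hold for \emph{all} $\lambda$.

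You instead work with the untruncated $|G_\lambda|^2$, a smooth bump $\chi_\pm$, and a cap estimate $\#\{\xi:|\xi|^2=m,\ |\xi-\xi_0|\le\rho\}\ll_\epsilon\rho^{3/2+\epsilon}$. This is not only a different route, it is insufficient. Consider the contribution of the nearest shell $n_*$: it equals
\[
\frac{\delta_0^{-2}}{\|G_\lambda\|_2^2}\sum_{\substack{\xi\neq\eta\\ |\xi|^2=|\eta|^2=n_*}}\widehat{\chi}_\pm(\xi-\eta)e^{-i\langle x_0,\xi-\eta\rangle},
\]
and after a dyadic decomposition in $|\xi-\eta|$, the cap bound $\rho^{3/2+\epsilon}$ combined with $|\widehat{\chi}_\pm(k)|\ll r^3(1+r|k|)^{-2}$ gives at best $\ll\delta_0^{-2}r_3(n_*)\,r^{3/2-\epsilon}$; relative to $\|G_\lambda\|_2^2\,r^3\ge\delta_0^{-2}r_3(n_*)r^3$ this is $\ll r^{-3/2-\epsilon}$, which \emph{diverges} as $r\to 0$ throughout the claimed range $r>\lambda^{-1/12+\epsilon}$. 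Your asserted total $\sum_{k\ne 0}|c_\lambda(k)||\widehat{\chi}_\pm(k)|\ll r\lambda^{1/3+\epsilon}$ is therefore not derivable from the ingredients you list, and the implication ``the exponent $1/12$ is precisely what the $\rho^{3/2+\epsilon}$ cap estimate yields'' does not hold. The essential point that is missing is arithmetic, not analytic: you must either show that the near shell $n_*$ has no close pairs compatible with the relevant $\zeta$ (which is exactly what the $\mathcal N_0^{\zeta}/\mathcal N_1^{\zeta}$ dichotomy accomplishes), or couple the pair count to $r_3(n_\lambda)$ through Siegel so that the $\lambda^{o(1)}$ losses are dominated. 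A cap estimate, which is agnostic to the $2$-adic structure of $n_\lambda$ and $|\zeta|^2$, cannot do this. Your identification of $\|G_\lambda\|_2^2\asymp\lambda^{1/2+o(1)}$ via Siegel and bounded gaps in $\mathcal N_3$ is fine and matches Lemma~\ref{lem:LowerBndG3d}, and the observation about the far-shell tail $c_\lambda(k)\sim c/|k|$ being harmless is correct in spirit, but the heart of the argument --- the off-diagonal, near-shell estimate --- needs to be redone along the lines of Lemmas~\ref{lem:SphericalStripsLemma}, \ref{lem:LemmaN_0}, \ref{lem:MianBoundN0}, and \ref{lem:N1Lemma}.
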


If one is willing to exclude a thin set of energy levels, the exponent
of the radii in the three-dimensional case can be improved from $-1/12+\epsilon$
to $-1/6+\epsilon$. 

\begin{thm}
\label{thm:Thm3dAE}Let $d=3,$ and fix $\phi\in\left(-\pi,\pi\right).$
There is a subset $\Lambda'\subseteq\Lambda_{\phi}$ of density one,
such that for every $\epsilon>0$,
\[
\sup_{\substack{r>\lambda^{-1/6+\epsilon}\\
x\in\mathbb{T}^{3}
}
}\left|\frac{\int_{B_{r}\left(x\right)}\left|g_{\lambda}\left(y\right)\right|^{2}dy}{\frac{4}{3}\pi r^{3}}-1\right|\to0
\]
as $\lambda\to\infty$ along $\Lambda'$.
\end{thm}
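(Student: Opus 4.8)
\noindent The plan is to follow the Fourier-analytic strategy of Lester and Rudnick \cite{LesterRudnick}, exploiting the fact that a new eigenfunction is a superposition of Laplace eigenfunctions whose frequency weights concentrate sharply near the shell $|\xi|^2\approx\lambda$. Take $x_0=0$, so that $g_\lambda(y)=\|G_\lambda\|_2^{-1}\sum_{\xi\in\mathbb Z^3}(|\xi|^2-\lambda)^{-1}e^{i\xi\cdot y}$ with $\|G_\lambda\|_2^2=\sum_\xi(|\xi|^2-\lambda)^{-2}$, and the Fourier coefficients $\hat g_\lambda(\xi)=\|G_\lambda\|_2^{-1}(|\xi|^2-\lambda)^{-1}$ are real, whence
\[
\widehat{|g_\lambda|^2}(\eta)=\sum_{\zeta\in\mathbb Z^3}\hat g_\lambda(\zeta+\eta)\,\hat g_\lambda(\zeta)=\frac{1}{\|G_\lambda\|_2^2}\sum_{\zeta\in\mathbb Z^3}\frac{1}{(|\zeta+\eta|^2-\lambda)(|\zeta|^2-\lambda)},
\]
with $\widehat{|g_\lambda|^2}(0)=1$. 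Expanding $\mathbf{1}_{B_r(x)}$ in a Fourier series --- first replacing it by smooth minorants and majorants regularised at a scale $\sigma$ slightly below $r$, which changes the zeroth coefficient by only $O(r^2\sigma)=o(r^3)$ and renders the coefficients with $|\eta|\gg\sigma^{-1}\log\lambda$ negligible --- and using $|\widehat{\mathbf{1}_{B_r(x)}}(\eta)|\ll\min(r^3,r|\eta|^{-2})$, I would reduce the theorem to the estimate
\[
\sum_{0<|\eta|\ll\lambda^{1/6}}\bigl|\widehat{|g_\lambda|^2}(\eta)\bigr|\,\min\bigl(r^3,\,r|\eta|^{-2}\bigr)=o(r^3)
\]
uniformly in $r>\lambda^{-1/6+\epsilon}$ as $\lambda\to\infty$ along the density-one set $\Lambda'$ described next.

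\noindent I would take $\Lambda'$ to consist of those new eigenvalues $\lambda\in(n,n')$, with $n<n'$ consecutive Laplace eigenvalues, for which every integer in $[\lambda-\lambda^{\epsilon},\lambda+\lambda^{\epsilon}]$ which is a sum of three squares is so in at least $\lambda^{1/2-\epsilon}$ ways; this has density one by Siegel's lower bound for $r_3$ on squarefree integers, the sparsity of the integers divisible by a large perfect square, and the one-to-one interlacing of the new and Laplace eigenvalues. The spectral equation defining $\Lambda_\phi$ pins $\lambda$ to within distance $\delta:=n'-\lambda\asymp r_3(n')/\sqrt\lambda$ of $n'$; hence $\delta\gg\lambda^{-\epsilon}$ and $\|G_\lambda\|_2^2\geq r_3(n')/\delta^2\asymp\lambda/r_3(n')\gg\lambda^{1/2-o(1)}$ along $\Lambda'$. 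To bound the correlation sum I would split the $\zeta$-sum according to whether $||\zeta|^2-\lambda|\leq 1$, $1<||\zeta|^2-\lambda|\leq\lambda^{\epsilon}$, or $||\zeta|^2-\lambda|>\lambda^{\epsilon}$; grouping by the value of $|\zeta|^2$, each range leads to a sum of $||\zeta+\eta|^2-\lambda|^{-1}$ over a sphere $\{|\zeta|^2=N\}$, and since $|\zeta+\eta|^2$ is itself a sum of three squares (so at distance $\geq\delta$ from $\lambda$) and $\#\{\zeta\in\mathbb Z^3:|\zeta|^2=N,\ \zeta\cdot\eta=m\}\ll N^{o(1)}$ uniformly in $m$ --- the divisor bound for lattice points on a circle inside the two-dimensional lattice $\{\zeta\cdot\eta=m\}$ --- that inner sum is $\ll\lambda^{o(1)}(\delta^{-1}+\log\lambda)$. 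Performing the outer summation of $||\zeta|^2-\lambda|^{-1}$ and dividing by $\|G_\lambda\|_2^2\geq r_3(n')/\delta^2$ (using $\delta\asymp r_3(n')/\sqrt\lambda$ and $r_3(n')\geq\lambda^{1/2-\epsilon}$) then gives $|\widehat{|g_\lambda|^2}(\eta)|\ll\lambda^{-1/2+\epsilon+o(1)}$ uniformly for $0<|\eta|\ll\lambda^{1/6}$ along $\Lambda'$. Since $\sum_{0<|\eta|\ll\lambda^{1/6}}\min(r^3,r|\eta|^{-2})\ll 1+r\lambda^{1/6}\ll\lambda^{\epsilon}$ for $r>\lambda^{-1/6+\epsilon}$, the displayed sum is $\ll\lambda^{-1/2+2\epsilon+o(1)}$, which is $o(r^3)$ uniformly in such $r$ because then $r^3>\lambda^{-1/2+3\epsilon}$; a final adjustment of $\epsilon$ finishes the proof.

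\noindent The hard part will be the uniform-in-$\eta$ bound $|\widehat{|g_\lambda|^2}(\eta)|\ll\lambda^{-1/2+\epsilon+o(1)}$: for every $\eta$ with $0<|\eta|\ll\lambda^{1/6}$ one has to estimate the entire correlation sum $\sum_\zeta[(|\zeta+\eta|^2-\lambda)(|\zeta|^2-\lambda)]^{-1}$, controlling simultaneously the $O(1)$ near-diagonal shells that carry almost all the mass of $g_\lambda$, the lattice points on the circles these shells meet under translation by $\eta$, and the dependence of the gap $n'-\lambda$ on $r_3(n')$, before comparing against the lower bound for $\|G_\lambda\|_2^2$ --- all uniformly in $x\in\mathbb T^3$ and down to $r=\lambda^{-1/6+\epsilon}$. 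It is exactly to keep the normalising constant $\|G_\lambda\|_2^2$ and the lattice-point counts under control uniformly that the exceptional set $\Lambda_\phi\setminus\Lambda'$ is removed; treating all of $\Lambda_\phi$ instead, as in Theorem~\ref{thm:MainThm3d}, costs a factor of two in the exponent of $r$.
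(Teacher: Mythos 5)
Your strategy is genuinely different from the paper's in its technical scaffolding, although the crucial idea --- remove a density-zero set of $\lambda$ so that $r_{3}(n_{\lambda})\gg\lambda^{1/2-o(1)}$, which kills the extra power of $T$ that would otherwise force the weaker exponent $-1/12$ --- is the same. Where the paper works with a truncated Green's function $g_{\lambda,L}$ (which makes every appearing sum finite from the start) and uses Harman's Beurling--Selberg majorants $b_{x}^{\pm}$, you work with $g_{\lambda}$ directly, mollify the ball indicator at a scale $\sigma\lesssim r$, and aim for a pointwise, $\eta$-uniform bound $|\widehat{|g_{\lambda}|^{2}}(\eta)|\ll\lambda^{-1/2+\epsilon+o(1)}$. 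The lattice-point ingredient (divisor bound on circles obtained by slicing $\{|\zeta|^{2}=N\}$ by the planes $\langle\zeta,\eta\rangle=m$) is essentially the paper's Lemma \ref{lem:SphericalStripsLemma}, and the density-one set you remove plays the same role as the set $\Lambda'=\{\lambda:n_{\lambda}\in\widetilde{\mathcal{N}}_{1}\}$ in the paper. So the skeleton is sound, but two points need attention.

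First, you invoke the spectral equation to claim $n'-\lambda\asymp r_{3}(n')/\sqrt{\lambda}$. The paper deliberately avoids \emph{any} quantitative input about the position of $\lambda$ within the interlacing interval (only the bare interlacing is used), which is what lets the result apply verbatim to $\lambda$-dependent $\phi$; your version would lose that robustness, and the asymptotic itself needs a nontrivial proof (it comes from showing the off-resonant part of the renormalized sum is $\sim-c\sqrt{\lambda}$, which is an estimate of Yesha-type that you should either cite or prove). Fortunately, a closer look at your own computation shows that the specific value of $\delta=n'-\lambda$ cancels: with $\|G_{\lambda}\|_{2}^{2}\geq r_{3}(n_{\lambda})/\operatorname{dist}(\lambda,\mathbb{Z})^{2}$ on one side and the $(\delta^{-1}+\log\lambda)^{2}$ factor on the other, the $\delta$-dependence drops out and one only needs $r_{3}(n_{\lambda})\gg\lambda^{1/2-\epsilon}$ plus the all-$\lambda$ lower bound $\|G_{\lambda}\|_{2}^{2}\gg\lambda^{1/2-\eta}$ from the paper's Lemma \ref{lem:LowerBndG3d}. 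I would restructure the argument to make this explicit and strike the spectral-equation input.

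Second, your stated estimate ``the inner sum is $\ll\lambda^{o(1)}(\delta^{-1}+\log\lambda)$'' is a useful bound only when $|\,|\zeta|^{2}-\lambda|$ is small; it does not decay with the outer shell $N=|\zeta|^{2}$, so summing it against $|N-\lambda|^{-1}$ over all $N\in\mathcal{N}_{3}$ diverges. You need to quantify separately the contribution from shells with $|N-\lambda|\gtrsim\lambda^{\epsilon}$, either by noting that for such $N$ both factors are bounded away from zero (so each term is $\ll|N-\lambda|^{-2}$ after Cauchy--Schwarz) or by truncating at $|N-\lambda|<L$ as the paper does and controlling $\|g_{\lambda}-g_{\lambda,L}\|_{2}$ via Lemma \ref{lem:LowerBndG3d}. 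The latter is essentially unavoidable if you want a clean presentation, and it erases most of the difference between your route and the paper's. Two minor points: (i) your intermediate bound $\sum_{0<|\eta|\ll\lambda^{1/6}}\min(r^{3},r|\eta|^{-2})\ll\lambda^{\epsilon}$ is false for $r$ close to $1$ (the sum is then $\asymp\lambda^{1/6}$); the correct bound $\ll1+r\lambda^{1/6}$ still finishes the proof, but the stated inequality should be fixed; (ii) your $\Lambda'$ depends on $\epsilon$, so you need the same diagonal argument the paper uses at the end of Proposition \ref{prop:MainProp} to extract a single $\epsilon$-free density-one set.
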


It would be interesting to determine whether equidistribution holds down to the Planck scale in dimension three, or in a more restricted range imposed by the arithmetic setting (compare with the case $ d>2 $ in \cite{LesterRudnick}).

In proving theorems \ref{thm:MainThm2d}, \ref{thm:MainThm3d}, and
\ref{thm:Thm3dAE}, we exploit the interlacing property of the new
eigenvalues in an essential way; however, we do not use any further
information regarding the exact position of the new eigenvalues. Thus,
all of our results can be easily formulated for $\lambda$-dependent
sequence of $\phi$. This is of significance, since the in the physics
literature one often considers self-adjoint extensions with $\phi$
varying with $\lambda$ (``strong coupling'', see \cite{Shigehara,Ueberschaer}).

\subsection*{Acknowledgements}

The author would like to express his gratitude to Z. Rudnick and I.
Wigman for useful discussions and comments. The research leading to
these results was partially supported by the European Research Council
under the European Union\textquoteright s Seventh Framework Programme
(FP7/2007-2013), ERC grant agreement n° 335141.

\section{Point scatterers on the torus}

\subsection{The spectrum of the toral Laplacian}

Let $\mathbb{T}^{d}=\mathbb{R}^{d}/2\pi\mathbb{Z}^{d}$ ($d=2,3$)
be the standard flat $d$-dimensional torus, and let $\Delta$ be
the associated Laplace-Beltrami operator. The spectrum of $-\Delta$
is the collection of all numbers that can be expressed as a sum of
$d$ squares, denoted by 
\[
\mathcal{N}_{d}=\left\{ 0=n_{1}<n_{2}<\dots\right\} .
\]
Recall that by Landau's Theorem,
\begin{equation}
\#\left\{ n\in\mathcal{N}_{2}:\,n\le X\right\} \sim K\frac{X}{\sqrt{\log X}},\label{eq:LandauTheorem}
\end{equation}
where $K=2^{-1/2}\prod_{p\equiv3\,\left(4\right)}\left(1-p^{-2}\right)^{-1/2}=0.764\dots$
is the Landau-Ramanujan constant. In three-dimensions, $n\in\mathcal{N}_{3}$
if and only if in the representation $n=4^{a}n_{1}$ with $4\nmid n_{1},$
the number $n_{1}$ satisfies $n_{1}\not\equiv7\,\left(8\right).$
Moreover, as $X\to\infty$,
\[
\#\left\{ n\in\mathcal{N}_{3}:\,n\le X\right\} \sim\frac{5}{6}X.
\]
Let $r_{d}\left(n\right)$ be the number of representations of $n\in\mathcal{N}_{d}$
as a sum of $d$ squares. For $d=2$, it is well-known that 
\begin{equation}
r_{2}\left(n\right)=O_{\eta}\left(n^{\eta}\right)\label{eq:r2nBound}
\end{equation}
 (in fact, (\ref{eq:LandauTheorem}) implies that on average, $r_{2}\left(n\right)$
is of order of magnitude $\sqrt{\log n}$). For $d=3$, Siegel's Theorem
\cite{Siegel} implies that for $n=4^{a}n_{1}$ with $4\nmid n_{1},$
we have 
\begin{equation}
n_{1}^{\frac{1}{2}-\eta}\ll_{\eta}r_{3}\left(n\right)=r_{3}\left(n_{1}\right)\ll_{\eta}n_{1}^{\frac{1}{2}+\eta}.\label{eq:SiegelBound}
\end{equation}

\subsection{Toral point scatterers}

Let $x_{0}\in\mathbb{T}^{d}.$ A point scatterer, formally defined
in (\ref{eq:PointScatterer}), can be rigorously realized via the
following procedure: denote by $D_{0}=C_{c}^{\infty}\left(\mathbb{T}^{d}\setminus\left\{ x_{0}\right\} \right)$
the space of smooth functions supported away from the point $x_{0}$,
and denote by $-\Delta_{0}=-\Delta_{\restriction D_{0}}$ the Laplacian
restricted to this domain. This is a symmetric operator with deficiency
indices $\left(1,1\right)$, hence there is a one parameter family
of self-adjoint extensions, which we denote by $-\Delta_{x_{0},\phi}$,
$\phi\in(-\pi,\pi]$. For $\phi\ne\pi$ (the extension $\phi=\pi$
retrieves the standard Laplacian), the spectrum of $-\Delta_{x_{0},\phi}$
consists of two types of eigenvalues:
\begin{itemize}
\item The ``old'', nonzero Laplace eigenvalues $0\ne n\in\mathcal{N}_{d}$,
which correspond to the Laplace eigenfunctions vanishing at $x_{0}$.
\item A set $\Lambda=\Lambda_{\phi}$ of ``new'' or ``perturbed'' eigenvalues,
which are the solutions to the equation
\[
\sum_{\xi\in\mathbb{Z}^{d}}\left(\frac{1}{\left|\xi\right|^{2}-\lambda}-\frac{\left|\xi\right|^{2}}{\left|\xi\right|^{4}+1}\right)=c_{0}\tan\frac{\phi}{2}
\]
where 
\[
c_{0}=\sum_{\xi\in\mathbb{Z}^{d}}\frac{1}{\left|\xi\right|^{4}+1}.
\]
\end{itemize}
Thus, the elements of $\Lambda$ interlace with the elements of $\mathcal{N}_{d}$,
i.e., between every two elements $n_{k},n_{k+1}\in\mathcal{N}_{d}$
there is a unique element of $\Lambda$, so we label $\Lambda=\left\{ \lambda_{0},\lambda_{1},\lambda_{2},\dots\right\} $,
where
\begin{equation}
\lambda_{0}<n_{1}<\lambda_{1}<n_{2}<\lambda_{2}<n_{3}<\lambda_{3}<\dots.\label{eq:Interlacing}
\end{equation}
As mentioned above, all of our results will still hold for an arbitrary
set $\Lambda$ whose elements interlace with the elements of $\mathcal{N}_{d}$.
The corresponding eigenfunctions, which are our main objects of study,
are multiples of Green's functions $\left(\Delta+\lambda\right)^{-1}\delta_{x_{0}}$,
admitting the $L^{2}$ expansion
\[
G_{\lambda}\left(x;x_{0}\right):=-\left(2\pi\right)^{d/2}\left(\Delta+\lambda\right)^{-1}\delta_{x_{0}}=\sum_{\xi\in\mathbb{Z}^{d}}\frac{e^{i\left\langle x-x_{0},\xi\right\rangle }}{\left|\xi\right|^{2}-\lambda}.
\]
Let 
\[
g_{\lambda}\left(x\right)=g_{\lambda}\left(x;x_{0}\right)=\frac{G_{\lambda}\left(x;x_{0}\right)}{\left\Vert G_{\lambda}\right\Vert _{2}}
\]
be the $L^{2}$-normalized eigenfunctions.

We will also work with a truncated version of the Green's functions
$G_{\lambda}$. Let $L=\lambda^{\delta}$, $0<\delta<1$, and define
the truncated Green's function
\[
G_{\lambda,L}\left(x;x_{0}\right)=\sum_{\substack{\xi\in\mathbb{Z}^{d}\\
\left|\left|\xi\right|^{2}-\lambda\right|<L
}
}\frac{e^{i\left\langle x-x_{0},\xi\right\rangle }}{\left|\xi\right|^{2}-\lambda}
\]
and the $L^{2}$-normalized truncated Green's function
\begin{equation}
g_{\lambda,L}\left(x\right)=g_{\lambda,L}\left(x;x_{0}\right)=\frac{G_{\lambda,L}\left(x;x_{0}\right)}{\left\Vert G_{\lambda,L}\right\Vert _{2}}.\label{eq:TruncatedGreenFunction}
\end{equation}

\section{Proof of Theorem \ref{thm:MainThm2d} \texorpdfstring{($d=2$)}{(d=2)}}

\subsection{Preliminary lemmas}

We now take $d=2$, and recall the following lemma, proved in \cite{RudnickUeberschaer}
for two-dimensional flat tori with more general aspect ratios, which
shows that along a density one subsequence, the distance between consecutive
elements in $\mathcal{N}_{2}$ is small, so that almost all $\lambda\in\Lambda$
are close to an element of $\mathcal{N}_{2}$, implying a lower bound
on the $L^{2}$-norm of $G_{\lambda}$. Recall the labeling (\ref{eq:Interlacing}),
so that given $\lambda\in\Lambda$, we can denote $\lambda=\lambda_{k}$,
where
\[
\dots<\lambda_{k-1}<n_{k}<\lambda_{k}<n_{k+1}<\lambda_{k+1}<\dots.
\]

\begin{lem}[{\cite[Lemma 2.1, Lemma 4.1]{RudnickUeberschaer}}]
\label{lem:LowerBound}~
\begin{enumerate}
\item Along a density one subsequence $\left\{ n_{k_{j}}\right\} \subseteq$$\mathcal{N}_{2}$,
the spacings of $\mathcal{N}_{2}$ satisfy 
\[
n_{k_{j}+1}-n_{k_{j}}\ll_{\eta}n_{k_{j}}^{\eta}
\] for every $ \eta>0 $.

\item There is a subset $\Lambda'$ of density one in $\Lambda$, such
that for all $\lambda_{k_{j}}\in\Lambda'$ and all $ \eta>0 $, we have 
\[
n_{k_{j}+1}-n_{k_{j}}\ll_{\eta}\lambda_{k_{j}}^{\eta}.
\]

\item For all $\lambda\in\Lambda'$ and all $ \eta>0 $, we have $\left\Vert G_{\lambda}\right\Vert _{2}\gg_{\eta}\lambda^{-\eta}$.

\end{enumerate}

\end{lem}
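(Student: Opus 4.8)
The plan is to prove the three parts in order, with part~(1) carrying the arithmetic content and parts~(2)--(3) following by soft manipulations of the interlacing~\eqref{eq:Interlacing}. For part~(1), the mechanism is that large gaps in $\mathcal{N}_2$ are scarce because collectively they would occupy too much of $[0,X]$. Fix $\eta\in(0,1)$ and a large $X$. The indices $k$ with $n_k\le\sqrt X$ number at most $\sqrt X$ and may be ignored; for the remaining $k$ with $n_k\le X$ one has $n_k^{\eta}>X^{\eta/2}$, so if moreover $n_{k+1}-n_k>n_k^{\eta}$ then the intervals $(n_k,n_{k+1})$ are pairwise disjoint and all contained in $(0,2X)$ (the next element of $\mathcal{N}_2$ past any $Y\ge1$ is at most $Y+O(\sqrt Y)$, since $m^2,m^2+1\in\mathcal{N}_2$). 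Hence there are at most $2X\cdot X^{-\eta/2}$ such $k$, so $\#\{k:\ n_k\le X,\ n_{k+1}-n_k>n_k^{\eta}\}\ll_{\eta}X^{1-\eta/2}$, which by Landau's Theorem~\eqref{eq:LandauTheorem} (giving $\#\{n\in\mathcal{N}_2:\ n\le X\}\gg X/\sqrt{\log X}$) is a proportion $\ll_{\eta}X^{-\eta/2}\sqrt{\log X}\to0$. Thus for each fixed $\eta>0$ the set $\{k:\ n_{k+1}-n_k>n_k^{\eta}\}$ has density zero in $\mathcal{N}_2$; running this with $\eta=1/m$, $m\in\mathbb{N}$, and extracting a single subsequence by the usual diagonal procedure yields a density-one subsequence $\{n_{k_j}\}\subseteq\mathcal{N}_2$ along which $n_{k_j+1}-n_{k_j}\ll_{\eta}n_{k_j}^{\eta}$ for every $\eta>0$.

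Part~(2) is a transfer through the interlacing: since $\lambda_{k_j}\in(n_{k_j},n_{k_j+1})$ with $n_{k_j+1}-n_{k_j}=o(n_{k_j})$, we have $\lambda_{k_j}\sim n_{k_j}$; a routine comparison of counting functions (again via \eqref{eq:LandauTheorem}) then shows $\Lambda':=\{\lambda_{k_j}\}$ has density one in $\Lambda$, and clearly $n_{k_j+1}-n_{k_j}\ll_{\eta}n_{k_j}^{\eta}\ll_{\eta}\lambda_{k_j}^{\eta}$. For part~(3), by Parseval $\|G_\lambda\|_2^2$ equals $\sum_{\xi\in\mathbb{Z}^2}(|\xi|^2-\lambda)^{-2}$ up to a harmless normalizing constant, a sum of positive terms; keeping only the $r_2(n_k)\ge1$ terms with $|\xi|^2=n_k$ gives $\|G_\lambda\|_2^2\gg(n_k-\lambda)^{-2}$. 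For $\lambda=\lambda_k\in\Lambda'$ one has $0<\lambda-n_k<n_{k+1}-n_k\ll_{\eta}n_k^{\eta}\ll_{\eta}\lambda^{\eta}$, whence $\|G_\lambda\|_2^2\gg_{\eta}\lambda^{-2\eta}$, i.e.\ $\|G_\lambda\|_2\gg_{\eta}\lambda^{-\eta}$; note only $r_2(n_k)\ge1$ is used, not the divisor bound~\eqref{eq:r2nBound}.

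The main obstacle is part~(1); parts~(2) and~(3) are bookkeeping once it is in hand. Even part~(1) is not delicate: the only quantitative input is that $\mathcal{N}_2$ is sufficiently dense, namely $\#\{n\in\mathcal{N}_2:\ n\le X\}\gg X^{1-o(1)}$, against which the total length of the large gaps is weighed; no finer arithmetic of sums of two squares enters, and the same scheme would go through for any comparably dense set in place of $\mathcal{N}_2$.
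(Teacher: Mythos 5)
The paper does not prove this lemma; it is cited verbatim from Rudnick--Uebersch\"ar, so there is no in-house proof to compare against. Your reconstruction is correct and is, to the best of my knowledge, the same mechanism as in the cited source: part~(1) is a Chebyshev-style pigeonhole argument in which the total length of disjoint long gaps in $(0,2X)$ is weighed against the Landau count $\asymp X/\sqrt{\log X}$ of elements of $\mathcal{N}_2$, then a diagonalization in $\eta=1/m$ (using that the bad sets $S_m=\{k:n_{k+1}-n_k>n_k^{1/m}\}$ are nested, so one can truncate at a sequence of thresholds $Y_m$ and keep each scale's exceptional set below density $1/m$) produces a single density-one sequence; part~(2) transfers through the interlacing $n_k<\lambda_k<n_{k+1}$ together with $\lambda_{k_j}\asymp n_{k_j}$, noting that the index bijection $k\leftrightarrow\lambda_k$ carries density-one index sets to density-one subsets of $\Lambda$ because $\#\{\lambda\in\Lambda:\lambda\le X\}$ and $\#\{n\in\mathcal{N}_2:n\le X\}$ differ by $O(1)$; and part~(3) discards all terms in $\|G_\lambda\|_2^2=\sum_{\xi}(|\xi|^2-\lambda)^{-2}$ except $|\xi|^2=n_k$ and uses $0<\lambda_k-n_k<n_{k+1}-n_k\ll_\eta\lambda_k^\eta$. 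Your side remark that only $r_2(n_k)\ge1$ (not the divisor bound) is needed for~(3) is accurate and clarifying. Two small points worth stating explicitly if this were written out in full: (i) the containment of the gap intervals in $(0,2X)$ uses that the next element of $\mathcal{N}_2$ after $Y$ lies within $O(\sqrt Y)$, which you justify via $m^2,m^2+1\in\mathcal{N}_2$; (ii) in part~(3) the exponent that comes out is $\lambda^{-2\eta}$ for $\|G_\lambda\|_2^2$, so one should run the argument with $\eta/2$ to land exactly on $\|G_\lambda\|_2\gg_\eta\lambda^{-\eta}$, a cosmetic relabelling.
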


\begin{lem}
\label{lem:TruncationLemma}There exists a density one subset $\Lambda'\subseteq\Lambda$,
such that for every $\eta>0$, $0<\delta<1$, we have
\begin{equation}
\left\Vert g_{\lambda}-g_{\lambda,L}\right\Vert _{2}^{2}=O_{\eta}\left(\frac{\lambda^{\eta}}{L}\right)\label{eq:TruncationLimit}
\end{equation}
as $\lambda\to\infty$ along $\lambda\in\Lambda'$, where $L=\lambda^{\delta}.$
\end{lem}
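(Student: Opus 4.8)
The plan is to compare $g_\lambda$ and $g_{\lambda,L}$ by controlling the $L^2$-mass of the tail of the Green's function, i.e. the contribution of frequencies $\xi$ with $\bigl||\xi|^2-\lambda\bigr|\geq L$. Write $G_\lambda = G_{\lambda,L} + R_{\lambda,L}$ where $R_{\lambda,L}(x;x_0) = \sum_{||\xi|^2-\lambda|\geq L} \frac{e^{i\langle x-x_0,\xi\rangle}}{|\xi|^2-\lambda}$. By Parseval, $\|R_{\lambda,L}\|_2^2 = \sum_{||\xi|^2-\lambda|\geq L} \frac{1}{(|\xi|^2-\lambda)^2}$, and I would estimate this by a dyadic decomposition: split the range $||\xi|^2-\lambda|\geq L$ into shells $2^j L \leq ||\xi|^2-\lambda| < 2^{j+1}L$ for $j\geq 0$. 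The number of lattice points $\xi\in\mathbb{Z}^2$ with $|\xi|^2$ in an interval of length $2^{j+1}L$ around $\lambda$ is $O_\eta(\lambda^\eta\cdot 2^j L)$ for $2^j L \leq \lambda$ (using the divisor-type bound (\ref{eq:r2nBound}) summed over the $O(2^jL)$ integers in the window), and is $O_\eta(\lambda^{1+\eta})$ once $2^jL$ exceeds $\lambda$ since then the window covers $[0,\lambda]$ plus a comparable overshoot. Hence each shell contributes $O_\eta\bigl(\lambda^\eta 2^j L \cdot (2^j L)^{-2}\bigr) = O_\eta\bigl(\lambda^\eta (2^j L)^{-1}\bigr)$, and summing the geometric series in $j$ gives $\|R_{\lambda,L}\|_2^2 = O_\eta(\lambda^\eta/L)$ (the large-$j$ tail where the count saturates at $\lambda^{1+\eta}$ contributes even less, since there $(2^jL)^{-2}\lambda^{1+\eta}$ is summable and bounded by $\lambda^{-1+\eta}$, which is absorbed).

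Next I would pass from $G_\lambda$ to the normalized $g_\lambda$. Using part (3) of Lemma \ref{lem:LowerBound}, restrict to the density one subset $\Lambda'$ on which $\|G_\lambda\|_2 \gg_\eta \lambda^{-\eta}$. From $\|R_{\lambda,L}\|_2^2 = O_\eta(\lambda^\eta/L)$ and the lower bound on $\|G_\lambda\|_2$ one gets $\bigl|\|G_{\lambda,L}\|_2 - \|G_\lambda\|_2\bigr| \leq \|R_{\lambda,L}\|_2$, so $\|G_{\lambda,L}\|_2 = \|G_\lambda\|_2\bigl(1 + O_\eta(\lambda^\eta/L)^{1/2}\bigr)$ provided $L\to\infty$ faster than any power of $\lambda^\eta$ — which holds since $L=\lambda^\delta$ with $\delta>0$ fixed and $\eta$ can be taken $<\delta$. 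Then a routine triangle-inequality computation
\[
\|g_\lambda - g_{\lambda,L}\|_2 \leq \frac{\|R_{\lambda,L}\|_2}{\|G_\lambda\|_2} + \Bigl|\frac{1}{\|G_\lambda\|_2} - \frac{1}{\|G_{\lambda,L}\|_2}\Bigr|\,\|G_{\lambda,L}\|_2
\]
shows both terms are $O_\eta\bigl((\lambda^{\eta}/L)^{1/2}\cdot\lambda^{\eta}\bigr)$, i.e. after relabeling $\eta$, $\|g_\lambda - g_{\lambda,L}\|_2^2 = O_\eta(\lambda^\eta/L)$, as claimed.

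The main obstacle is the lattice point count in the dyadic shells: one needs that the number of $\xi\in\mathbb{Z}^2$ with $|\xi|^2 \in [\lambda - T, \lambda+T]$ is $O_\eta(\lambda^\eta(1+T))$ uniformly, which follows from $(\ref{eq:r2nBound})$ summed over integers in the window but requires care when $T$ is small (the window may contain no integer at all, giving an empty sum, which is fine) and when $T$ is large (the count is governed by the area $\sim\lambda$ rather than $T$, handled separately as above). Everything else is bookkeeping: the geometric summation in $j$, and the transfer from $G_\lambda$ to $g_\lambda$ via part (3) of Lemma \ref{lem:LowerBound}. Note the density one subset $\Lambda'$ in the statement is exactly the one furnished by Lemma \ref{lem:LowerBound}.
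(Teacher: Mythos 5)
Your proof is correct and follows essentially the same route as the paper: bound $\left\Vert G_{\lambda}-G_{\lambda,L}\right\Vert _{2}^{2}\ll_{\eta}\lambda^{\eta}/L$ via Parseval and $r_{2}\left(n\right)\ll_{\eta}n^{\eta}$, then pass to the normalized $g_{\lambda}$ using the lower bound $\left\Vert G_{\lambda}\right\Vert _{2}\gg_{\eta}\lambda^{-\eta}$ on the density-one set from Lemma \ref{lem:LowerBound}. The only cosmetic difference is that you organize the tail sum by dyadic shells with lattice-point counting, whereas the paper simply compares $\sum_{\left|n-\lambda\right|\ge L}\left(n-\lambda\right)^{-2}$ with the integral $\int_{\left|x-\lambda\right|\ge L/2}\left(x-\lambda\right)^{-2}\,dx\ll1/L$.
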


\begin{proof}
We have
\begin{align*}
\left\Vert g_{\lambda}-g_{\lambda,L}\right\Vert _{2} & =\frac{1}{\left\Vert G_{\lambda,L}\right\Vert _{2}\left\Vert G_{\lambda}\right\Vert _{2}}\left\Vert \left\Vert G_{\lambda,L}\right\Vert _{2}G_{\lambda}-\left\Vert G_{\lambda}\right\Vert _{2}G_{\lambda,L}\right\Vert _{2}\\
 & \le\frac{\left\Vert G_{\lambda}-G_{\lambda,L}\right\Vert _{2}}{\left\Vert G_{\lambda}\right\Vert _{2}}+\frac{1}{\left\Vert G_{\lambda}\right\Vert _{2}}\left|\left\Vert G_{\lambda,L}\right\Vert _{2}-\left\Vert G_{\lambda}\right\Vert _{2}\right|\\
 & \le2\frac{\left\Vert G_{\lambda}-G_{\lambda,L}\right\Vert _{2}}{\left\Vert G_{\lambda}\right\Vert _{2}}.
\end{align*}
Now,
\begin{align}
\left\Vert G_{\lambda}-G_{\lambda,L}\right\Vert _{2}^{2} & =\sum_{\left|\left|\xi\right|^{2}-\lambda\right|\ge L}\frac{1}{\left(\left|\xi\right|^{2}-\lambda\right)^{2}}=\sum_{\begin{subarray}{c}
n\ge0\\
\left|n-\lambda\right|\ge L
\end{subarray}}\frac{r_{2}\left(n\right)}{\left(n-\lambda\right)^{2}}\label{eq:TruncationUpperBnd}\\
 & \ll_{\eta}\lambda^{\eta}\sum_{\left|n-\lambda\right|\ge L}\frac{1}{\left(n-\lambda\right)^{2}}\ll\lambda^{\eta}\int_{\left|x-\lambda\right|\ge L/2}\frac{1}{\left(x-\lambda\right)^{2}}\,dx\ll\frac{\lambda^{\eta}}{L}.\nonumber 
\end{align}
Thus, combining (\ref{eq:TruncationUpperBnd}) with part (3) of Lemma
\ref{lem:LowerBound}, the bound (\ref{eq:TruncationLimit}) follows
for the density one subset $\Lambda'$ established in part (2) of
Lemma \ref{lem:LowerBound}.
\end{proof}
We use the truncated Green's functions (\ref{eq:TruncatedGreenFunction})
to approximate the $L^{2}$-mass of $g_{\lambda}$ restricted to the
ball $B_{x}\left(r\right)$.
\begin{lem}
\label{lem:ApproximationLemma}There exists a density one subset $\Lambda'\subseteq\Lambda$,
such that for every $\eta>0$, $0<\delta<1$, we have 
\[
\sup_{\substack{x\in\mathbb{T}^{2}\\
r\in\mathbb{R}
}
}\left|\int_{B_{x}\left(r\right)}\left|g_{\lambda}\left(y\right)\right|^{2}\,dy-\int_{B_{x}\left(r\right)}\left|g_{\lambda,L}\left(y\right)\right|^{2}\,dy\right|=O_{\eta}\left(\frac{\lambda^{\eta}}{L^{1/2}}\right)
\]
as $\lambda\to\infty$ along $\lambda\in\Lambda'$, where $L=\lambda^{\delta}.$
\end{lem}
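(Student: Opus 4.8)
The plan is to reduce the statement to the $L^{2}$-approximation already established in Lemma \ref{lem:TruncationLemma}, by way of a crude Cauchy--Schwarz estimate that does not see the ball at all. First I would write the difference of the integrands pointwise as
\[
\left|g_{\lambda}\right|^{2}-\left|g_{\lambda,L}\right|^{2}=\left(g_{\lambda}-g_{\lambda,L}\right)\overline{g_{\lambda}}+g_{\lambda,L}\,\overline{\left(g_{\lambda}-g_{\lambda,L}\right)},
\]
so that for every $x\in\mathbb{T}^{2}$ and every $r\in\mathbb{R}$ (for $r\le 0$ or $r$ large the ball is empty or all of $\mathbb{T}^{2}$, which causes no difficulty),
\[
\left|\int_{B_{x}\left(r\right)}\left(\left|g_{\lambda}\right|^{2}-\left|g_{\lambda,L}\right|^{2}\right)dy\right|\le\int_{\mathbb{T}^{2}}\left|g_{\lambda}-g_{\lambda,L}\right|\left(\left|g_{\lambda}\right|+\left|g_{\lambda,L}\right|\right)dy.
\]
The right-hand side is independent of $x$ and $r$, which is precisely what delivers the uniformity in the supremum.

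Next I would apply Cauchy--Schwarz to the last integral and use that both $g_{\lambda}$ and $g_{\lambda,L}$ are $L^{2}$-normalized, obtaining the bound $\left\Vert g_{\lambda}-g_{\lambda,L}\right\Vert _{2}\left(\left\Vert g_{\lambda}\right\Vert _{2}+\left\Vert g_{\lambda,L}\right\Vert _{2}\right)=2\left\Vert g_{\lambda}-g_{\lambda,L}\right\Vert _{2}$. Taking $\Lambda'$ to be the density one subset furnished by Lemma \ref{lem:TruncationLemma}, that lemma gives $\left\Vert g_{\lambda}-g_{\lambda,L}\right\Vert _{2}^{2}=O_{\eta}\left(\lambda^{\eta}/L\right)$ for every $\eta>0$, hence $\left\Vert g_{\lambda}-g_{\lambda,L}\right\Vert _{2}=O_{\eta}\left(\lambda^{\eta/2}/L^{1/2}\right)$; renaming $\eta$ yields the claimed $O_{\eta}\left(\lambda^{\eta}/L^{1/2}\right)$.

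Since every step is elementary, I do not expect any genuine obstacle here; the only point worth emphasizing is that one should estimate over all of $\mathbb{T}^{2}$ rather than over $B_{x}\left(r\right)$, since this is what makes the bound uniform in $x$ and $r$ and removes any need to understand the geometry of the balls. Equivalently, one may simply invoke the pointwise inequality $\bigl|\,|a|^{2}-|b|^{2}\,\bigr|\le\left|a-b\right|\left(\left|a\right|+\left|b\right|\right)$ with $a=g_{\lambda}(y)$, $b=g_{\lambda,L}(y)$, integrate, and apply Cauchy--Schwarz — the same computation as above.
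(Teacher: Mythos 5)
Your proof is correct and follows essentially the same route as the paper: both reduce matters to the bound $2\left\Vert g_{\lambda}-g_{\lambda,L}\right\Vert _{2}$ via Cauchy--Schwarz (with the uniformity in $x,r$ coming from $\|\mathbf{1}_{B_x(r)}f\|_2\le\|f\|_2$, or equivalently from integrating over all of $\mathbb{T}^2$) and then invoke Lemma \ref{lem:TruncationLemma} on its density one set. The only cosmetic difference is that the paper phrases the first step through inner products $\langle\mathbf{1}_{B_x(r)}g_\lambda,g_\lambda\rangle$ rather than through the pointwise identity $|a|^2-|b|^2=(a-b)\bar a+b\overline{(a-b)}$.
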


\begin{proof}
Let $x\in\mathbb{T}^{2}$, $r\in\mathbb{R}.$ Let $\mathbf{1}_{B_{x}\left(r\right)}$
be the indicator function of $B_{x}\left(r\right).$ Then by the Cauchy-Schwarz
inequality and by (\ref{eq:TruncationLimit}),
\begin{align*}
 & \left|\int_{B_{x}\left(r\right)}\left|g_{\lambda}\left(y\right)\right|^{2}\,dy-\int_{B_{x}\left(r\right)}\left|g_{\lambda,L}\left(y\right)\right|^{2}\,dy\right|\\
 & =\left|\left\langle \mathbf{1}_{B_{x}\left(r\right)}g_{\lambda},g_{\lambda}\right\rangle -\left\langle \mathbf{1}_{B_{x}\left(r\right)}g_{\lambda,L},g_{\lambda,L}\right\rangle \right|\\
 & \le\left|\left\langle \mathbf{1}_{B_{x}\left(r\right)}\left(g_{\lambda}-g_{\lambda,L}\right),g_{\lambda}\right\rangle \right|+\left|\left\langle \mathbf{1}_{B_{x}\left(r\right)}g_{\lambda,L},g_{\lambda}-g_{\lambda,L}\right\rangle \right|\\
 & \le\left\Vert \mathbf{1}_{B_{x}\left(r\right)}\left(g_{\lambda}-g_{\lambda,L}\right)\right\Vert _{2}+\left\Vert \mathbf{1}_{B_{x}\left(r\right)}g_{\lambda,L}\right\Vert _{2}\left\Vert g_{\lambda}-g_{\lambda,L}\right\Vert _{2}\\
 & \le2\left\Vert g_{\lambda}-g_{\lambda,L}\right\Vert _{2}\ll_{\eta}\frac{\lambda^{\eta}}{L^{1/2}},
\end{align*}
for $\lambda\in\Lambda'$, the density one subset established in part
(2) of Lemma \ref{lem:LowerBound}. The statement of the lemma follows.
\end{proof}
In light of Lemma \ref{lem:ApproximationLemma}, Theorem \ref{thm:MainThm2d}
will immediately follow from the following main proposition for the
truncated Green's functions $g_{\lambda,L}$, which will be proved
in the following subsections.
\begin{prop}
\label{prop:MainProp}There exists a density one subset $\Lambda'\subseteq\Lambda$,
such that for every $\epsilon>0,$
\[
\sup_{\substack{x\in\mathbb{T}^{2}\\
r>\lambda^{-1/2+\epsilon}
}
}\left|\frac{1}{\pi r^{2}}\int_{B_{x}\left(r\right)}\left|g_{\lambda,L}\left(y\right)\right|^{2}\,dy-1\right|\to0
\]
as $\lambda\to\infty$ along $\Lambda'$, where $L=\lambda^{\delta},$
$0<\delta<\epsilon/6$.
\end{prop}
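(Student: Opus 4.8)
The plan is to reduce the problem to a counting statement about lattice points and then analyze the resulting exponential sum. First I would expand the truncated Green's function: since $g_{\lambda,L} = G_{\lambda,L}/\|G_{\lambda,L}\|_2$ with $G_{\lambda,L}(y;x_0) = \sum_{\||\xi|^2-\lambda|<L} e^{i\langle y-x_0,\xi\rangle}/(|\xi|^2-\lambda)$, I would write
\[
\int_{B_x(r)}|G_{\lambda,L}(y;x_0)|^2\,dy = \sum_{\substack{\xi,\zeta \\ \||\xi|^2-\lambda|<L \\ \||\zeta|^2-\lambda|<L}} \frac{1}{(|\xi|^2-\lambda)(|\zeta|^2-\lambda)}\int_{B_x(r)} e^{i\langle y, \xi-\zeta\rangle}\,dy\, e^{-i\langle x_0,\xi-\zeta\rangle}.
\]
The diagonal terms $\xi=\zeta$ contribute $\pi r^2 \sum_{\||\xi|^2-\lambda|<L}(|\xi|^2-\lambda)^{-2} = \pi r^2 \|G_{\lambda,L}\|_2^2$, which is exactly the main term after dividing by $\pi r^2 \|G_{\lambda,L}\|_2^2$. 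So the task is to show the off-diagonal contribution is $o(\pi r^2 \|G_{\lambda,L}\|_2^2)$ uniformly in $x$ and $r > \lambda^{-1/2+\epsilon}$.

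Next I would estimate $\|G_{\lambda,L}\|_2^2 = \sum_{\||\xi|^2-\lambda|<L}(|\xi|^2-\lambda)^{-2}$ from below: along the density one subset $\Lambda'$ from Lemma \ref{lem:LowerBound}, there is a lattice point within distance $\lambda^\eta$ of $\lambda$, so $\|G_{\lambda,L}\|_2^2 \gg_\eta \lambda^{-\eta}$ (this is essentially part (3), and the truncation does not hurt since the nearby lattice point satisfies $\||\xi|^2-\lambda|<L$). For the off-diagonal terms, I would use the Fourier transform of the indicator of a ball: $\int_{B_x(r)} e^{i\langle y,m\rangle}\,dy = e^{i\langle x,m\rangle} \cdot \frac{2\pi r}{|m|}J_1(r|m|)$ for $m = \xi-\zeta \neq 0$, which is $O(r^2)$ always and $O(r^{1/2}|m|^{-3/2})$ for $|m| \gg 1/r$. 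Combined with $|(|\xi|^2-\lambda)(|\zeta|^2-\lambda)|^{-1} \leq \frac12((|\xi|^2-\lambda)^{-2} + (|\zeta|^2-\lambda)^{-2})$ and the symmetry of the sum, the off-diagonal contribution is bounded by
\[
\sum_{\substack{\xi \\ \||\xi|^2-\lambda|<L}} \frac{1}{(|\xi|^2-\lambda)^2} \sum_{\substack{0\neq m \in \mathbb{Z}^2 \\ |\xi+m|^2 \in (\lambda-L,\lambda+L)}} \min\!\left(r^2,\ \frac{r^{1/2}}{|m|^{3/2}}\right).
\]
So it suffices to show the inner sum over $m$ is $o(r^2)$ uniformly for $\xi$ with $|\xi|^2$ near $\lambda$; then pulling out $\pi r^2$ and the diagonal factor $\|G_{\lambda,L}\|_2^2$ finishes the proof.

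The main obstacle is bounding this inner sum over $m$. The constraint says $\xi+m$ lies in the annulus $\{|\zeta|^2 \in (\lambda-L,\lambda+L)\}$ of area $\asymp L$ (since $L \gg \sqrt\lambda \cdot (L/\sqrt\lambda)$... actually area $\asymp L$ for $L = o(\lambda)$), but more importantly $m$ ranges over differences of two lattice points both at distance $\asymp\sqrt\lambda$ from the origin with norm-squares within $L$ of each other. Crucially $|m|$ can be as large as $\asymp \sqrt{L\sqrt{\lambda}}$ or so, but the number of admissible $m$ with $|m| \leq T$ is controlled by lattice point counts in thin annular regions. I would split the sum at $|m| \leq 1/r$ and $|m| > 1/r$; for the first range use the trivial bound $r^2$ times the number of such $m$, which by a divisor-type or $r_2$-bound argument (using (\ref{eq:r2nBound})) is $O(\lambda^\eta/r^2 \cdot r^2) = O(\lambda^\eta)$ — wait, I need this to beat $r^2$ with $r$ possibly as small as $\lambda^{-1/2+\epsilon}$, so I need the count of $m$ with $0 < |m| \leq 1/r$ and $\xi+m$ in the annulus to be $o(r^2 \cdot r^2)^{-1}$... — in fact the count is at most $O_\eta(\lambda^\eta \cdot \min(L, 1/r^2))$ using that each such $\zeta=\xi+m$ contributes at most $r_2(|\zeta|^2) = O(\lambda^\eta)$ and there are at most $O(1/r^2)$ integer points in a disc of radius $1/r$, giving $r^2 \cdot O_\eta(\lambda^\eta/r^2) = O_\eta(\lambda^\eta)$, which after dividing by $\pi r^2 \|G_{\lambda,L}\|_2^2 \gg \pi r^2 \lambda^{-\eta}$ is $O_\eta(\lambda^{2\eta}/r^2) \leq O_\eta(\lambda^{2\eta+1-2\epsilon})$ — this does NOT go to zero, so the trivial bound is insufficient and one genuinely needs to exploit that $\zeta$ lies in a \emph{thin} annulus, i.e. the count of $\zeta \in \mathbb{Z}^2$ with $|\zeta|^2 \in (\lambda-L,\lambda+L)$ and $|\zeta - \xi| \leq 1/r$ is much smaller than $1/r^2$ when $L$ is small — roughly $O_\eta(\lambda^\eta(1 + L/r))$ by a lattice-points-in-a-thin-strip estimate, and since $L = \lambda^\delta$ with $\delta < \epsilon/6$ this is the term that forces the constraint $\delta < \epsilon/6$. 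For the range $|m| > 1/r$, the summand $r^{1/2}|m|^{-3/2}$ together with the annulus count summed dyadically over $|m| \asymp 2^j$ gives a convergent-type bound of order $r^{1/2} \cdot \lambda^\eta \cdot (\text{something in } L)$, again controlled by the smallness of $L$ relative to $r^{3/2}$ or similar. Getting these two lattice-counting estimates clean and uniform in $\xi$ is where essentially all the work lies; the rest is bookkeeping with the normalization lower bound from Lemma \ref{lem:LowerBound} and the Bessel function asymptotics.
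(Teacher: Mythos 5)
Your starting point (the $L^2$-expansion of $\int_{B_x(r)}|g_{\lambda,L}|^2$ with the Bessel kernel, and the reduction to a lattice-point count) matches the paper's, and you correctly sense that a thin-strip/annulus estimate is the heart of the matter. But there are two genuine gaps that would prevent the argument from closing.

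First, you do not separate the contribution of pairs $\xi,\zeta$ on the \emph{same} circle $|\xi|^2=|\zeta|^2=n$ from those on different circles. For same-circle pairs the constraint $\langle 2\xi-\zeta,\zeta\rangle=|\xi|^2-|\zeta-\xi|^2=0$ is automatically satisfied, so the ``thin strip'' gives no information, and the only handle is a gap estimate for lattice points on circles. The paper controls this term by restricting to $\lambda$ whose nearby $n$ avoid the Bourgain--Rudnick set $BR(\epsilon)$, using (\ref{eq:BourgainRudnick}); this is where one density-one exclusion enters and it is indispensable. Your proposal never invokes any such restriction, and without it the same-circle term can be as large as the main term.

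Second, and more fundamentally, your plan requires the inner sum over $0\ne m$ (with $|m|\le 1/r$) to be $o(r^2)$ \emph{uniformly in $\xi$} for each $\lambda$ in the density-one set. But a \emph{single} admissible $m$ with $|m|\le 1/r$ already contributes $\asymp r^2$ (since $J_1(r|m|)/(r|m|)\asymp 1$ there), so this pointwise requirement is equivalent to asking that for all $\zeta$ with $0<|\zeta|\le 1/r$ there is no $\xi$ with $||\xi|^2-\lambda|<L$, $||\xi-\zeta|^2-\lambda|<L$, $|\xi|^2\ne|\xi-\zeta|^2$. That cannot be enforced for all relevant $\zeta$ simultaneously on a density-one set by the crude local count you suggest; moreover for $|\zeta|>1/r$ the naive bound $\chi_\zeta(\lambda)\le 1$ combined with the Bessel decay $r^{-3/2}|\zeta|^{-3/2}$ and the range $|\zeta|\ll\sqrt\lambda$ yields $\ll r^{-3/2}\lambda^{1/4}$, which blows up for $r$ near $\lambda^{-1/2+\epsilon}$. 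The paper resolves both issues by \emph{not} attempting a pointwise bound: it introduces the indicator $\chi_\zeta(\lambda)$, shows that for each fixed $\zeta$ the set of bad $\lambda\le X$ has size $\ll X^{1/2+2\delta}/|\zeta|$ (Lemma~\ref{lem:MainBoundLambdas}, which is a global count of lattice points in the strip $B_{\zeta,\delta}$ via Lemma~\ref{lem:BUpperBound}, not a local one near a given $\xi$), then sums over $\lambda\le X$ \emph{and} $\zeta$ (Lemma~\ref{lem:MainBoundAverage}), and finally applies Markov's inequality plus a diagonal argument. The Chebyshev/Markov step over $\lambda$ is exactly what is missing from your proposal, and the claimed count $O_\eta(\lambda^\eta(1+L/r))$ is both unjustified and, even if true, insufficient (the $+1$ term already destroys the required $o(r^2)$ bound).
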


\begin{proof}[Proof of Theorem \ref{thm:MainThm2d} assuming Proposition \ref{prop:MainProp}]
 Take $\Lambda'\subseteq\Lambda$ to be the intersection of the density
one subsets whose existence has been established in Lemma \ref{lem:ApproximationLemma}
and Proposition \ref{prop:MainProp}. By Lemma \ref{lem:ApproximationLemma},
for every $x\in\mathbb{T}^{2},\,r>\lambda^{-1/2+\epsilon},$ we have
\begin{align*}
\left|\frac{1}{\pi r^{2}}\int_{B_{x}\left(r\right)}\left|g_{\lambda}\left(y\right)\right|^{2}\,dy-1\right| & \le\left|\frac{1}{\pi r^{2}}\int_{B_{x}\left(r\right)}\left|g_{\lambda,L}\left(y\right)\right|^{2}\,dy-1\right|+O_{\eta}\left(\lambda^{\eta-\delta/2}\right)
\end{align*}
as $\lambda\to\infty$ along $\Lambda'$. Hence,
\begin{align*}
\sup_{\substack{x\in\mathbb{T}^{2}\\
r>\lambda^{-1/2+\epsilon}
}
}\left|\frac{1}{\pi r^{2}}\int_{B_{x}\left(r\right)}\left|g_{\lambda}\left(y\right)\right|^{2}\,dy-1\right| & \le\sup_{\substack{x\in\mathbb{T}^{2}\\
r>\lambda^{-1/2+\epsilon}
}
}\left|\frac{1}{\pi r^{2}}\int_{B_{x}\left(r\right)}\left|g_{\lambda,L}\left(y\right)\right|^{2}\,dy-1\right|\\
 & +O_{\eta}\left(\lambda^{\eta-\delta/2}\right)
\end{align*}
as $\lambda\to\infty$ along $\Lambda'$. Theorem \ref{thm:MainThm2d}
now follows from Proposition \ref{prop:MainProp}, choosing $0<\delta<\epsilon/6$
and $\eta<\delta/2$.
\end{proof}

\subsection{An \texorpdfstring{$L^{2}$}{L2}-mass expansion}

In the following subsections, we prove Proposition \ref{prop:MainProp}.
Our starting point is the following expansion.
\begin{lem}
Let $0<\delta<1$, $L=\lambda^{\delta}.$ We have
\begin{align}
 & \int_{B_{x}\left(r\right)}\left|g_{\lambda,L}\left(y\right)\right|^{2}\,dy-\pi r^{2}\nonumber \\
 & =\frac{2\pi r^{2}}{\left\Vert G_{\lambda,L}\right\Vert ^{2}}\sum_{\left|n-\lambda\right|<L}\frac{1}{\left(n-\lambda\right)^{2}}\sum_{\substack{\xi\ne\xi'\\
\left|\xi\right|^{2}=\left|\xi'\right|^{2}=n
}
}e^{i\left\langle x-x_{0},\xi-\xi'\right\rangle }\cdot\frac{J_{1}\left(r\left|\xi-\xi'\right|\right)}{r\left|\xi-\xi'\right|}\label{eq:MainFormula}\\
 & +\frac{2\pi r^{2}}{\left\Vert G_{\lambda,L}\right\Vert ^{2}}\sum_{\zeta\ne0}e^{i\left\langle x-x_{0},\zeta\right\rangle }\cdot\frac{J_{1}\left(r\left|\zeta\right|\right)}{r\left|\zeta\right|}\sum_{\begin{subarray}{c}
\left|\left|\xi\right|^{2}-\lambda\right|<L\\
\left|\left|\xi-\zeta\right|^{2}-\lambda\right|<L\\
\left|\xi\right|^{2}\ne\left|\xi-\zeta\right|^{2}
\end{subarray}}\frac{1}{\left(\left|\xi\right|^{2}-\lambda\right)\left(\left|\xi-\zeta\right|^{2}-\lambda\right)},\nonumber 
\end{align}
where $J_{1}$ is the Bessel function of the first kind of order $1$.
\end{lem}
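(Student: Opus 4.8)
The plan is to expand $|G_{\lambda,L}|^{2}$ as a finite double exponential sum, integrate it termwise over $B_{x}(r)$ (which is harmless since the sum is finite), and evaluate each resulting integral via the classical formula for the Fourier transform of the indicator of a disc.

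First I would write $|g_{\lambda,L}(y)|^{2}=\|G_{\lambda,L}\|_{2}^{-2}|G_{\lambda,L}(y)|^{2}$ and expand
\[
|G_{\lambda,L}(y)|^{2}=\sum_{\substack{\xi,\xi'\in\mathbb{Z}^{2}\\ ||\xi|^{2}-\lambda|<L,\ ||\xi'|^{2}-\lambda|<L}}\frac{e^{i\langle y-x_{0},\,\xi-\xi'\rangle}}{(|\xi|^{2}-\lambda)(|\xi'|^{2}-\lambda)}.
\]
Integrating over $B_{x}(r)$ and factoring out $e^{-i\langle x_{0},\xi-\xi'\rangle}$ reduces everything to computing $\int_{B_{x}(r)}e^{i\langle y,\eta\rangle}\,dy$ for $\eta=\xi-\xi'\in\mathbb{Z}^{2}$. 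Translating the ball to the origin, this equals $e^{i\langle x,\eta\rangle}\int_{|u|\le r}e^{i\langle u,\eta\rangle}\,du$, which is $\pi r^{2}$ when $\eta=0$, and for $\eta\ne0$ equals $2\pi\int_{0}^{r}J_{0}(s|\eta|)\,s\,ds=\frac{2\pi r}{|\eta|}J_{1}(r|\eta|)=2\pi r^{2}\,\frac{J_{1}(r|\eta|)}{r|\eta|}$ by the identity $\frac{d}{dt}\bigl(tJ_{1}(t)\bigr)=tJ_{0}(t)$ (this is the two-dimensional case of $\int_{|u|\le r}e^{i\langle u,\eta\rangle}\,du=(2\pi)^{d/2}|\eta|^{-d/2}r^{d/2}J_{d/2}(r|\eta|)$).

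Next I would isolate the diagonal contribution $\xi=\xi'$: it equals $\pi r^{2}\sum_{||\xi|^{2}-\lambda|<L}(|\xi|^{2}-\lambda)^{-2}=\pi r^{2}\|G_{\lambda,L}\|_{2}^{2}$, so after dividing by $\|G_{\lambda,L}\|_{2}^{2}$ it produces exactly the $\pi r^{2}$ that is subtracted on the left-hand side. Every off-diagonal term $\xi\ne\xi'$ carries the factor $2\pi r^{2}e^{i\langle x-x_{0},\eta\rangle}\frac{J_{1}(r|\eta|)}{r|\eta|}$ with $\eta=\xi-\xi'\ne0$, and I would split these terms according to whether $|\xi|^{2}=|\xi'|^{2}$. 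If $|\xi|^{2}=|\xi'|^{2}=n$, the denominator is $(n-\lambda)^{2}$; grouping by $n$ collects these contributions into the first double sum in the statement. If $|\xi|^{2}\ne|\xi'|^{2}$, I reindex by $\zeta=\xi-\xi'\ne0$ (so that $\xi'=\xi-\zeta$) and sum first over $\zeta$ and then over $\xi$ with $||\xi|^{2}-\lambda|<L$, $||\xi-\zeta|^{2}-\lambda|<L$ and $|\xi|^{2}\ne|\xi-\zeta|^{2}$; this reproduces the second sum. Assembling the three pieces and dividing through by $\|G_{\lambda,L}\|_{2}^{2}$ yields the asserted identity.

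There is essentially no analytic difficulty here: because $G_{\lambda,L}$ is a finite trigonometric polynomial, the interchange of summation and integration is trivial, and the only ingredient that is not elementary is the Bessel evaluation of the disc integral, which is classical. The one point demanding a little care is the combinatorial bookkeeping in passing from the pair $(\xi,\xi')$ to the pair $(\zeta,\xi)$ in the second sum --- namely, checking that the displayed ranges of summation are precisely correct and that the case $|\xi|^{2}=|\xi-\zeta|^{2}$ excluded there is exactly the one already accounted for in the first sum.
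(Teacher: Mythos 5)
Your proposal is correct and follows essentially the same route as the paper: expand $|g_{\lambda,L}|^{2}$ as a finite double sum, integrate termwise over the ball using the Bessel-function evaluation of $\int_{|u|\le r}e^{i\langle u,\eta\rangle}\,du$, peel off the diagonal $\xi=\xi'$ to cancel $\pi r^{2}$, and then reindex by $\zeta=\xi-\xi'$, separating the cases $|\xi|^{2}=|\xi'|^{2}$ and $|\xi|^{2}\ne|\xi'|^{2}$. The only difference is that you spell out the $|\xi|^{2}=|\xi'|^{2}$ split and the Bessel identity explicitly, which the paper leaves implicit in the phrase ``follows from the change of variable $\zeta=\xi-\xi'$''.
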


\begin{proof}
We expand
\begin{align*}
 & \int_{B_{x}\left(r\right)}\left|g_{\lambda,L}\left(y\right)\right|^{2}\,dy=\frac{1}{\left\Vert G_{\lambda,L}\right\Vert ^{2}}\int_{B_{x}\left(r\right)}\sum_{\begin{subarray}{c}
\left|\left|\xi\right|^{2}-\lambda\right|<L\\
\left|\left|\xi'\right|^{2}-\lambda\right|<L
\end{subarray}}\frac{e^{i\left\langle y-x_{0},\xi-\xi'\right\rangle }}{\left(\left|\xi\right|^{2}-\lambda\right)\left(\left|\xi'\right|^{2}-\lambda\right)}\,dy\\
 & =\pi r^{2}+\frac{1}{\left\Vert G_{\lambda,L}\right\Vert ^{2}}\sum_{\begin{subarray}{c}
\left|\left|\xi\right|^{2}-\lambda\right|<L\\
\left|\left|\xi'\right|^{2}-\lambda\right|<L\\
\xi\ne\xi'
\end{subarray}}\frac{e^{i\left\langle -x_{0},\xi-\xi'\right\rangle }}{\left(\left|\xi\right|^{2}-\lambda\right)\left(\left|\xi'\right|^{2}-\lambda\right)}\int_{B_{x}\left(r\right)}e^{i\left\langle y,\xi-\xi'\right\rangle }\,dy.
\end{align*}
Note that by the change of variable $z=\frac{y-x}{r}$, we have
\[
\int_{B_{x}\left(r\right)}e^{i\left\langle y,\xi-\xi'\right\rangle }\,dy=r^{2}e^{i\left\langle x,\xi-\xi'\right\rangle }\int_{B_{0}\left(1\right)}e^{i\left\langle z,r\left(\xi-\xi'\right)\right\rangle }\,dz=2\pi r^{2}e^{i\left\langle x,\xi-\xi'\right\rangle }\frac{J_{1}\left(r\left|\xi-\xi'\right|\right)}{\left|\xi-\xi'\right|}
\]
where $J_{1}$ is the Bessel function of the first kind of order $1$.
Thus
\[
\int_{B_{x}\left(r\right)}\left|g_{\lambda,L}\left(y\right)\right|^{2}\,dy=\pi r^{2}+\frac{2\pi r^{2}}{\left\Vert G_{\lambda,L}\right\Vert ^{2}}\sum_{\begin{subarray}{c}
\left|\left|\xi\right|^{2}-\lambda\right|<L\\
\left|\left|\xi'\right|^{2}-\lambda\right|<L\\
\xi\ne\xi'
\end{subarray}}\frac{e^{i\left\langle x-x_{0},\xi-\xi'\right\rangle }}{\left(\left|\xi\right|^{2}-\lambda\right)\left(\left|\xi'\right|^{2}-\lambda\right)}\cdot\frac{J_{1}\left(r\left|\xi-\xi'\right|\right)}{r\left|\xi-\xi'\right|}.
\]
and (\ref{eq:MainFormula}) follows from the change of variable $\zeta=\xi-\xi'$.
\end{proof}

\subsection{The first term in (\ref{eq:MainFormula})}

We would like to evaluate the first term on the right-hand-side of
equation (\ref{eq:MainFormula}). For $\epsilon>0$, let

\[
BR\left(\epsilon\right)=\left\{ n\in\mathcal{N}_{2}:\,\min_{\substack{\xi\ne\xi'\\
\left|\xi\right|^{2}=\left|\xi'\right|^{2}=n
}
}\left| \xi-\xi'\right| \le n^{1/2-\epsilon}\right\} .
\]

Bourgain and Rudnick \cite{BourgainRudnick} proved that
\begin{equation}
\#\left\{ n\in BR\left(\epsilon\right):\,n\le X\right\} =O\left(X^{1-\epsilon/3}\right);\label{eq:BourgainRudnick}
\end{equation}
this bound was recently improved by Granville and Wigman \cite{GranvilleWigman}
to
\[
\#\left\{ n\in BR\left(\epsilon\right):\,n\le X\right\} =O\left(X^{1-\epsilon}\left(\log X\right)^{1/2}\right)
\]
by rather sophisticated methods, though the bound (\ref{eq:BourgainRudnick})
is sufficient for our application. On recalling (\ref{eq:LandauTheorem}),
it follows that $BR\left(\epsilon\right)$ is a density zero set in
$\mathcal{N}_{2}$, i.e., on most circles with radius $\sqrt{n}$,
there are no lattice points with distance smaller or equal to $n^{1/2-\epsilon}$.
The next lemma shows that most elements of $\Lambda$ are far from
the elements of $BR\left(\epsilon\right)$.
\begin{lem}
Let $\epsilon>0,$ $0<\delta<1$. We have
\begin{equation}
\#\left\{ \lambda\in\Lambda:\,\lambda\le X,\,\exists n\in BR\left(\epsilon\right).\,\left|n-\lambda\right|\le L\right\} =O\left(X^{1-\epsilon/3+\delta}\right)\label{eq:lambdaFarFromBR}
\end{equation}
where $L=\lambda^{\delta}.$
\end{lem}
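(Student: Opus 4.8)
The plan is to reduce the counting of such $\lambda$ to the counting of $n\in BR(\epsilon)$, using the interlacing property to show that each $n\in BR(\epsilon)$ can only be responsible for boundedly many (up to a factor of order $L$) elements $\lambda\in\Lambda$ with $|n-\lambda|\le L$.

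First I would observe that, by the interlacing property \eqref{eq:Interlacing}, the number of elements $\lambda\in\Lambda$ lying in any interval $I$ is at most one more than the number of elements of $\mathcal{N}_2$ in (a slight enlargement of) $I$; in particular, the number of $\lambda\in\Lambda$ with $|n-\lambda|\le L$ is $O(\#\{m\in\mathcal{N}_2:|m-n|\le L\}+1)$, which by the trivial bound $\#\{m\in\mathcal{N}_2:|m-n|\le L\}\le 2L+1$ is $O(L)$. Hence
\[
\#\left\{\lambda\in\Lambda:\lambda\le X,\ \exists n\in BR(\epsilon).\ |n-\lambda|\le L\right\}
\ll L\cdot\#\left\{n\in BR(\epsilon):n\le X+L\right\}.
\]
Here I am using that if $\lambda\le X$ and $|n-\lambda|\le L$ then $n\le X+L\le 2X$ for $X$ large, so the relevant $n$ all lie in $BR(\epsilon)\cap[0,2X]$, and that $L=\lambda^{\delta}\le X^{\delta}$ throughout this range (so the factor $L$ above can be replaced by $X^{\delta}$).

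Next I would plug in the Bourgain--Rudnick bound \eqref{eq:BourgainRudnick}, which gives $\#\{n\in BR(\epsilon):n\le 2X\}=O(X^{1-\epsilon/3})$. Combining with the factor $X^{\delta}$ from the interlacing step yields the claimed bound $O(X^{1-\epsilon/3+\delta})$, absorbing the implied constants. One small point to handle cleanly is that $L$ depends on $\lambda$, but since $\lambda\le X$ we may uniformly bound $L\le X^{\delta}$, so no genuine difficulty arises.

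I do not expect any real obstacle here: the only inputs are the interlacing property (already available) and \eqref{eq:BourgainRudnick}, and the argument is a direct counting estimate. The mildly delicate point is bookkeeping the $\lambda$-dependence of $L$ and the shift from $[0,X]$ to $[0,2X]$, but these only affect implied constants; the exponent $1-\epsilon/3+\delta$ comes out exactly as stated once the factor $X^{\delta}$ (from counting $\lambda$'s near a fixed $n$) is multiplied by $X^{1-\epsilon/3}$ (from counting the bad $n$'s).
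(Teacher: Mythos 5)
Your proposal is correct and takes essentially the same approach as the paper: both swap the order of counting so that one first sums over bad $n\in BR(\epsilon)$ with $n\le 2X$ and then bounds the number of $\lambda\in\Lambda$ near each such $n$ by $O(X^{\delta})$ (the paper calls this a ``trivial bound''; you correctly trace it to interlacing), before applying the Bourgain--Rudnick estimate \eqref{eq:BourgainRudnick}. The bookkeeping you mention (replacing $L=\lambda^{\delta}$ by $X^{\delta}$ uniformly, and shifting the range of $n$ to $[0,2X]$) matches the paper exactly.
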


\begin{proof}
We have
\[
\#\left\{ \lambda\in\Lambda:\,\lambda\le X,\,\exists n\in BR\left(\epsilon\right).\,\left|n-\lambda\right|\le L\right\} \le\sum_{\substack{\lambda\in\Lambda\\
\lambda\le X
}
}\sum_{\substack{n\in BR\left(\epsilon\right)\\
\left|n-\lambda\right|\le L
}
}1\le\sum_{\substack{n\le2X\\
n\in BR\left(\epsilon\right)
}
}\sum_{\substack{\lambda\in\Lambda\\
\left|n-\lambda\right|\le X^{\delta}
}
}1
\]
where in the second inequality we have changed the order of summation
and replaced $L$ with $X^{\delta}.$ The bound (\ref{eq:lambdaFarFromBR})
follows from the trivial bound
\[
\#\left\{ \lambda\in\Lambda:\,\left|n-\lambda\right|\le X^{\delta}\right\} \ll X^{\delta}
\]
and the bound (\ref{eq:BourgainRudnick}).
\end{proof}
Choosing $L=\lambda^{\delta}$ with $0<\delta=\delta\left(\epsilon\right)<\epsilon/3$,
we conclude that the set
\[
\Lambda''\left(\epsilon\right)=\left\{ \lambda\in\Lambda:\,\left|n-\lambda\right|\le L\Longrightarrow n\notin BR\left(\epsilon\right)\right\} 
\]
 is a density one subset in $\Lambda$. From now on, we restrict to
$\lambda\in\Lambda_{0}:=\Lambda''\left(\epsilon/2\right)$, so that
for $0<\delta=\delta\left(\epsilon\right)<\epsilon/6$, $\Lambda_{0}$
is a density one subset in $\Lambda$. We are now ready to evaluate
the first term on the right-hand-side of equation (\ref{eq:MainFormula}). 
\begin{lem}
Let $\epsilon>0$, $0<\delta<\epsilon/6$, $L=\lambda^{\delta}$.
We have
\begin{equation}
\sup_{\substack{x\in\mathbb{T}^{2}\\
r>\lambda^{-1/2+\epsilon}
}
}\left|\frac{1}{\left\Vert G_{\lambda,L}\right\Vert ^{2}}\sum_{\left|n-\lambda\right|<L}\frac{1}{\left(n-\lambda\right)^{2}}\sum_{\substack{\xi\ne\xi'\\
\left|\xi\right|^{2}=\left|\xi'\right|^{2}=n
}
}e^{i\left\langle x-x_{0},\xi-\xi'\right\rangle }\cdot\frac{J_{1}\left(r\left|\xi-\xi'\right|\right)}{r\left|\xi-\xi'\right|}\right|=O\left(\lambda^{-\epsilon/2}\right)\label{eq:SecondSumBound}
\end{equation}
along $\lambda\in\Lambda_{0}.$
\end{lem}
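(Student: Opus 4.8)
The plan is to exploit two facts: first, that the Bessel factor $J_{1}\left(t\right)/t$ decays, and second, that restricting to $\lambda\in\Lambda_{0}$ forces the only frequencies $\xi-\xi'$ occurring in the inner double sum to be \emph{long}, so that the argument $t=r\left|\xi-\xi'\right|$ of each Bessel factor is large. First I would record the standard bound $J_{1}\left(t\right)\ll\min\left(t,t^{-1/2}\right)$, hence $J_{1}\left(t\right)/t\ll t^{-3/2}$ for $t\ge1$. Since $\Lambda_{0}=\Lambda''\left(\epsilon/2\right)$, for $\lambda\in\Lambda_{0}$ and any $n\in\mathcal{N}_{2}$ with $\left|n-\lambda\right|<L$ we have $n\notin BR\left(\epsilon/2\right)$, so every pair $\xi\ne\xi'$ with $\left|\xi\right|^{2}=\left|\xi'\right|^{2}=n$ obeys $\left|\xi-\xi'\right|>n^{1/2-\epsilon/2}$; and since $\delta<1$ forces $n\asymp\lambda$ (indeed $n\in\left(\lambda/2,2\lambda\right)$ for $\lambda$ large), this gives $\left|\xi-\xi'\right|\gg\lambda^{1/2-\epsilon/2}$. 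Combining with $r>\lambda^{-1/2+\epsilon}$, I obtain, uniformly in $x\in\mathbb{T}^{2}$, in the admissible $r$, in $n$ in the summation window, and in the pair,
\[
r\left|\xi-\xi'\right|\gg\lambda^{-1/2+\epsilon}\cdot\lambda^{1/2-\epsilon/2}=\lambda^{\epsilon/2}\to\infty ,
\]
and therefore $J_{1}\left(r\left|\xi-\xi'\right|\right)/\left(r\left|\xi-\xi'\right|\right)\ll\left(r\left|\xi-\xi'\right|\right)^{-3/2}\ll\lambda^{-3\epsilon/4}$.

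Next I would bound the sum in (\ref{eq:SecondSumBound}) by absolute values. For each $n$ in the window the inner sum over $\xi\ne\xi'$ has at most $r_{2}\left(n\right)^{2}$ terms, each of modulus $\ll\lambda^{-3\epsilon/4}$; using $r_{2}\left(n\right)^{2}\ll_{\eta}\lambda^{\eta}r_{2}\left(n\right)$ (from (\ref{eq:r2nBound}) together with $n\asymp\lambda$) this yields
\[
\left|\sum_{\left|n-\lambda\right|<L}\frac{1}{\left(n-\lambda\right)^{2}}\sum_{\substack{\xi\ne\xi'\\ \left|\xi\right|^{2}=\left|\xi'\right|^{2}=n}}e^{i\left\langle x-x_{0},\xi-\xi'\right\rangle }\cdot\frac{J_{1}\left(r\left|\xi-\xi'\right|\right)}{r\left|\xi-\xi'\right|}\right|\ll_{\eta}\lambda^{-3\epsilon/4+\eta}\sum_{\left|n-\lambda\right|<L}\frac{r_{2}\left(n\right)}{\left(n-\lambda\right)^{2}} .
\]
Since $\sum_{\left|n-\lambda\right|<L}r_{2}\left(n\right)/\left(n-\lambda\right)^{2}=\left\Vert G_{\lambda,L}\right\Vert ^{2}$, dividing through shows that the left-hand side of (\ref{eq:SecondSumBound}) is $\ll_{\eta}\lambda^{-3\epsilon/4+\eta}$, uniformly in $x$ and $r$, and choosing $\eta=\epsilon/4$ gives the asserted $O\left(\lambda^{-\epsilon/2}\right)$. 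I would emphasise that the normalisation $\left\Vert G_{\lambda,L}\right\Vert ^{2}$ cancels identically, so that no lower bound for it is needed at this step.

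The point I expect to require the most care is really just the exponent bookkeeping: one needs $r\left|\xi-\xi'\right|$ to beat a fixed positive power of $\lambda$, with a little room left over to absorb the $\lambda^{\eta}$ loss coming from estimating $r_{2}\left(n\right)^{2}$, which is precisely why the ranges $\delta<\epsilon/6$ (which also guarantees that $\Lambda_{0}=\Lambda''\left(\epsilon/2\right)$ is density one) and $\eta<\epsilon/4$ are imposed. Beyond that there is no genuine obstacle: all the arithmetic content has already been packaged into the definition of $\Lambda_{0}$, i.e.\ into the bound (\ref{eq:BourgainRudnick}) of Bourgain and Rudnick ensuring that on a density-one subset of the circles $\left\{ \xi:\left|\xi\right|^{2}=n\right\}$ the lattice points are pairwise well separated, and the remaining argument is an elementary pointwise estimate whose uniformity in $x$ and $r$ is automatic, since $x$ enters only through the unimodular phases $e^{i\left\langle x-x_{0},\xi-\xi'\right\rangle }$ and $r$ only through the monotone quantity $r\left|\xi-\xi'\right|$.
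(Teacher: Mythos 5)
Your proof is correct and follows essentially the same route as the paper: restrict to well-separated pairs via $\Lambda_{0}=\Lambda''(\epsilon/2)$, apply the Bessel decay $J_{1}(t)\ll t^{-1/2}$, estimate $r_{2}(n)^{2}\ll_{\eta}\lambda^{\eta}r_{2}(n)$, and cancel $\left\Vert G_{\lambda,L}\right\Vert^{2}$. The only cosmetic difference is that you fold in $r>\lambda^{-1/2+\epsilon}$ at the outset to get a uniform pointwise bound $\lambda^{-3\epsilon/4}$ on each Bessel factor, whereas the paper keeps the $r$-dependence explicit as $\lambda^{-3/4+\epsilon}/r^{3/2}$ and substitutes the lower bound on $r$ only in the final step.
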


\begin{proof}
Since $\lambda\in\Lambda_{0}=\Lambda''\left(\epsilon/2\right),$ the
inner summation on the left-hand-side of (\ref{eq:SecondSumBound})
can be restricted to $\left|\xi-\xi'\right|>n^{1/2-\epsilon/2}.$
Since $J_{1}\left(x\right)\ll x^{-1/2},$ it follows that 
\begin{align*}
\sum_{\substack{\xi\ne\xi'\\
\left|\xi\right|^{2}=\left|\xi'\right|^{2}=n
}
}e^{i\left\langle x-x_{0},\xi-\xi'\right\rangle }\cdot\frac{J_{1}\left(r\left|\xi-\xi'\right|\right)}{r\left|\xi-\xi'\right|} & \ll\frac{1}{r^{3/2}}\sum_{\begin{subarray}{c}
\xi\ne\xi'\\
\left|\xi\right|^{2}=\left|\xi'\right|^{2}=n\\
\left|\xi-\xi'\right|>n^{1/2-\epsilon/2}
\end{subarray}.}\frac{1}{\left|\xi-\xi'\right|^{3/2}}\\
 & \ll\frac{1}{r^{3/2}}r_{2}\left(n\right)^{2}\frac{1}{n^{3/4-3\epsilon/4}}
\end{align*}
and therefore the expression inside the supremum in (\ref{eq:SecondSumBound})
is bounded above by
\begin{align*}
\frac{1}{\left\Vert G_{\lambda,L}\right\Vert ^{2}}\sum_{\left|n-\lambda\right|<L}\frac{1}{\left(n-\lambda\right)^{2}}\frac{r_{2}\left(n\right)^{2}}{r^{3/2}n^{3/4-3\epsilon/4}} & \ll\frac{\lambda^{-3/4+\epsilon}}{r^{3/2}}
\end{align*}
where we have used the bound (\ref{eq:r2nBound}) and the fact that
\[
\left\Vert G_{\lambda,L}\right\Vert ^{2}=\sum\limits _{\left|n-\lambda\right|<L}\frac{r_{2}\left(n\right)}{\left(n-\lambda\right)^{2}}.
\]
The estimate (\ref{eq:SecondSumBound}) now follows, since we take
the supremum over $r>\lambda^{-1/2+\epsilon}$.
\end{proof}

\subsection{The second term in (\ref{eq:MainFormula})}

Fix $\lambda\in\Lambda_{0}$. Our next goal is to evaluate the second
term on the right-hand-side of equation (\ref{eq:MainFormula}). Note
that
\[
\left|\left|\xi-\zeta\right|^{2}-\lambda\right|=\left|\left|\xi\right|^{2}-\lambda-2\left\langle \xi,\zeta\right\rangle +\left|\zeta\right|^{2}\right|
\]
so that
\begin{equation}
\left|\zeta\right|^{2}\le\left|\left|\xi-\zeta\right|^{2}-\lambda\right|+\left|\left|\xi\right|^{2}-\lambda\right|+2\left|\left\langle \xi,\zeta\right\rangle \right|.\label{eq:zetaBound}
\end{equation}
Thus, if $\left|\left|\xi\right|^{2}-\lambda\right|<L$ and $\left|\left|\xi-\zeta\right|^{2}-\lambda\right|<L$,
then using Cauchy-Schwarz inequality, (\ref{eq:zetaBound}) implies
\[
\left|\zeta\right|^{2}<2L+2\left|\left\langle \xi,\zeta\right\rangle \right|\le5\left|\zeta\right|\sqrt{\lambda}
\]
and therefore the outer summation of the second term in (\ref{eq:MainFormula})
can be restricted to $\left|\zeta\right|\le5\sqrt{\lambda}.$ We can
also write
\[
\left|\left|\xi-\zeta\right|^{2}-\lambda\right|=\left|\left|\xi\right|^{2}-\lambda-\left\langle 2\xi-\zeta,\zeta\right\rangle \right|,
\]
so that 
\[
\left|\left\langle 2\xi-\zeta,\zeta\right\rangle \right|\le\left|\left|\xi-\zeta\right|^{2}-\lambda\right|+\left|\left|\xi\right|^{2}-\lambda\right|,
\]
and therefore the inner summation of the second term in (\ref{eq:MainFormula})
can be restricted to $\xi\in\mathbb{Z}^{2}$ which satisfy $\left|\left|\xi\right|^{2}-\lambda\right|<L$
and
\[
0<\left|\left\langle 2\xi-\zeta,\zeta\right\rangle \right|<2L,
\]
and in particular (since $\left|\xi\right|^{2}\ge\lambda/2$ for $\lambda\ge4$,
assuming $\delta<\frac{1}{2}$), 
\[
0<\left|\left\langle 2\xi-\zeta,\zeta\right\rangle \right|\le3\left|\xi\right|^{2\delta}.
\]

For every $\zeta\ne0$, define
\[
A_{\zeta,\delta}=\left\{ \xi\in\mathbb{Z}^{2}:\,0<\left|\left\langle 2\xi-\zeta,\zeta\right\rangle \right|\le3\left|\xi\right|^{2\delta}\right\} 
\]
and
\[
B_{\zeta,\delta}=\left\{ \tilde{\xi}\in\mathbb{Z}^{2}:\,0<\left|\left\langle \tilde{\xi},\zeta\right\rangle \right|\le3\left|\frac{\tilde{\xi}+\zeta}{2}\right|^{2\delta}\right\} .
\]

Note that for any $ \xi \in A_{\zeta,\delta} $,  we have $ \tilde{\xi} := 2\xi - \zeta \in B_{\zeta,\delta} $, and $\left|\xi\right|^{2}\le2X $ implies $ \left|\tilde{\xi}+\zeta\right|^{2}\le8X $. Hence, for any $X$ sufficiently large such that $\left|\zeta\right|\le5\sqrt{X}$,
we have
\begin{align}
\#\left\{ \xi\in A_{\zeta,\delta}:\,\left|\xi\right|^{2}\le2X\right\}  & \le\#\left\{ \tilde{\xi}\in B_{\zeta,\delta}:\,\left|\tilde{\xi}+\zeta\right|^{2}\le8X\right\} \label{eq:AandBcompr}\\
 & \le\#\left\{ \tilde{\xi}\in B_{\zeta,\delta}:\,\left|\tilde{\xi}\right|\le\left|\zeta\right|+3X^{1/2}\right\} \nonumber \\
 & \le\#\left\{ \tilde{\xi}\in B_{\zeta,\delta}:\,\left|\tilde{\xi}\right|\le8X^{1/2}\right\} .\nonumber 
\end{align}
Next, we estimate the right-hand-side of (\ref{eq:AandBcompr}).
\begin{lem}
\label{lem:BUpperBound}Let $0<\delta<1/2$, $\zeta\ne0$. For $X$
sufficiently large such that $\left|\zeta\right|\le5\sqrt{X}$, we
have
\[
\#\left\{ \tilde{\xi}\in B_{\zeta,\delta}:\,\left|\tilde{\xi}\right|\le8X^{1/2}\right\} \ll\frac{X^{1/2+\delta}}{\left|\zeta\right|}.
\]
\end{lem}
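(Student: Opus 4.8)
The plan is to use the size restriction $|\tilde\xi|\le 8X^{1/2}$ to replace the $\tilde\xi$-dependent inequality defining $B_{\zeta,\delta}$ by a linear one, and then to count lattice points on the resulting family of parallel lines.

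First I would observe that if $|\tilde\xi|\le 8X^{1/2}$ then, since $|\zeta|\le 5X^{1/2}$ by hypothesis, the triangle inequality gives $\bigl|\tfrac{\tilde\xi+\zeta}{2}\bigr|\le\tfrac{13}{2}X^{1/2}$, whence $3\bigl|\tfrac{\tilde\xi+\zeta}{2}\bigr|^{2\delta}\le CX^{\delta}$ for an absolute constant $C$ (using $2\delta<1$). Thus every $\tilde\xi\in B_{\zeta,\delta}$ with $|\tilde\xi|\le 8X^{1/2}$ satisfies $0<|\langle\tilde\xi,\zeta\rangle|\le CX^{\delta}$, so it suffices to bound the number of $\tilde\xi\in\mathbb{Z}^{2}$ with $|\tilde\xi|\le 8X^{1/2}$ lying on one of the lines $\langle\tilde\xi,\zeta\rangle=k$ with $k\in\mathbb{Z}$, $0<|k|\le CX^{\delta}$.

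Next I would decompose according to the value of $k$. Writing $g=\gcd(\zeta_{1},\zeta_{2})\ge 1$, the map $\tilde\xi\mapsto\langle\tilde\xi,\zeta\rangle$ takes exactly the values in $g\mathbb{Z}$, so the line $\langle\tilde\xi,\zeta\rangle=k$ meets $\mathbb{Z}^{2}$ only when $g\mid k$; in particular one may assume $g\le CX^{\delta}$ (otherwise the count is $0$), and then the number of admissible $k$ is $\ll X^{\delta}/g$. For each admissible $k$ the integer solutions of $\langle\tilde\xi,\zeta\rangle=k$ form a coset of the rank-one lattice $\{\tilde\xi\in\mathbb{Z}^{2}:\langle\tilde\xi,\zeta\rangle=0\}=\mathbb{Z}\cdot(-\zeta_{2}/g,\zeta_{1}/g)$, whose generator has length $|\zeta|/g$; intersecting an arithmetic progression of step $|\zeta|/g$ with a chord of the disc of radius $8X^{1/2}$ (length at most $16X^{1/2}$) leaves at most $1+16gX^{1/2}/|\zeta|$ points. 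Multiplying and summing then gives
\[
\#\bigl\{\tilde\xi\in B_{\zeta,\delta}:|\tilde\xi|\le 8X^{1/2}\bigr\}\ \ll\ \frac{X^{\delta}}{g}\Bigl(1+\frac{gX^{1/2}}{|\zeta|}\Bigr)\ =\ \frac{X^{\delta}}{g}+\frac{X^{1/2+\delta}}{|\zeta|}\ \ll\ \frac{X^{1/2+\delta}}{|\zeta|},
\]
where the last step uses $X^{\delta}/g\le X^{\delta}$ together with the hypothesis $|\zeta|\le 5X^{1/2}$, which gives $X^{\delta}\le 5X^{1/2+\delta}/|\zeta|$.

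The only point that will need a little care is the linearization: one has to make sure that on the disc of radius $8X^{1/2}$ the right-hand side $3\bigl|\tfrac{\tilde\xi+\zeta}{2}\bigr|^{2\delta}$ of the inequality defining $B_{\zeta,\delta}$ is dominated by a bound of size $X^{\delta}$ uniform in $\tilde\xi$ — which is precisely where the constraint $|\zeta|\le 5\sqrt{X}$ enters. After that reduction, the remaining argument is the standard count of lattice points on parallel chords of a disc, and I do not expect any genuine obstacle.
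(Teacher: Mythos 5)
Your proof is correct and follows essentially the same route as the paper: linearize the defining inequality of $B_{\zeta,\delta}$ using $|\tilde\xi|\le 8X^{1/2}$ and $|\zeta|\le 5\sqrt{X}$ to reduce to $0<|\langle\tilde\xi,\zeta\rangle|\ll X^\delta$, then count lattice points on the $\ll X^\delta/\gcd(\zeta_1,\zeta_2)$ admissible parallel lines, each carrying $\ll \gcd(\zeta_1,\zeta_2)\,X^{1/2}/|\zeta|$ points in the disc. The only cosmetic difference is that you keep the ``$+1$'' in the per-line count and dispose of the resulting $X^\delta/g$ term explicitly via $|\zeta|\le 5\sqrt{X}$, whereas the paper absorbs it into the implied constant (legitimate, since that same hypothesis already forces $gX^{1/2}/|\zeta|\gg 1$).
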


\begin{proof}
Denote $\zeta=\left(m,n\right)$, and let $d=\gcd\left(m,n\right)$.
For any $l\in\mathbb{Z}$, there is a solution to the equation
\begin{equation}
\left\langle \tilde{\xi},\zeta\right\rangle =l\label{eq:linearDiophantineEq}
\end{equation}
if and only if $d\mid l$, and then if $\tilde{\xi}_{0}$ is a solution,
the other solutions are given by
\[
\tilde{\xi}=\tilde{\xi}_{0}+k\left(-\frac{n}{d},\frac{m}{d}\right)
\]
where $k\in\mathbb{Z}.$ Thus, for $l$ satisfying $d\mid l$, the
number of solutions $\tilde{\xi}$ to (\ref{eq:linearDiophantineEq})
such that $\left|\tilde{\xi}\right|\le8X^{1/2}$ is bounded above
(up to a constant factor) by
\[
\frac{X^{1/2}}{\left|\left(-\frac{n}{d},\frac{m}{d}\right)\right|}=\frac{dX^{1/2}}{\left|\zeta\right|}.
\]
We deduce that
\[
\#\left\{ \tilde{\xi}\in B_{\zeta,\delta}:\,\left|\tilde{\xi}\right|\le8X^{1/2}\right\} \ll\frac{dX^{1/2}}{\left|\zeta\right|}\sum_{\substack{1\le\left|l\right|\le21X^{\delta}\\
d\mid l
}
}1\ll\frac{X^{1/2+\delta}}{\left|\zeta\right|}
\]
(note that it is crucial that $l\ne0$ for the last inequality to
hold).
\end{proof}
Next, we show that for most of the elements of $\lambda\in\Lambda_{0}$,
if $\left|\left|\xi\right|^{2}-\lambda\right|\le L$, then $\xi$
is not an element of $A_{\zeta,\delta}.$
\begin{lem}
\label{lem:MainBoundLambdas}Let $0<\delta<1/2$, $\zeta\ne0$. For
$X$ sufficiently large such that $\left|\zeta\right|\le5\sqrt{X}$,
we have
\[
\#\left\{ \lambda\in\Lambda_{0}:\,\lambda\le X,\,\exists\xi\in A_{\zeta,\delta}.\,\left|\left|\xi\right|^{2}-\lambda\right|\le L\right\} \ll\frac{X^{1/2+2\delta}}{\left|\zeta\right|}
\]
where $L=\lambda^{\delta}.$
\end{lem}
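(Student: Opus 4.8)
The plan is to bound the count of "bad" $\lambda\in\Lambda_0$ by counting the integer points $\xi$ that could force such a $\lambda$ to exist, using the interlacing property to pass from counting $\xi$ to counting $\lambda$. First I would observe that if $\lambda\in\Lambda_0$ satisfies $|\,|\xi|^2-\lambda\,|\le L$ for some $\xi\in A_{\zeta,\delta}$, then $\lambda$ lies in an interval of length $2L$ around the integer $n=|\xi|^2\in\mathcal N_2$; moreover $n\le X+L\le 2X$ for $X$ large. By the interlacing (\ref{eq:Interlacing}), the number of $\lambda\in\Lambda$ in an interval of length $2L=2X^\delta$ around a fixed point is $\ll X^\delta$. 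Hence
\[
\#\left\{\lambda\in\Lambda_0:\,\lambda\le X,\ \exists\xi\in A_{\zeta,\delta}.\ \left|\,|\xi|^2-\lambda\,\right|\le L\right\}\ll X^\delta\cdot\#\left\{\xi\in A_{\zeta,\delta}:\,|\xi|^2\le 2X\right\}.
\]

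Next I would estimate $\#\{\xi\in A_{\zeta,\delta}:\,|\xi|^2\le 2X\}$. This is exactly where the chain of inclusions (\ref{eq:AandBcompr}) and Lemma \ref{lem:BUpperBound} come in: combining them gives
\[
\#\left\{\xi\in A_{\zeta,\delta}:\,|\xi|^2\le 2X\right\}\le\#\left\{\tilde\xi\in B_{\zeta,\delta}:\,|\tilde\xi|\le 8X^{1/2}\right\}\ll\frac{X^{1/2+\delta}}{|\zeta|}.
\]
Multiplying by the factor $X^\delta$ from the interlacing step yields the claimed bound $X^{1/2+2\delta}/|\zeta|$, completing the proof.

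I do not expect a genuine obstacle here: the two nontrivial inputs (the point count for $B_{\zeta,\delta}$ from Lemma \ref{lem:BUpperBound}, and the sublattice-of-$\Lambda$ count from interlacing) are both already available, and the argument is just their composition. The only point requiring a little care is the bookkeeping on ranges — ensuring that $|\xi|^2\le 2X$ is the right truncation so that $n=|\xi|^2\le 2X$ and the hypothesis $|\zeta|\le 5\sqrt X$ of Lemma \ref{lem:BUpperBound} is met (it is, with room to spare), and that replacing $L=\lambda^\delta$ by $X^\delta$ throughout only costs implied constants since $\lambda\le X$.
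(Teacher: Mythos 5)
Your proposal is correct and follows the same route as the paper: both bound the count by changing the order of summation (or, equivalently, counting pairs $(\lambda,\xi)$), use the trivial interlacing bound $\#\{\lambda\in\Lambda_0:\,|\,|\xi|^2-\lambda\,|\le X^\delta\}\ll X^\delta$, and then invoke the chain of inclusions leading to $B_{\zeta,\delta}$ together with Lemma~\ref{lem:BUpperBound}. The bookkeeping you flag (replacing $L=\lambda^\delta$ by $X^\delta$ and truncating at $|\xi|^2\le 2X$) is exactly what the paper does.
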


\begin{proof}
We have
\begin{align*}
 \#\left\{ \lambda\in\Lambda_{0}:\,\lambda\le X,\,\exists\xi\in A_{\zeta,\delta}.\,\left|\left|\xi\right|^{2}-\lambda\right|\le L\right\} & \le\sum_{\substack{\lambda\in\Lambda_{0}\\
\lambda\le X
}
}\sum_{\begin{subarray}{c}
\xi\in A_{\zeta,\delta}\\
\left|\left|\xi\right|^{2}-\lambda\right|\le L
\end{subarray}}1 \\
& \le\sum_{\substack{\xi\in A_{\zeta,\delta}\\
\left|\xi\right|^{2}\le2X
}
}\sum_{\substack{\lambda\in\Lambda_{0}\\
\left|\left|\xi\right|^{2}-\lambda\right|\le X^{\delta}
}
}1,
\end{align*}
where in the last inequality we changed the order of summation and
replaced $L=\lambda^{\delta}$ with $X^{\delta}.$ By the trivial
bound 
\[
\left\{ \lambda\in\Lambda_{0}:\,\left|\left|\xi\right|^{2}-\lambda\right|\le X^{\delta}\right\} \ll X^{\delta}
\]
 and using the bound (\ref{eq:AandBcompr}) and Lemma \ref{lem:BUpperBound},
we get
\begin{align*}
\#\left\{ \lambda\in\Lambda_{0}:\,\lambda\le X,\,\exists\xi\in A_{\zeta,\delta}.\,\left|\left|\xi\right|^{2}-\lambda\right|\le L\right\}  & \ll X^{\delta}\cdot\#\left\{ \xi\in A_{\zeta,\delta}:\,\left|\xi\right|^{2}\le2X\right\} \\
 & \le X^{\delta}\cdot\#\left\{ \tilde{\xi}\in B_{\zeta,\delta}:\,\left|\tilde{\xi}\right|\le8X^{1/2}\right\} \\
 & \ll\frac{X^{1/2+2\delta}}{\left|\zeta\right|}.
\end{align*} 
\end{proof}
For $\lambda\in\Lambda_{0}$ and $\zeta\ne0$, denote 
\[
\chi_{\zeta}\left(\lambda\right)=\begin{cases}
1 & \exists\xi\in A_{\zeta,\delta}.\,\left|\left|\xi\right|^{2}-\lambda\right|\le L\\
0 & \text{otherwise}.
\end{cases}
\]

\begin{lem}
Let $\epsilon>0$, $0<\delta<\epsilon/6$, $\lambda\in\Lambda_{0},$
$L=\lambda^{\delta}$. We have
\begin{equation}
\sup_{\substack{x\in\mathbb{T}^{2}\\
r>\lambda^{-1/2+\epsilon}
}
}\left|\frac{1}{\pi r^{2}}\int_{B_{x}\left(r\right)}\left|g_{\lambda,L}\left(y\right)\right|^{2}\,dy-1\right|\ll\sum_{0<\left|\zeta\right|\le5\sqrt{\lambda}}\chi_{\zeta}\left(\lambda\right)\cdot\sup_{\substack{r>\lambda^{-1/2+\epsilon}}
}\frac{\left|J_{1}\left(r\left|\zeta\right|\right)\right|}{r\left|\zeta\right|}+\lambda^{-\epsilon/2}.\label{eq:MainBound}
\end{equation}
\end{lem}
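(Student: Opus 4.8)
The plan is to return to the $L^{2}$-mass expansion \eqref{eq:MainFormula}, divide through by $\pi r^{2}$, and estimate the two resulting terms separately. The first term, divided by $\pi r^{2}$, is precisely $2/\left\Vert G_{\lambda,L}\right\Vert ^{2}$ times the quantity already controlled in \eqref{eq:SecondSumBound}, so along $\lambda\in\Lambda_{0}$ it contributes $O\left(\lambda^{-\epsilon/2}\right)$, uniformly over $x\in\mathbb{T}^{2}$ and $r>\lambda^{-1/2+\epsilon}$. This accounts for the $\lambda^{-\epsilon/2}$ on the right-hand side of \eqref{eq:MainBound}, and the rest of the work is the second term.

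For the second term I would first apply the triangle inequality to replace $e^{i\left\langle x-x_{0},\zeta\right\rangle }$ by $1$ and to pull $J_{1}\left(r\left|\zeta\right|\right)/\left(r\left|\zeta\right|\right)$ out of the inner sum. By the discussion preceding the lemma, the outer sum is restricted to $0<\left|\zeta\right|\le5\sqrt{\lambda}$, and for each such $\zeta$ the inner sum over $\xi$ with $\left|\left|\xi\right|^{2}-\lambda\right|<L$, $\left|\left|\xi-\zeta\right|^{2}-\lambda\right|<L$ and $\left|\xi\right|^{2}\ne\left|\xi-\zeta\right|^{2}$ is supported on $\xi\in A_{\zeta,\delta}$ — in particular it is empty unless $\chi_{\zeta}\left(\lambda\right)=1$. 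The key remaining point is a bound on the inner sum uniform in $\zeta$: using the elementary inequality $\frac{1}{\left|ab\right|}\le\frac{1}{2}\left(\frac{1}{a^{2}}+\frac{1}{b^{2}}\right)$ with $a=\left|\xi\right|^{2}-\lambda$, $b=\left|\xi-\zeta\right|^{2}-\lambda$, together with the identity $\left\Vert G_{\lambda,L}\right\Vert ^{2}=\sum_{\left|\left|\xi\right|^{2}-\lambda\right|<L}\left(\left|\xi\right|^{2}-\lambda\right)^{-2}$ and the translation $\xi\mapsto\xi-\zeta$ (which handles the $b$-piece), one gets
\[
\left|\sum_{\substack{\left|\left|\xi\right|^{2}-\lambda\right|<L\\\left|\left|\xi-\zeta\right|^{2}-\lambda\right|<L\\\left|\xi\right|^{2}\ne\left|\xi-\zeta\right|^{2}}}\frac{1}{\left(\left|\xi\right|^{2}-\lambda\right)\left(\left|\xi-\zeta\right|^{2}-\lambda\right)}\right|\le\left\Vert G_{\lambda,L}\right\Vert ^{2}.
\]
Substituting this into the second term of \eqref{eq:MainFormula} divided by $\pi r^{2}$, the factor $\left\Vert G_{\lambda,L}\right\Vert ^{2}$ cancels the $1/\left\Vert G_{\lambda,L}\right\Vert ^{2}$ out front, leaving $\ll\sum_{0<\left|\zeta\right|\le5\sqrt{\lambda}}\chi_{\zeta}\left(\lambda\right)\left|J_{1}\left(r\left|\zeta\right|\right)\right|/\left(r\left|\zeta\right|\right)$. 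Combining with the bound on the first term and taking the supremum over $x\in\mathbb{T}^{2}$ and $r>\lambda^{-1/2+\epsilon}$ (the first-term bound being already independent of $x$ and $r$) gives \eqref{eq:MainBound}.

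Most of this is bookkeeping once the reduction is set up; the only genuinely substantive step is the uniform-in-$\zeta$ bound on the inner sum, and in particular the observation that the crude pointwise inequality $\frac{1}{\left|ab\right|}\le\frac{1}{2}\left(a^{-2}+b^{-2}\right)$ is exactly what lets the inner sum be dominated by $\left\Vert G_{\lambda,L}\right\Vert ^{2}$ after invoking translation invariance of the $\ell^{2}$-norm — this is what produces the clean cancellation and leaves an estimate depending only on the Bessel factors and the indicators $\chi_{\zeta}\left(\lambda\right)$. (One should also note that the constraint $\left|\xi\right|^{2}\ne\left|\xi-\zeta\right|^{2}$ is harmless for the upper bound: dropping it only enlarges a sum of nonnegative terms, while it is precisely this strict inequality that forces $0<\left|\left\langle 2\xi-\zeta,\zeta\right\rangle \right|$ and hence $\xi\in A_{\zeta,\delta}$.)
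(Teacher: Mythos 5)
Your proposal is correct and follows the paper's argument in all its essential steps: use the $L^{2}$-mass expansion \eqref{eq:MainFormula}, bound the diagonal (first) term by \eqref{eq:SecondSumBound} to produce the $\lambda^{-\epsilon/2}$, observe that the constraints force $\xi\in A_{\zeta,\delta}$ so that the inner sum vanishes unless $\chi_{\zeta}\left(\lambda\right)=1$ (and also that $\left|\zeta\right|\le5\sqrt{\lambda}$), and bound the inner sum by $\left\Vert G_{\lambda,L}\right\Vert ^{2}$ so it cancels the prefactor. The only (cosmetic) divergence is the last step: the paper bounds the inner sum via Cauchy--Schwarz on the two $\ell^{2}$ factors, while you use the pointwise AM--GM inequality $\frac{1}{\left|ab\right|}\le\frac{1}{2}\left(a^{-2}+b^{-2}\right)$ followed by the translation $\xi\mapsto\xi-\zeta$; both yield exactly $\left\Vert G_{\lambda,L}\right\Vert ^{2}$, so this is a trivial variant rather than a genuinely different route.
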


\begin{proof}
By (\ref{eq:MainFormula}) and by the bound (\ref{eq:SecondSumBound}),
we have
\begin{align}
 & \left|\int_{B_{x}\left(r\right)}\left|g_{\lambda,L}\left(y\right)\right|^{2}\,dy-\pi r^{2}\right|\nonumber \\
 & \ll\frac{r^{2}}{\left\Vert G_{\lambda,L}\right\Vert ^{2}}\sum_{\zeta\ne0}\chi_{\zeta}\left(\lambda\right)\cdot\frac{\left|J_{1}\left(r\left|\zeta\right|\right)\right|}{r\left|\zeta\right|}\sum_{\begin{subarray}{c}
\left|\left|\xi\right|^{2}-\lambda\right|<L\\
\left|\left|\xi-\zeta\right|^{2}-\lambda\right|<L\\
\left|\xi\right|^{2}\ne\left|\xi-\zeta\right|^{2}
\end{subarray}}\frac{1}{\left|\left|\xi\right|^{2}-\lambda\right|\left|\left|\xi-\zeta\right|^{2}-\lambda\right|}+r^{2}\lambda^{-\epsilon/2}.\label{eq:BoundingIntegralWith_g_z}
\end{align}
By the Cauchy-Schwarz inequality, 
\begin{equation}
\sum_{\begin{subarray}{c}
\left|\left|\xi\right|^{2}-\lambda\right|<L\\
\left|\left|\xi-\zeta\right|^{2}-\lambda\right|<L\\
\left|\xi\right|^{2}\ne\left|\xi-\zeta\right|^{2}
\end{subarray}}\frac{1}{\left|\left|\xi\right|^{2}-\lambda\right|\left|\left|\xi-\zeta\right|^{2}-\lambda\right|}\le\left\Vert G_{\lambda,L}\right\Vert ^{2};\label{eq:CuachySchwarzG}
\end{equation}
the bound (\ref{eq:MainBound}) now follows substituting (\ref{eq:CuachySchwarzG})
into (\ref{eq:BoundingIntegralWith_g_z}).
\end{proof}

\subsection{Proof of Proposition \ref{prop:MainProp}}

Proposition \ref{prop:MainProp} will follow from the following estimate,
which we are now ready to prove.
\begin{lem}
\label{lem:MainBoundAverage}Let $\epsilon>0$, $0<\delta<\epsilon/6$.
We have
\begin{equation}
\sum_{\substack{\lambda\in\Lambda_{0}\\
\lambda\le X
}
}\sup_{\substack{x\in\mathbb{T}^{2}\\
r>\lambda^{-1/2+\epsilon}
}
}\left|\frac{1}{\pi r^{2}}\int_{B_{x}\left(r\right)}\left|g_{\lambda,L}\left(y\right)\right|^{2}\,dy-1\right|=O\left(X^{1-\epsilon/2}\right),\label{eq:MainBoundLambdaSum}
\end{equation}
where $L=\lambda^{\delta}.$ 
\end{lem}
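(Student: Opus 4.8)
The plan is to sum the pointwise bound (\ref{eq:MainBound}) over $\lambda\in\Lambda_{0}$ with $\lambda\le X$ and check that both resulting pieces are $O(X^{1-\epsilon/2})$. The contribution of the error term $\lambda^{-\epsilon/2}$ is immediate: since the elements of $\Lambda$ interlace with those of $\mathcal{N}_{2}$ we have the trivial count $\#\{\lambda\in\Lambda:\lambda\le Y\}\ll Y$, so a dyadic decomposition gives $\sum_{\lambda\in\Lambda_{0},\,\lambda\le X}\lambda^{-\epsilon/2}\ll\sum_{2^{j}\le X}2^{-j\epsilon/2}\cdot2^{j}\ll X^{1-\epsilon/2}$.

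For the main term $\sum_{\lambda\in\Lambda_{0},\,\lambda\le X}\ \sum_{0<|\zeta|\le5\sqrt{\lambda}}\chi_{\zeta}(\lambda)\,\sup_{r>\lambda^{-1/2+\epsilon}}\frac{|J_{1}(r|\zeta|)|}{r|\zeta|}$ I would first interchange the order of summation: as $\lambda\le X$ forces $|\zeta|\le5\sqrt{X}$, the variable $\zeta$ ranges over $0<|\zeta|\le5\sqrt{X}$, and for each such $\zeta$ the inner sum runs over those $\lambda\in\Lambda_{0}$ with $\lambda\le X$ and $\chi_{\zeta}(\lambda)=1$. Next I would make the Bessel factor uniform in $\lambda$: from $J_{1}(t)\ll\min(t,t^{-1/2})$, hence $|J_{1}(t)|/t\ll\min(1,t^{-3/2})$, one gets $\sup_{r>\lambda^{-1/2+\epsilon}}\frac{|J_{1}(r|\zeta|)|}{r|\zeta|}\ll\min(1,(\lambda^{-1/2+\epsilon}|\zeta|)^{-3/2})$, and since $\lambda\le X$ gives $\lambda^{-1/2+\epsilon}\ge X^{-1/2+\epsilon}$ this is $\ll\min(1,(X^{-1/2+\epsilon}|\zeta|)^{-3/2})$, a bound no longer depending on $\lambda$. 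Pulling this out and applying Lemma~\ref{lem:MainBoundLambdas} to bound $\#\{\lambda\in\Lambda_{0}:\lambda\le X,\ \chi_{\zeta}(\lambda)=1\}$ by $O(X^{1/2+2\delta}/|\zeta|)$, the main term becomes $\ll X^{1/2+2\delta}\sum_{0<|\zeta|\le5\sqrt{X}}\frac{1}{|\zeta|}\min(1,(X^{-1/2+\epsilon}|\zeta|)^{-3/2})$.

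The remaining lattice sum over $\zeta\in\mathbb{Z}^{2}\setminus\{0\}$ I would evaluate by splitting at $|\zeta|\asymp X^{1/2-\epsilon}$, where the two entries of the minimum balance, and summing dyadically over annuli. On $|\zeta|\le X^{1/2-\epsilon}$ the summand is $\ll1/|\zeta|$ and $\sum_{0<|\zeta|\le T}1/|\zeta|\ll T$ gives a contribution $\ll X^{1/2-\epsilon}$; on $|\zeta|>X^{1/2-\epsilon}$ the summand is $\ll X^{3/4-3\epsilon/2}|\zeta|^{-5/2}$ and $\sum_{|\zeta|>T}|\zeta|^{-5/2}\ll T^{-1/2}$ gives a contribution $\ll X^{3/4-3\epsilon/2}\cdot X^{-1/4+\epsilon/2}=X^{1/2-\epsilon}$. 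Hence the $\zeta$-sum is $\ll X^{1/2-\epsilon}$ and the main term is $\ll X^{1/2+2\delta}\cdot X^{1/2-\epsilon}=X^{1-\epsilon+2\delta}$; since $0<\delta<\epsilon/6$ we have $1-\epsilon+2\delta<1-2\epsilon/3\le1-\epsilon/2$, which yields the claimed $O(X^{1-\epsilon/2})$ and, together with the first paragraph, proves (\ref{eq:MainBoundLambdaSum}). I do not expect any genuine obstacle: the whole argument is bookkeeping once (\ref{eq:MainBound}) and Lemma~\ref{lem:MainBoundLambdas} are in hand. The only points requiring a little care are keeping the Bessel supremum uniform in $\lambda$ (done using $\lambda\le X$) and checking that the exponent $1-\epsilon+2\delta$ stays below $1-\epsilon/2$ — which is exactly what the hypothesis $\delta<\epsilon/6$ is there to guarantee.
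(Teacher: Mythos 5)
Your proposal is correct and follows essentially the same route as the paper: sum the pointwise bound from (\ref{eq:MainBound}), interchange the order of summation, replace the $\lambda$-dependent Bessel supremum by the uniform one over $r>X^{-1/2+\epsilon}$, apply Lemma~\ref{lem:MainBoundLambdas}, and split the lattice sum at $|\zeta|\asymp X^{1/2-\epsilon}$ to obtain $X^{1-\epsilon+2\delta}$, closing with $\delta<\epsilon/6$. The only cosmetic differences are that you write the Bessel bound as $\min\bigl(1,(X^{-1/2+\epsilon}|\zeta|)^{-3/2}\bigr)/|\zeta|$ rather than the paper's $\min\bigl\{|\zeta|^{-1},X^{3/4-3\epsilon/2}|\zeta|^{-5/2}\bigr\}$ (they are the same expression), and you spell out the dyadic bound for $\sum\lambda^{-\epsilon/2}$.
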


\begin{proof}
By (\ref{eq:MainBound}), we have
\begin{align}
 & \sum_{\begin{subarray}{c}
\lambda\in\Lambda_{0}\\
\lambda\le X
\end{subarray}}\sup_{\substack{x\in\mathbb{T}^{2}\\
r>\lambda^{-1/2+\epsilon}
}
}\left|\frac{1}{\pi r^{2}}\int_{B_{x}\left(r\right)}\left|g_{\lambda,L}\left(y\right)\right|^{2}\,dy-1\right|\nonumber \\
 & \ll\sum_{\begin{subarray}{c}
\lambda\in\Lambda_{0}\\
\lambda\le X
\end{subarray}}\sum_{0<\left|\zeta\right|\le5\sqrt{\lambda}}\chi_{\zeta}\left(\lambda\right)\cdot\sup_{\substack{r>X^{-1/2+\epsilon}}
}\frac{\left|J_{1}\left(r\left|\zeta\right|\right)\right|}{r\left|\zeta\right|}+\sum_{\begin{subarray}{c}
\lambda\in\Lambda_{0}\\
\lambda\le X
\end{subarray}}\lambda^{-\epsilon/2}\nonumber \\
 & \text{\ensuremath{\ll}}\sum_{0<\left|\zeta\right|\le5\sqrt{X}}\#\left\{ \lambda\in\Lambda_{0}:\,\lambda\le X,\,\exists\xi\in A_{\zeta}.\,\left|\left|\xi\right|^{2}-\lambda\right|\le L\right\} \cdot\sup_{\substack{r>X^{-1/2+\epsilon}}
}\frac{\left|J_{1}\left(r\left|\zeta\right|\right)\right|}{r\left|\zeta\right|}\nonumber \\
 & +X^{1-\epsilon/2}\ll X^{1/2+2\delta}\sum_{\zeta\ne0}\sup_{\substack{r>X^{-1/2+\epsilon}}
}\frac{\left|J_{1}\left(r\left|\zeta\right|\right)\right|}{r\left|\zeta\right|^{2}}+X^{1-\epsilon/2}\label{eq:AverageIneq}
\end{align}
where the last estimate follows from Lemma \ref{lem:MainBoundLambdas}.
Since $J_{1}\left(x\right)\ll\min\left\{ x,x^{-1/2}\right\} ,$ we
have
\[
\sup_{\substack{r>X^{-1/2+\epsilon}}
}\frac{\left|J_{1}\left(r\left|\zeta\right|\right)\right|}{r\left|\zeta\right|^{2}}\ll\min\left\{ \frac{1}{\left|\zeta\right|},\frac{X^{3/4-\frac{3}{2}\epsilon}}{\left|\zeta\right|^{5/2}}\right\} .
\]
Thus 
\begin{equation}
\sum_{\zeta\ne0}\sup_{\substack{r>X^{-1/2+\epsilon}}
}\frac{\left|J_{1}\left(r\left|\zeta\right|\right)\right|}{r\left|\zeta\right|^{2}}\ll\sum_{0<\left|\zeta\right|\le X^{1/2-\epsilon}}\frac{1}{\left|\zeta\right|}+X^{3/4-\frac{3}{2}\epsilon}\sum_{\left|\zeta\right|>X^{1/2-\epsilon}}\frac{1}{\left|\zeta\right|^{5/2}}.\label{eq:BesselSum}
\end{equation}
We have (e.g., by comparison to an integral, or by partial summation)
\begin{equation}
\sum_{0<\left|\zeta\right|\le X^{1/2-\epsilon}}\frac{1}{\left|\zeta\right|}\ll X^{1/2-\epsilon}\label{eq:SmallZetasBound}
\end{equation}
and
\begin{equation}
X^{3/4-\frac{3}{2}\epsilon}\sum_{\left|\zeta\right|>X^{1/2-\epsilon}}\frac{1}{\left|\zeta\right|^{5/2}}\ll X^{1/2-\epsilon}.\label{eq:BigZetasBound}
\end{equation}
Substituting the bounds (\ref{eq:SmallZetasBound}) and (\ref{eq:BigZetasBound})
into (\ref{eq:BesselSum}) and then substituting back into (\ref{eq:AverageIneq}),
we get 
\begin{align*}
\sum_{\substack{\lambda\in\Lambda_{0}\\
\lambda\le X
}
}\sup_{\substack{x\in\mathbb{T}^{2}\\
r>\lambda^{-1/2+\epsilon}
}
}\left|\frac{1}{\pi r^{2}}\int_{B_{x}\left(r\right)}\left|g_{\lambda,L}\left(y\right)\right|^{2}\,dy-1\right| & \ll X^{1+2\delta-\epsilon}+X^{1-\epsilon/2}\ll X^{1-\epsilon/2}.
\end{align*}
\end{proof}
\begin{proof}[Proof of Proposition \ref{prop:MainProp}]
Let $0<\delta=\delta\left(\epsilon\right)<\epsilon/6$, and let
\[
B_{1}^{\epsilon}=\left\{ \lambda\in\Lambda_{0}:\sup_{\substack{x\in\mathbb{T}^{2}\\
r>\lambda^{-1/2+\epsilon}
}
}\left|\frac{1}{\pi r^{2}}\int_{B_{x}\left(r\right)}\left|g_{\lambda,L}\left(y\right)\right|^{2}\,dy-1\right|>\lambda^{-\epsilon/4}\right\} .
\]
Then by the Markov inequality and by Lemma \ref{lem:MainBoundAverage},
\begin{align*}
\#\left\{ \lambda\in B_{1}^{\epsilon}:\,\lambda\le X\right\}  & \ll X^{\epsilon/4}\sum_{\substack{\lambda\in\Lambda_{0}\\
\lambda\le X
}
}\sup_{\substack{x\in\mathbb{T}^{2}\\
r>\lambda^{-1/2+\epsilon}
}
}\left|\frac{1}{\pi r^{2}}\int_{B_{x}\left(r\right)}\left|g_{\lambda,L}\left(y\right)\right|^{2}\,dy-1\right|\ll X^{1-\epsilon/4}.
\end{align*}
Thus
\[
\Lambda_{1}^{\epsilon}:=\Lambda_{0}\setminus B_{1}^{\epsilon}=\left\{ \lambda\in\Lambda_{0}:\sup_{\substack{x\in\mathbb{T}^{2}\\
r>\lambda^{-1/2+\epsilon}
}
}\left|\frac{1}{\pi r^{2}}\int_{B_{x}\left(r\right)}\left|g_{\lambda,L}\left(y\right)\right|^{2}\,dy-1\right|<\lambda^{-\epsilon/4}\right\} 
\]
is a density one subset in $\Lambda$. By a standard diagonal argument,
we obtain a set $\Lambda_{1}$ of density one in $\Lambda$ which
does not depend on $\epsilon$, such that for every $\epsilon>0$
\[
\sup_{\substack{x\in\mathbb{T}^{2}\\
r>\lambda^{-1/2+\epsilon}
}
}\left|\frac{1}{\pi r^{2}}\int_{B_{x}\left(r\right)}\left|g_{\lambda,L}\left(y\right)\right|^{2}\,dy-1\right|\to0
\]
along $\lambda\in\Lambda_{1},$ completing the proof of Proposition
\ref{prop:MainProp}.
\end{proof}

\section{Proofs of Theorem \ref{thm:MainThm3d} and Theorem \ref{thm:Thm3dAE}
\texorpdfstring{($d=3$)}{(d=3)}}

\subsection{Preliminary lemmas}

Let $d=3$, and recall that $\mathcal{N}_{3}$ is the set of numbers
expressible as a sum of three squares, and that the elements of $\Lambda$
(the set of new eigenvalues) interlace with the elements of $\mathcal{N}_{3}$.
We recall the following results proved in \cite{Yesha1}, which are
the three-dimensional analogues of Lemma \ref{lem:LowerBound} and
Lemma \ref{lem:TruncationLemma}.
\begin{lem}[{\cite[Lemma 3.1, Lemma 3.2]{Yesha1}}]
~\label{lem:LowerBndG3d}

\begin{enumerate}

\item For all $\lambda\in\Lambda$ and all $ \eta>0 $, we have $\left\Vert G_{\lambda}\right\Vert ^{2}\gg_{\eta}\lambda^{1/2-\eta}.$

\item Let $0<\delta<1$. For all $\lambda\in\Lambda$ and all $ \eta>0 $, we have $\left\Vert g_{\lambda}-g_{\lambda,L}\right\Vert ^{2}\ll_{\eta}\lambda^{-\delta+\eta}$
($L=\lambda^{\delta})$.

\item For all $\lambda\in\Lambda$ and all $ \eta>0 $, we have $\left\Vert G_{\lambda,L}\right\Vert ^{2}\gg_{\eta}\lambda^{1/2-\eta}$
($L=\lambda^{\delta})$.

\end{enumerate}

\end{lem}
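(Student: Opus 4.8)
The plan is to deduce part (2) from part (1) in exactly the way Lemma~\ref{lem:TruncationLemma} was deduced from Lemma~\ref{lem:LowerBound}; thus the substance lies in the two lower bounds (1) and (3), together with a single tail estimate for (2). I would treat the lower bounds first.

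For (1) and (3) the idea is to isolate one dominant term in the expansion $\|G_\lambda\|^2 = \sum_{n\ge 0} r_3(n)/(n-\lambda)^2$. The obstacle peculiar to three dimensions is that Siegel's lower bound (\ref{eq:SiegelBound}) controls $r_3(n)$ only through the $4$-free part $n_1$, so a term in which $n$ is a large power of $4$ times a small number is useless. The point that removes this difficulty is that every integer $m\equiv 2\pmod 4$ has, in the notation $m=4^{a}m_1$ with $4\nmid m_1$, exponent $a=0$ and $m_1=m$, and lies in $\mathcal{N}_3$ (it is $\equiv 2$ or $6\pmod 8$), so Siegel's theorem gives $r_3(m)\gg_\eta m^{1/2-\eta}$; moreover the nearest such integer to $\lambda$ is at distance at most $2$. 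Choosing that $m$, for $\lambda$ large one gets
\[
\|G_\lambda\|^2 \;\ge\; \frac{r_3(m)}{(m-\lambda)^2} \;\ge\; \frac{r_3(m)}{4} \;\gg_\eta\; \lambda^{1/2-\eta},
\]
which is (1); and since $|m-\lambda|\le 2 < L=\lambda^{\delta}$ once $\lambda$ is large, the very same term occurs in $\|G_{\lambda,L}\|^2$ and yields (3) by the identical estimate. This is the place where the bounded gaps of $\mathcal{N}_3$ enter, and it is the reason (1) and (3) hold for \emph{every} $\lambda\in\Lambda$ with no exceptional set, in contrast to the two-dimensional Lemma~\ref{lem:LowerBound}.

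For (2) I would copy the opening of the proof of Lemma~\ref{lem:TruncationLemma}, namely $\|g_\lambda-g_{\lambda,L}\|_2 \le 2\|G_\lambda-G_{\lambda,L}\|_2/\|G_\lambda\|_2$, so that it remains to bound
\[
\|G_\lambda-G_{\lambda,L}\|^2 \;=\; \sum_{\substack{n\ge 0\\ |n-\lambda|\ge L}} \frac{r_3(n)}{(n-\lambda)^2}.
\]
Inserting Siegel's upper bound $r_3(n)\ll_\eta n^{1/2+\eta}$ and splitting the sum at $n=2\lambda$: for $n\le 2\lambda$ bound $n^{1/2+\eta}$ by $\lambda^{1/2+\eta}$ and sum $(n-\lambda)^{-2}$ over $|n-\lambda|\ge L$ against an integral, getting $\ll_\eta \lambda^{1/2+\eta}/L$; for $n>2\lambda$ use $|n-\lambda|\ge n/2$ so the summand is $\ll n^{-3/2+\eta}$ and the tail is $\ll_\eta \lambda^{-1/2+\eta}$. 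Since $\delta<1$ the first term dominates, so $\|G_\lambda-G_{\lambda,L}\|^2 \ll_\eta \lambda^{1/2-\delta+\eta}$; dividing by $\|G_\lambda\|^2\gg_\eta \lambda^{1/2-\eta}$ from (1) and relabelling $\eta$ gives $\|g_\lambda-g_{\lambda,L}\|^2\ll_\eta \lambda^{-\delta+\eta}$.

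The one genuinely delicate step is the one flagged in the second paragraph — Siegel's bound only sees the $4$-free part of $n$, so a naive choice of nearby term in $\mathcal{N}_3$ can fail — and it is disposed of cleanly by restricting attention to integers $\equiv 2\pmod 4$; everything else is routine bookkeeping with convergent tails.
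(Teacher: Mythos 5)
The paper itself gives no proof of this lemma; it is cited from \cite[Lemma 3.1, Lemma 3.2]{Yesha1}. Your reconstruction is correct and, as far as I can tell, follows the same route as the cited source: the heart of the matter is, exactly as you identify, that Siegel's bound (\ref{eq:SiegelBound}) only controls $r_3(n)$ through the $4$-free part $n_1$, and you neutralise this by choosing the nearby integer $m\equiv 2\pmod 4$, for which $a=0$, $m_1=m$, and $m\equiv 2$ or $6\pmod 8$ (hence $m\in\mathcal{N}_3$); since such an $m$ lies within distance $2$ of any $\lambda$, the single term $r_3(m)/(m-\lambda)^2\ge r_3(m)/4\gg_\eta \lambda^{1/2-\eta}$ gives (1) and (3) for \emph{every} $\lambda\in\Lambda$, which is precisely why no exceptional set is needed here, unlike the two-dimensional Lemma~\ref{lem:LowerBound}. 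The estimate for (2) is a routine transcription of the argument for Lemma~\ref{lem:TruncationLemma}: the triangle-inequality reduction $\|g_\lambda-g_{\lambda,L}\|_2\le 2\|G_\lambda-G_{\lambda,L}\|_2/\|G_\lambda\|_2$, the upper bound $r_3(n)\ll_\eta n^{1/2+\eta}$ from (\ref{eq:SiegelBound}), the split of the tail at $n=2\lambda$, and finally part (1) all fit together exactly as you describe, with the harmless relabelling of $\eta$ at the end. I find no gaps.
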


As a corollary of Lemma \ref{lem:LowerBndG3d}, we state the following
lemma, which is the three-dimensional analogue of Lemma \ref{lem:ApproximationLemma}.
\begin{lem}
\label{lem:ApproximationLemma3d}Let $0<\delta<1$. For every $\lambda\in\Lambda$ and every $ \eta>0 $,
we have 
\[
\sup_{\substack{x\in\mathbb{T}^{3}\\
r\in\mathbb{R}
}
}\left|\int_{B_{x}\left(r\right)}\left|g_{\lambda}\left(y\right)\right|^{2}\,dy-\int_{B_{x}\left(r\right)}\left|g_{\lambda,L}\left(y\right)\right|^{2}\,dy\right|=O_{\eta}\left(\lambda^{\eta-\delta/2}\right),
\]
where $L=\lambda^{\delta}.$
\end{lem}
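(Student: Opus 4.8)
The plan is to deduce Lemma~\ref{lem:ApproximationLemma3d} from Lemma~\ref{lem:LowerBndG3d} in exactly the same way that Lemma~\ref{lem:ApproximationLemma} was deduced from Lemma~\ref{lem:LowerBound} and Lemma~\ref{lem:TruncationLemma} in the two-dimensional case, the only essential difference being that here the $L^2$-approximation rate is $\lambda^{-\delta+\eta}$ (from part (2) of Lemma~\ref{lem:LowerBndG3d}) rather than $\lambda^{\eta}/L$, and that it holds for \emph{all} $\lambda\in\Lambda$ rather than along a density-one subsequence.

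First I would fix $x\in\mathbb{T}^3$ and $r\in\mathbb{R}$ and write the difference of the two ball integrals as $\langle \mathbf{1}_{B_x(r)}g_\lambda,g_\lambda\rangle-\langle \mathbf{1}_{B_x(r)}g_{\lambda,L},g_{\lambda,L}\rangle$. Splitting this by adding and subtracting the mixed term $\langle \mathbf{1}_{B_x(r)}g_{\lambda,L},g_\lambda\rangle$, I bound it by $|\langle \mathbf{1}_{B_x(r)}(g_\lambda-g_{\lambda,L}),g_\lambda\rangle|+|\langle \mathbf{1}_{B_x(r)}g_{\lambda,L},g_\lambda-g_{\lambda,L}\rangle|$. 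Applying the Cauchy--Schwarz inequality to each term, and using that $\mathbf{1}_{B_x(r)}$ has operator norm $1$ together with $\|g_\lambda\|_2=\|g_{\lambda,L}\|_2=1$, this is at most $2\|g_\lambda-g_{\lambda,L}\|_2$. (This is precisely the chain of inequalities in the proof of Lemma~\ref{lem:ApproximationLemma}, verbatim with $\mathbb{T}^2$ replaced by $\mathbb{T}^3$.)

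Then I would invoke part (2) of Lemma~\ref{lem:LowerBndG3d}, which gives $\|g_\lambda-g_{\lambda,L}\|_2^2\ll_\eta \lambda^{-\delta+\eta}$, hence $\|g_\lambda-g_{\lambda,L}\|_2\ll_\eta \lambda^{(-\delta+\eta)/2}=\lambda^{-\delta/2+\eta/2}$. After relabelling $\eta/2$ as $\eta$, this yields the claimed bound $O_\eta(\lambda^{\eta-\delta/2})$, uniformly in $x$ and $r$, so the supremum over $x\in\mathbb{T}^3$ and $r\in\mathbb{R}$ satisfies the same estimate. Since every step is valid for every $\lambda\in\Lambda$, no density-one restriction is needed, which is exactly why the statement is phrased for all $\lambda\in\Lambda$.

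There is no real obstacle here: the argument is a soft, purely functional-analytic reduction, and all the arithmetic input has already been isolated in Lemma~\ref{lem:LowerBndG3d}. The only point requiring a modicum of care is bookkeeping the $\eta$'s through the square root (so that the final exponent is $\eta-\delta/2$ and not, say, $\eta-\delta$), and noting that the Cauchy--Schwarz step does not see the dimension at all. Thus the proof is a two- or three-line corollary of the preceding lemma.
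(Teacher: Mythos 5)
Your proposal is correct and is precisely what the paper does: it states only that the proof is ``similar to the proof of Lemma~\ref{lem:ApproximationLemma}'', meaning the identical Cauchy--Schwarz reduction to $2\left\Vert g_{\lambda}-g_{\lambda,L}\right\Vert _{2}$ followed by the bound from part (2) of Lemma~\ref{lem:LowerBndG3d}. Your bookkeeping of the exponent ($\eta/2$ relabelled as $\eta$ yields $\lambda^{\eta-\delta/2}$) and your observation that no density-one restriction is needed in dimension three are both exactly right.
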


\begin{proof}
Similar to the proof of Lemma \ref{lem:ApproximationLemma}.
\end{proof}
Taking Lemma \ref{lem:ApproximationLemma3d} into account, Theorem
\ref{thm:MainThm3d} immediately follows from the following proposition.
\begin{prop}
\label{prop:MainProp3d}Let $\epsilon>0$. We have
\[
\sup_{\substack{x\in\mathbb{T}^{2}\\
r>\lambda^{-1/12+\epsilon}
}
}\left|\frac{1}{\frac{4}{3}\pi r^{3}}\int_{B_{x}\left(r\right)}\left|g_{\lambda,L}\left(y\right)\right|^{2}\,dy-1\right|\to0
\]
as $\lambda\to\infty$ along $\Lambda$, where $L=\lambda^{\delta}$,
$0<\delta<\epsilon$. 
\end{prop}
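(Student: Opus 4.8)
The plan is to follow the same scheme as in the two-dimensional case, adapting each step to the geometry of $\mathbb{T}^3$ and to the $d=3$ arithmetic. First I would expand $\int_{B_x(r)}|g_{\lambda,L}(y)|^2\,dy$ exactly as in the lemma preceding Section~3: writing the square of the truncated Green's function as a double sum over $\xi,\xi'$ with $\bigl||\xi|^2-\lambda\bigr|<L$, integrating the plane wave $e^{i\langle y,\xi-\xi'\rangle}$ over the ball $B_x(r)\subset\mathbb{R}^3$ produces (after the substitution $z=(y-x)/r$) a factor $\tfrac43\pi r^3$ from the diagonal $\xi=\xi'$, and for the off-diagonal terms a three-dimensional Bessel-type factor. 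Concretely $\int_{B_0(1)}e^{i\langle z,w\rangle}\,dz$ equals a constant times $|w|^{-3/2}J_{3/2}(|w|)$, so the error term is governed by $\sum_{\zeta\ne0}e^{i\langle x-x_0,\zeta\rangle}\,\dfrac{J_{3/2}(r|\zeta|)}{(r|\zeta|)^{3/2}}\sum_{\xi}\bigl[(|\xi|^2-\lambda)(|\xi-\zeta|^2-\lambda)\bigr]^{-1}$, where the inner sum runs over $\xi$ with both $\bigl||\xi|^2-\lambda\bigr|<L$, $\bigl||\xi-\zeta|^2-\lambda\bigr|<L$, and $|\xi|^2\ne|\xi-\zeta|^2$. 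As in two dimensions, the constraint $\bigl||\xi|^2-\lambda\bigr|<L$ together with $\bigl||\xi-\zeta|^2-\lambda\bigr|<L$ forces $|\zeta|\ll\sqrt\lambda$ and $0<|\langle 2\xi-\zeta,\zeta\rangle|<2L\ll\lambda^\delta$, so the outer sum is finite and the inner $\xi$ is confined to a lattice-hyperplane slab.

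Next I would bound the error term. Apply Cauchy--Schwarz to the inner $\xi$-sum to replace it by $\|G_{\lambda,L}\|^2$, which is cancelled by the normalization; using part (3) of Lemma~\ref{lem:LowerBndG3d} ($\|G_{\lambda,L}\|^2\gg_\eta\lambda^{1/2-\eta}$), one is left controlling
\[
\frac{r^3}{\lambda^{1/2-\eta}}\sum_{0<|\zeta|\le C\sqrt\lambda}\#\bigl\{\xi\in\mathbb{Z}^3:\bigl||\xi|^2-\lambda\bigr|<L,\ 0<|\langle 2\xi-\zeta,\zeta\rangle|\le 3|\xi|^{2\delta}\bigr\}\cdot\frac{|J_{3/2}(r|\zeta|)|}{(r|\zeta|)^{3/2}}.
\]
The counting problem is the three-dimensional analogue of Lemma~\ref{lem:BUpperBound}: the condition $|\langle 2\xi-\zeta,\zeta\rangle|\le 3|\xi|^{2\delta}$ confines $\xi$ to $O(\lambda^\delta/|\zeta|)$ parallel lattice planes, and on each such plane the sphere condition $\bigl||\xi|^2-\lambda\bigr|<L$ restricts $\xi$ to an annulus of area $O(\sqrt\lambda\cdot L/\sqrt\lambda)=O(\lambda^\delta)$ on a two-dimensional lattice, giving $O(\lambda^{2\delta})$ points per plane and hence $O(\lambda^{3\delta}/|\zeta|)$ lattice points in total for each $\zeta$ (one should be a little careful and use the Siegel bound \eqref{eq:SiegelBound} only where needed, but the hyperplane-slab count is what controls this). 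Inserting $J_{3/2}(t)\ll\min\{t^{3/2},t^{-1/2}\}$, so that $|J_{3/2}(r|\zeta|)|/(r|\zeta|)^{3/2}\ll\min\{1,(r|\zeta|)^{-2}\}$, one splits the $\zeta$-sum at $|\zeta|\asymp r^{-1}$; with $r>\lambda^{-1/12+\epsilon}$ the small-$\zeta$ range contributes $\ll r^3\cdot\lambda^{3\delta}\cdot r^{-2}=r\lambda^{3\delta}\ll\lambda^{-1/12+\epsilon+3\delta}$ after dividing by $\lambda^{1/2}$ one must recheck the exponents, and the large-$\zeta$ tail is summable and of the same order. Choosing $0<\delta<\epsilon$ one gets a negative power of $\lambda$, uniformly in $x$ and in $r>\lambda^{-1/12+\epsilon}$, which is exactly the claim.

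The main obstacle is the lattice-point count: in three dimensions one no longer has the clean divisor-bound estimate $r_2(n)\ll n^\eta$, and $r_3(n)$ can be as large as $\lambda^{1/2+\eta}$, so a naive bound on the inner $\xi$-sum is too lossy. The key is that, unlike in the unrestricted problem, here $\xi$ is pinned to $O(\lambda^\delta/|\zeta|)$ lattice hyperplanes by the nonvanishing linear form $\langle 2\xi-\zeta,\zeta\rangle$ (this uses crucially that this form is \emph{nonzero}, i.e. $|\xi|^2\ne|\xi-\zeta|^2$, exactly as the remark after Lemma~\ref{lem:BUpperBound} warns), and on each hyperplane the sphere $|\xi|^2=\lambda+O(L)$ cuts out a thin two-dimensional annulus where the number of lattice points is under control. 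Getting the exponent $1/12$ rather than something worse amounts to balancing the hyperplane count $\lambda^\delta/|\zeta|$, the per-plane annulus count, the $|\zeta|$-summation, and the Bessel decay; I expect this to be the delicate bookkeeping, and the reason the two-dimensional radius $\lambda^{-1/2+\epsilon}$ degrades to $\lambda^{-1/12+\epsilon}$ here.
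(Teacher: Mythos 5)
Your proposal captures the right skeleton (expand the $L^2$-mass into a $\zeta$-sum, restrict $\xi$ to a thin spherical shell cut by lattice hyperplanes, use a lattice-point count plus the lower bound on $\|G_{\lambda,L}\|^2$), but it diverges from the paper in two structural ways, and each of these is a genuine gap rather than a bookkeeping issue.

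\textbf{The $\zeta$-sum does not converge without band-limiting.} You use the hard cutoff to the ball and the 3D Bessel factor, $|J_{3/2}(r|\zeta|)|/(r|\zeta|)^{3/2}\ll\min\{1,(r|\zeta|)^{-2}\}$. The allowed range of $\zeta$ is $|\zeta|\ll\sqrt\lambda$, and in $\mathbb{Z}^3$ one has $\sum_{1/r<|\zeta|\ll\sqrt\lambda}(r|\zeta|)^{-2}\asymp\sqrt\lambda/r^2$. Even granting the optimistic per-$\zeta$ count $N(\zeta)\ll\lambda^{2\delta+\eta}$ (the bound the paper actually proves, via Lemma~\ref{lem:SphericalStripsLemma}), the large-$\zeta$ tail of your error is of size $\lambda^{2\delta+2\eta}\sqrt\lambda/(\lambda^{1/2}r^2)=\lambda^{2\delta+2\eta}/r^2$, which \emph{blows up} as $\lambda\to\infty$; your assertion that ``the large-$\zeta$ tail is summable and of the same order'' is false. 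This is precisely why the paper does not imitate the 2D hard cutoff in dimension three: it replaces $\mathbf{1}_{B_x(r)}$ by Harman's Beurling--Selberg majorants/minorants $b_x^\pm$ (Lemma~\ref{lem:HarmanLemma}), whose Fourier transforms vanish for $|\zeta|\ge T$. The $\zeta$-sum is then finite, of length $\asymp T^3$, with $|\hat b_x^\pm(\zeta)|\ll r^3$, at the price of an error $O(r^2/T)$. Optimizing $T$ is what produces the exponent $-1/12$. Nothing of this appears in your proposal.

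\textbf{The inner sum cannot be bounded by a raw lattice count.} You bound $\sum_\xi\bigl[(|\xi|^2-\lambda)(|\xi-\zeta|^2-\lambda)\bigr]^{-1}$ by the number of contributing $\xi$. This fails whenever $|\xi|^2=n_\lambda$ (the element of $\mathcal{N}_3$ nearest $\lambda$): the factor $(n_\lambda-\lambda)^{-1}$ can be arbitrarily large, and in fact these terms are the dominant contribution to $\|G_{\lambda,L}\|^2$. The paper isolates them: it distinguishes $n_\lambda\in\mathcal{N}_0^\zeta$ (where Lemma~\ref{lem:LemmaN_0} forbids $\xi$ and $\xi-\zeta$ from both sitting on the sphere $|\cdot|^2=n_\lambda$) from $n_\lambda\in\mathcal{N}_1^\zeta$, and in the latter case the problematic sum $\sum^5$ is compared to $\|G_{\lambda,L}\|^2\gg r_3(n_\lambda)/(n_\lambda-\lambda)^2$ and Siegel's bound (\ref{eq:SiegelBound}) is invoked, together with the inequality $2^{a_\zeta}\le T$ to control the power-of-4 factor in $n_\lambda=4^an_1$. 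This arithmetic of $\mathcal{N}_0^\zeta$ vs.\ $\mathcal{N}_1^\zeta$ is entirely absent from your argument; without it the lower bound $\|G_{\lambda,L}\|^2\gg\lambda^{1/2-\eta}$ alone cannot close the estimate, since the $\sum^5$ contribution is genuinely comparable to $\|G_{\lambda,L}\|^2$ in size.

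Two smaller issues. Your per-plane ``annulus of area $O(\lambda^\delta)$ gives $O(\lambda^{2\delta})$ points'' is a heuristic, not a proof: the annulus has perimeter $\asymp\sqrt\lambda$, so the trivial count allows $O(\sqrt\lambda)$ boundary points; the paper's Lemma~\ref{lem:SphericalStripsLemma} circumvents this by reducing the per-$(n,m)$ count to the divisor bound $\tau(k)\ll_\eta k^\eta$ for a Pell-type form, which is what makes the estimate $\ll L\lambda^\eta$ per $n$. And the claimed total $N(\zeta)\ll\lambda^{3\delta}/|\zeta|$ is not established (nor used by the paper): the number of nonempty planes is $\asymp\lambda^\delta/\gcd(\zeta)$, and $\gcd(\zeta)$ can be $1$ for most $\zeta$, so the $1/|\zeta|$ saving does not come for free the way it does in the 2D averaged (density-one) argument via Lemma~\ref{lem:MainBoundLambdas}. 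Since Proposition~\ref{prop:MainProp3d} is a pointwise statement for \emph{every} $\lambda\in\Lambda$, you cannot import that averaging step either.
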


\begin{proof}[Proof of Theorem \ref{thm:MainThm3d} assuming Proposition \ref{prop:MainProp3d}]
 By Lemma \ref{lem:ApproximationLemma3d}, for every $x\in\mathbb{T}^{3}$, $r>\lambda^{-1/12+\epsilon},$
we have
\begin{align*}
\left|\frac{1}{\frac{4}{3}\pi r^{3}}\int_{B_{x}\left(r\right)}\left|g_{\lambda}\left(y\right)\right|^{2}\,dy-1\right| & \le\left|\frac{1}{\frac{4}{3}\pi r^{3}}\int_{B_{x}\left(r\right)}\left|g_{\lambda,L}\left(y\right)\right|^{2}\,dy-1\right|+O_{\eta}\left(\lambda^{\eta-\delta/2}\right).
\end{align*}
Hence,
\begin{align*}
\sup_{\substack{x\in\mathbb{T}^{2}\\
r>\lambda^{-1/12+\epsilon}
}
}\left|\frac{1}{\frac{4}{3}\pi r^{3}}\int_{B_{x}\left(r\right)}\left|g_{\lambda}\left(y\right)\right|^{2}\,dy-1\right| & \le\sup_{\substack{x\in\mathbb{T}^{2}\\
r>\lambda^{-1/12+\epsilon}
}
}\left|\frac{1}{\frac{4}{3}\pi r^{3}}\int_{B_{x}\left(r\right)}\left|g_{\lambda,L}\left(y\right)\right|^{2}\,dy-1\right|\\
 & +O_{\eta}\left(\lambda^{\eta-\delta/2}\right).
\end{align*}
Theorem \ref{thm:MainThm3d} now follows from Proposition \ref{prop:MainProp3d},
choosing $\delta<\epsilon$ and $\eta<\delta/2$.
\end{proof}
We will use the following approximation to the characteristic function
of a ball of radius $r$ (``Beurling-Selberg polynomials'') appearing
in the work of Harman \cite{Harman}.
\begin{lem}[{\cite[Lemma 4]{Harman}}]
\label{lem:HarmanLemma}Let $T,r>0$ such that $Tr>1$. There exist
trigonometric polynomials $a^{\pm}$ such that:

\begin{enumerate}
	\item  $a^{-}\left(y\right)\le\mathbf{1}_{B_{0}\left(r\right)}\left(y\right)\le a^{+}\left(y\right).$
	
	\item  $\hat{a}^{\pm}\left(\zeta\right)=0$ if $\left|\zeta\right|\ge T$.
	
	\item $\hat{a}^{\pm}\left(0\right)=\mathrm{vol}\left(B_{0}\left(r\right)\right)+O_{d}\left(r^{d-1}/T\right).$
	
	\item  $\left|\hat{a}^{\pm}\left(\zeta\right)\right|\ll_{d}r^{d}.$
\end{enumerate}

\end{lem}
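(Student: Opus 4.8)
The strategy is to build the required trigonometric polynomials by periodizing Beurling--Selberg extremal functions constructed on all of $\mathbb{R}^{d}$, so that the hard analytic work is done in the Euclidean setting, where one can exploit the radial symmetry of $\mathbf{1}_{B_{0}(r)}$. Throughout, use the normalization $\hat{f}(\zeta)=\int_{\mathbb{T}^{d}}f(y)e^{-i\langle y,\zeta\rangle}\,dy$ (so that the asserted size of $\hat{a}^{\pm}(0)$ is the unnormalized volume), and assume $r<\pi$, which is the only range relevant for the application; larger $r$ is handled by complementation.

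First I would produce functions $A^{\pm}\in L^{1}(\mathbb{R}^{d})$, continuous and with adequate decay, such that $\widehat{A^{\pm}}$ is supported in the open ball $B_{0}(T)$, such that $A^{-}\le\mathbf{1}_{B_{0}(r)}\le A^{+}$ pointwise on $\mathbb{R}^{d}$, and such that $\int_{\mathbb{R}^{d}}A^{\pm}=\mathrm{vol}(B_{0}(r))+O_{d}(r^{d-1}/T)$. Granting these, set $a^{\pm}(y)=\sum_{k\in\mathbb{Z}^{d}}A^{\pm}(y-2\pi k)$. Since the translates $B_{0}(r)+2\pi k$ are disjoint (as $r<\pi$), periodizing $\mathbf{1}_{B_{0}(r)}$ on $\mathbb{R}^{d}$ recovers $\mathbf{1}_{B_{0}(r)}$ on $\mathbb{T}^{d}$, and the pointwise inequalities survive term by term, giving (1). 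By Poisson summation (legitimate because $A^{\pm}$ is continuous with enough decay and $\widehat{A^{\pm}}$ has compact support), $a^{\pm}(y)=\sum_{\zeta\in\mathbb{Z}^{d}}\widehat{A^{\pm}}(\zeta)e^{i\langle y,\zeta\rangle}$, a finite sum with all frequencies in $\mathbb{Z}^{d}\cap B_{0}(T)$ — this is (2). Then $\hat{a}^{\pm}(0)=\int_{\mathbb{R}^{d}}A^{\pm}=\mathrm{vol}(B_{0}(r))+O_{d}(r^{d-1}/T)$ is (3), and $|\hat{a}^{\pm}(\zeta)|\le\int_{\mathbb{R}^{d}}|A^{\pm}|$; the latter is $\ll_{d}r^{d}$ because $A^{+}\ge0$ forces $\int|A^{+}|=\int A^{+}\ll_{d}r^{d}$ once $Tr>1$, while for $A^{-}$ the positive part is supported in $B_{0}(r)$ and bounded by $1$, so $\int(A^{-})^{+}\le\mathrm{vol}(B_{0}(r))$ and $\int(A^{-})^{-}=\int(A^{-})^{+}-\int A^{-}\ll_{d}r^{d-1}/T\le r^{d}$ — this is (4).

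The substance is the Euclidean construction of $A^{\pm}$ with the \emph{sharp-order} error $O_{d}(r^{d-1}/T)$. Because $\mathbf{1}_{B_{0}(r)}$ is radial, I would seek $A^{\pm}$ radial, $A^{\pm}(x)=I^{\pm}(|x|)$; via the Hankel (Fourier--Bessel) transform the requirement that $\widehat{A^{\pm}}$ be supported in $B_{0}(T)$ becomes the requirement that $I^{\pm}$ lie in the de Branges homogeneous space of type $T$ attached to the Bessel function $J_{(d-2)/2}$, while the mass satisfies $\int_{\mathbb{R}^{d}}A^{\pm}=|S^{d-1}|\int_{0}^{\infty}I^{\pm}(t)\,t^{d-1}\,dt$. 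Thus one is reduced to constructing the Beurling--Selberg majorant and minorant of the indicator of an interval $(-r,r)$ with respect to the weight $|t|^{d-1}\,dt$, with $L^{1}$-error $O_{d}(r^{d-1}/T)$ — exactly the weighted extremal problem solved by Holt and Vaaler; I would invoke (or, in the case $d=3$ needed here, make explicit, since $J_{1/2}(u)=\sqrt{2/(\pi u)}\,\sin u$ renders the de Branges space elementary and $\widehat{\mathbf{1}_{B_{0}(\rho)}}(\zeta)=4\pi|\zeta|^{-3}(\sin(\rho|\zeta|)-\rho|\zeta|\cos(\rho|\zeta|))$) their interpolation of $\mathbf{1}_{B_{0}(r)}$ at a sequence built from the zeros of a Bessel function. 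I expect this extremal construction to be the main obstacle: naive mollification of $\mathbf{1}_{B_{0}(r)}$ by a nonnegative band-limited kernel (say the square of $\widehat{\mathbf{1}_{B_{0}(T/2)}}$) is never a pointwise majorant or minorant — the standard Beurling--Selberg obstruction — and the exponent $d-1$, which matches the surface measure of $\partial B_{0}(r)$, shows the error is of optimal order and cannot come from a soft argument; it genuinely requires the Bessel/de Branges machinery. Once $A^{\pm}$ is in hand, the periodization, the justification of Poisson summation, the disjointness of the wrapped balls, and the bookkeeping of the constants $|S^{d-1}|$ and $(2\pi)^{d}$ are all routine.
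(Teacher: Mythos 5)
The paper does not prove this lemma; it is quoted verbatim from Harman's paper \cite{Harman}, so there is no in-paper argument to compare against. Judged on its own merits, your proposal is a correct reconstruction of the standard argument: periodize band-limited Euclidean majorants/minorants $A^{\pm}$ of $\mathbf{1}_{B_0(r)}$, use Poisson summation (the finitely many lattice points in $B_0(T)$ make the periodization a genuine trigonometric polynomial), and read off properties (1)--(4). Your bookkeeping for (3) and (4) is right, including the observation that $Tr>1$ turns $r^{d-1}/T$ into $O(r^d)$, and your treatment of the negative part of $A^{-}$ (positive part trapped in $B_0(r)$ and bounded by $1$, negative part with mass $O(r^{d-1}/T)$) is the correct way to get (4) for the minorant. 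You correctly diagnose that the heavy lifting is the Euclidean extremal construction and that naive mollification cannot produce a one-sided approximant; deferring to Holt--Vaaler (or, for $d=3$, the elementary $J_{1/2}$ computation you sketch) is appropriate, and is in the spirit of what Harman does. Two small points worth tightening if this were to be written out in full: the reduction from ``$\hat A^{\pm}$ supported in $B_0(T)$'' to ``the radial profile lies in the type-$T$ de Branges space for $J_{(d-2)/2}$'' is a genuine theorem, not a tautology, and should be cited; and the pointwise (not just $L^1$) convergence of $\sum_k A^{\pm}(y-2\pi k)$, needed to transfer the one-sided inequalities to $a^{\pm}$, requires either quantitative decay of the Holt--Vaaler functions or the monotonicity trick (for $A^{+}\ge 0$ the partial sums increase; for $A^{-}$ the $k\ne 0$ tail is termwise $\le 0$ on the fundamental domain since $r<\pi$), which you gesture at with ``adequate decay'' but do not spell out. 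Neither is a gap in the idea.
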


Given $x\in\mathbb{T}^{3}$, the polynomials $b_{x}^{\pm}\left(y\right)=a^{\pm}\left(y-x\right)$
satisfy 
\begin{equation}
b_{x}^{-}\left(y\right)\le\mathbf{1}_{B_{x}\left(r\right)}\left(y\right)\le b_{x}^{+}\left(y\right)\label{eq:Prop1b}
\end{equation}
and also satisfy properties (2)--(4) of Lemma \ref{lem:HarmanLemma}.
Proposition \ref{prop:MainProp3d} will follow from the following
proposition, which will be proved in the following subsections.
\begin{prop}
\label{prop:MainLemmaSmoothed} Let $\epsilon>0$. We have
\end{prop}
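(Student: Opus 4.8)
The plan is to prove Proposition~\ref{prop:MainLemmaSmoothed} by a Fourier-side expansion of $\int_{\mathbb{T}^{3}}b_{x}^{\pm}(y)\,|g_{\lambda,L}(y)|^{2}\,dy$, and then to deduce Proposition~\ref{prop:MainProp3d} from the pinching (\ref{eq:Prop1b}): since $|g_{\lambda,L}|^{2}\geq 0$, once we know that $\int_{\mathbb{T}^{3}}b_{x}^{\pm}(y)\,|g_{\lambda,L}(y)|^{2}\,dy=\tfrac{4}{3}\pi r^{3}(1+o(1))$ uniformly in $x\in\mathbb{T}^{3}$ and $r>\lambda^{-1/12+\epsilon}$, the same follows for $\int_{B_{x}(r)}|g_{\lambda,L}|^{2}$. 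I would run Lemma~\ref{lem:HarmanLemma} with the cut-off $T$ a fixed power of $\lambda$ comparable to $\lambda^{1/12}$ (say $T=\lambda^{1/12}$): this forces $Tr>1$ for every admissible $r$, so properties (3)--(4) are available, while keeping $T$ small enough for the lattice-point estimates below. Writing $b_{x}^{\pm}(y)=\sum_{|\zeta|<T}\hat{a}^{\pm}(\zeta)\,e^{i\langle y-x,\zeta\rangle}$ and expanding $|g_{\lambda,L}|^{2}$ as in the two-dimensional case, the $\zeta=0$ term equals $\hat{a}^{\pm}(0)\,\|g_{\lambda,L}\|_{2}^{2}=\tfrac{4}{3}\pi r^{3}+O(r^{2}/T)$ (by property (3), and since $g_{\lambda,L}$ is $L^{2}$-normalized), and the error term is $o(r^{3})$ precisely because $Tr\to\infty$.

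Everything then reduces to bounding the off-diagonal contribution
\[
\frac{1}{\|G_{\lambda,L}\|^{2}}\sum_{0<|\zeta|<T}\hat{a}^{\pm}(\zeta)\,e^{i\langle x-x_{0},\zeta\rangle}\sum_{\substack{\left|\,|\xi|^{2}-\lambda\right|<L\\ \left|\,|\xi-\zeta|^{2}-\lambda\right|<L}}\frac{1}{(|\xi|^{2}-\lambda)(|\xi-\zeta|^{2}-\lambda)}
\]
by $o(r^{3})$, uniformly in $x$ and for \emph{every} $\lambda\in\Lambda$ (not merely a density-one subset) --- this uniformity is the whole point of Theorem~\ref{thm:MainThm3d}. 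Using $|\hat{a}^{\pm}(\zeta)|\ll r^{3}$ (property (4)) one is reduced to estimating, for each fixed $\zeta\neq0$, the inner $\xi$-sum, where the two truncation conditions confine $\xi$ to the thin slab $|\langle 2\xi-\zeta,\zeta\rangle|<2L$. I would split off the ``on-sphere'' terms $|\xi|^{2}=|\xi-\zeta|^{2}$ from the rest. For the ``off-sphere'' terms the partial-fraction identity
\[
\frac{1}{(|\xi|^{2}-\lambda)(|\xi-\zeta|^{2}-\lambda)}=\frac{1}{\langle 2\xi-\zeta,\zeta\rangle}\Bigl(\frac{1}{|\xi-\zeta|^{2}-\lambda}-\frac{1}{|\xi|^{2}-\lambda}\Bigr)
\]
supplies the saving $|\langle 2\xi-\zeta,\zeta\rangle|^{-1}$; combining it with the observation that once $n=|\xi|^{2}$ and $n'=|\xi-\zeta|^{2}$ are fixed the admissible $\xi$ lie on a circle (a sphere met by a plane normal to $\zeta$, whose lattice has covolume $\asymp|\zeta|$), hence number $O_{\eta}(\lambda^{\eta})$, together with the lower bound $\|G_{\lambda,L}\|^{2}\gg_{\eta}\lambda^{1/2-\eta}$ of Lemma~\ref{lem:LowerBndG3d}(3), should give a per-$\zeta$ bound decaying in $|\zeta|$ fast enough that multiplying by $r^{3}$ and summing over the $\ll T^{3}$ admissible $\zeta$ leaves $o(r^{3})$. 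The arithmetic of this balance --- the count $\ll T^{3}$, the $|\zeta|$-decay, the saving $\lambda^{-1/2+\eta}$ from $\|G_{\lambda,L}\|^{-2}$, and the constraint $r>\lambda^{-1/12+\epsilon}$ --- is what pins the exponent to $1/12$.

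The on-sphere terms are the real obstacle, and where I expect essentially all the work to go. After interchanging summations they form
\[
\frac{1}{\|G_{\lambda,L}\|^{2}}\sum_{\substack{n\in\mathcal{N}_{3}\\ |n-\lambda|<L}}\frac{1}{(n-\lambda)^{2}}\sum_{\substack{\xi\neq\xi'\\ |\xi|^{2}=|\xi'|^{2}=n\\ 0<|\xi-\xi'|<T}}\hat{a}^{\pm}(\xi-\xi')\,e^{i\langle x-x_{0},\xi-\xi'\rangle},
\]
which, since $\sum_{n}r_{3}(n)(n-\lambda)^{-2}/\|G_{\lambda,L}\|^{2}=1$, is a convex combination of the small-scale equidistribution defects of the individual Laplace eigenfunctions $\sum_{|\xi|^{2}=n}e^{i\langle\cdot-x_{0},\xi\rangle}$ tested against $b_{x}^{\pm}$, with the weights heavily concentrated on the few spheres nearest $\lambda$. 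Concretely, I would estimate the number of close pairs $\#\{(\xi,\xi'):|\xi|^{2}=|\xi'|^{2}=n,\ 0<|\xi-\xi'|<T\}$ uniformly in $n$ and play it against the \emph{two} available lower bounds, $\|G_{\lambda,L}\|^{2}\gg_{\eta}\lambda^{1/2-\eta}$ and the elementary $\|G_{\lambda,L}\|^{2}\geq r_{3}(n^{*})(n^{*}-\lambda)^{-2}$ with $n^{*}\in\mathcal{N}_{3}$ closest to $\lambda$ (note $|n^{*}-\lambda|\ll1$ since $\mathcal{N}_{3}$ has bounded gaps). The second bound disposes of the delicate case in which $\lambda$ lies unusually close to a sphere $|\xi|^{2}=4^{a}m$ carrying few representations: there the lattice points are spaced $\gg 2^{a}$ apart, hence there are no close pairs within distance $T$ at all once $T$ is a small enough power of $\lambda$, while the potentially large factor $(n^{*}-\lambda)^{-2}$ is cancelled by the same factor inside $\|G_{\lambda,L}\|^{2}$. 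Pushing this uniform-in-$\lambda$ bookkeeping through at the scale $T\asymp\lambda^{1/12}$ is the heart of the matter; it is also the origin of the loss relative to Theorem~\ref{thm:Thm3dAE}, where one is free to throw away a density-zero set of $\lambda$ (equivalently, to assume the nearest spheres are ``generic'') and thereby reach the exponent $1/6$.
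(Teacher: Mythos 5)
Your high-level plan matches the paper's: expand $\int b_x^\pm|g_{\lambda,L}|^2$ via Lemma~\ref{lem:HarmanLemma}, extract the main term $\tfrac{4}{3}\pi r^3+O(r^2/T)$ from $\zeta=0$, take $T$ near $\lambda^{1/12}$, bound $|\hat{a}^\pm(\zeta)|\ll r^3$ and sum over $\ll T^3$ nonzero $\zeta$, use $\|G_{\lambda,L}\|^2\gg_\eta\lambda^{1/2-\eta}$, and invoke a divisor-bound count on intersections of spheres with hyperplanes (the content of Lemma~\ref{lem:SphericalStripsLemma}) together with Siegel's lower bound on $r_3$. You also correctly identify the $4$-adic structure as the key uniformity input. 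However, your \emph{organization} of the off-diagonal sum is different: the paper classifies, for each fixed $\zeta$, whether $n_\lambda\in\mathcal{N}_0^\zeta$ or $n_\lambda\in\mathcal{N}_1^\zeta$ (Lemma~\ref{lem:LemmaN_0}), and then performs a Cauchy--Schwarz in $\xi$ keeping $(|\xi|^2-\lambda)^{-2}$ inside a square root against $\|G_{\lambda,L}\|$; you instead split each $\zeta$-sum by whether $|\xi|^2=|\xi-\zeta|^2$ and propose a partial-fraction identity in place of Cauchy--Schwarz.

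That reorganization creates a genuine gap in the ``off-sphere'' branch. You propose to treat \emph{all} off-sphere terms with the single lower bound $\|G_{\lambda,L}\|^2\gg_\eta\lambda^{1/2-\eta}$. But after partial fractions and the circle count, the contribution from $|\xi|^2=n_\lambda$ with $|\xi-\zeta|^2=n'\neq n_\lambda$ produces a factor $\gg 1/|n_\lambda-\lambda|$, and $(|n_\lambda-\lambda|\cdot\lambda^{1/2-\eta})^{-1}$ is \emph{unbounded} when $\lambda$ is extremely close to $n_\lambda$ (e.g. $|n_\lambda-\lambda|\ll\lambda^{-1}$). The paper never sees this problem because its Cauchy--Schwarz step keeps $(n_\lambda-\lambda)^{-2}$ together with the \emph{same} $(n_\lambda-\lambda)^{-2}$ hiding inside $\|G_{\lambda,L}\|^2\geq r_3(n_\lambda)/(n_\lambda-\lambda)^2$, so the singular factor cancels. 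Your plan invokes the elementary bound $\|G_{\lambda,L}\|^2\geq r_3(n^*)(n^*-\lambda)^{-2}$ only for the on-sphere block; it must also be brought to bear on the off-sphere terms with $n=n_\lambda$ (one can interpolate between the two lower bounds on $\|G_{\lambda,L}\|^2$ using $\min(a,b)\leq\sqrt{ab}$ to recover the paper's per-$\zeta$ estimate $\ll L^{1/2}\lambda^{\eta-1/4}$, but as written your argument does not do this). Two smaller remarks: the alleged per-$\zeta$ bound ``decaying in $|\zeta|$'' is not needed and is not actually produced by your computation ($|\langle 2\xi-\zeta,\zeta\rangle|^{-1}$ is only $\leq1$; the paper simply multiplies by the count $\ll T^3$), and your ``lattice points spaced $\gg 2^a$'' heuristic, while morally right, is formalized in the paper as Lemma~\ref{lem:LemmaN_0}, which is applied \emph{per $\zeta$} to exclude the nearest sphere, not only to rule out on-sphere pairs.
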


\[
\sup_{\substack{x\in\mathbb{T}^{2}\\
r>\lambda^{-1/12+\epsilon}
}
}\left|\frac{1}{\frac{4}{3}\pi r^{3}}\int_{\mathbb{T}^{3}}b_{x}^{\pm}\left(y\right)\left|g_{\lambda,L}\left(y\right)\right|^{2}\,dy-1\right|\to0
\]
as $\lambda\to\infty$ along $\Lambda$, where $L=\lambda^{\delta}$,
$0<\delta<\epsilon$.
\begin{proof}[Proof of Proposition \ref{prop:MainProp3d} assuming Proposition \ref{prop:MainLemmaSmoothed}]
 For every $x\in\mathbb{T}^{3},\,r>\lambda^{-1/12+\epsilon},$ the
bounds (\ref{eq:Prop1b}) imply
\begin{align*}
\frac{1}{\frac{4}{3}\pi r^{3}}\int_{\mathbb{T}^{3}}b_{x}^{-}\left(y\right)\left|g_{\lambda,L}\left(y\right)\right|^{2}\,dy & \le\frac{1}{\frac{4}{3}\pi r^{3}}\int_{B_{x}\left(r\right)}\left|g_{\lambda,L}\left(y\right)\right|^{2}\,dy\\
 & \le\frac{1}{\frac{4}{3}\pi r^{3}}\int_{\mathbb{T}^{3}}b_{x}^{+}\left(y\right)\left|g_{\lambda,L}\left(y\right)\right|^{2}\,dy.
\end{align*}
Proposition \ref{prop:MainProp3d} now clearly follows from Proposition
\ref{prop:MainLemmaSmoothed}.
\end{proof}

\subsection{An \texorpdfstring{$L^{2}$}{L2}-mass expansion}

Our starting point towards proving Proposition \ref{prop:MainLemmaSmoothed},
as well as proving Theorem \ref{thm:Thm3dAE}, is the following expansion.

\begin{lem}
\label{lem:ExpansionLemma}Let $T,r>0$ such that $Tr>1$, and let
$0<\delta<1$, $L=\lambda^{\delta}.$ We have
\begin{align}
 & \int_{\mathbb{T}^{3}}b_{x}^{\pm}\left(y\right)\left|g_{\lambda,L}\left(y\right)\right|^{2}\,dy-\frac{4}{3}\pi r^{3}\nonumber \\
 & =\frac{1}{\left\Vert G_{\lambda,L}\right\Vert ^{2}}\sum_{0<\left|\zeta\right|<T}\hat{b}_{x}^{\pm}\left(\zeta\right)e^{-i\left\langle x_{0},\zeta\right\rangle }\sum_{\begin{subarray}{c}
\left|\left|\xi\right|^{2}-\lambda\right|<L\\
\left|\left|\xi-\zeta\right|^{2}-\lambda\right|<L
\end{subarray}}\frac{1}{\left(\left|\xi\right|^{2}-\lambda\right)\left(\left|\xi-\zeta\right|^{2}-\lambda\right)}+O\left(r^{2}/T\right).\label{eq:3dFormulaAfterExpansion}
\end{align}
\end{lem}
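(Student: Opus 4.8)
The plan is to expand $|g_{\lambda,L}(y)|^2$ into its Fourier series and integrate term by term against the trigonometric polynomial $b_x^{\pm}$, exactly as in the two-dimensional expansion proved earlier, but now using the Beurling--Selberg majorant/minorant in place of the exact indicator $\mathbf{1}_{B_x(r)}$. First I would write
\[
\int_{\mathbb{T}^3} b_x^{\pm}(y)\,|g_{\lambda,L}(y)|^2\,dy
=\frac{1}{\|G_{\lambda,L}\|^2}\sum_{\substack{||\xi|^2-\lambda|<L\\||\xi'|^2-\lambda|<L}}\frac{e^{-i\langle x_0,\xi-\xi'\rangle}}{(|\xi|^2-\lambda)(|\xi'|^2-\lambda)}\int_{\mathbb{T}^3} b_x^{\pm}(y)\,e^{i\langle y,\xi-\xi'\rangle}\,dy,
\]
and recognize the inner integral as $\widehat{b_x^{\pm}}(-(\xi-\xi'))$ up to normalization. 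Setting $\zeta=\xi-\xi'$ and separating the diagonal term $\zeta=0$ (which contributes $\widehat{b_x^{\pm}}(0)/\|G_{\lambda,L}\|^2\cdot\sum_{||\xi|^2-\lambda|<L}(|\xi|^2-\lambda)^{-2}=\widehat{b_x^{\pm}}(0)$, since that last sum is precisely $\|G_{\lambda,L}\|^2$), I get the main sum over $\zeta\neq 0$ plus the constant $\widehat{b_x^{\pm}}(0)$.

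Next I would invoke property (3) of Lemma \ref{lem:HarmanLemma} (with $d=3$), which gives $\widehat{b_x^{\pm}}(0)=\mathrm{vol}(B_0(r))+O(r^2/T)=\frac{4}{3}\pi r^3+O(r^2/T)$; this produces the main term $\frac{4}{3}\pi r^3$ and the error $O(r^2/T)$ on the right-hand side of \eqref{eq:3dFormulaAfterExpansion}. Property (2) of Lemma \ref{lem:HarmanLemma} restricts the $\zeta$-sum to $0<|\zeta|<T$, since $\widehat{b_x^{\pm}}(\zeta)=0$ for $|\zeta|\ge T$. Finally, each $\xi'$ with $\xi'=\xi-\zeta$ automatically satisfies $||\xi'|^2-\lambda|=||\xi-\zeta|^2-\lambda|<L$, so the inner double sum collapses to the single sum over $\xi$ with $||\xi|^2-\lambda|<L$ and $||\xi-\zeta|^2-\lambda|<L$ displayed in the statement, and the phase $e^{-i\langle x_0,\xi-\xi'\rangle}$ becomes $e^{-i\langle x_0,\zeta\rangle}$ (which factors out of the $\xi$-sum). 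Assembling these pieces yields \eqref{eq:3dFormulaAfterExpansion}.

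There is no genuine obstacle here: the lemma is a bookkeeping identity, and the only points requiring care are (i) correctly normalizing the Fourier coefficient $\widehat{b_x^{\pm}}$ so that the diagonal term reproduces $\|G_{\lambda,L}\|^2$ in the denominator and cancels, and (ii) checking that the hypothesis $Tr>1$ is exactly what is needed to apply Lemma \ref{lem:HarmanLemma} to construct $a^{\pm}$ and hence $b_x^{\pm}$. Unlike the $d=2$ case, one does \emph{not} need to isolate the ``same-circle'' contribution $|\xi|^2=|\xi'|^2$ separately at this stage, because the smoothing via $b_x^{\pm}$ has replaced the Bessel-function kernel $J_1(r|\zeta|)/(r|\zeta|)$ by the compactly supported coefficients $\widehat{b_x^{\pm}}(\zeta)$; the separation of same-circle versus different-circle pairs, together with the Bourgain--Rudnick-type input, will instead be handled in the subsequent subsections when \eqref{eq:3dFormulaAfterExpansion} is estimated.
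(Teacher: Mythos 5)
Your proof is correct and follows essentially the same route as the paper: expand $|g_{\lambda,L}|^2$ into its Fourier series, recognize the integral against $b_x^{\pm}$ as the Fourier coefficient $\hat b_x^{\pm}(\xi-\xi')$, set $\zeta=\xi-\xi'$, extract the diagonal $\zeta=0$ term via property (3) of Lemma \ref{lem:HarmanLemma} (giving $\frac{4}{3}\pi r^3+O(r^2/T)$) and truncate to $|\zeta|<T$ via property (2). The paper's proof is a two-line version of exactly this bookkeeping, so nothing is missing.
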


\begin{proof}
We expand
\[
\int_{\mathbb{T}^{3}}b_{x}^{\pm}\left(y\right)\left|g_{\lambda,L}\left(y\right)\right|^{2}\,dy=\frac{1}{\left\Vert G_{\lambda,L}\right\Vert ^{2}}\sum_{\begin{subarray}{c}
\left|\left|\xi\right|^{2}-\lambda\right|<L\\
\left|\left|\xi'\right|^{2}-\lambda\right|<L
\end{subarray}}\frac{e^{-i\left\langle x_{0},\xi-\xi'\right\rangle }}{\left(\left|\xi\right|^{2}-\lambda\right)\left(\left|\xi'\right|^{2}-\lambda\right)}\hat{b}_x^\pm(\xi-\xi')
\]
and writing $\zeta=\xi-\xi'$, the lemma follows from properties (2)
and (3) of Lemma \ref{lem:HarmanLemma}.
\end{proof}
For each $\zeta\ne0$, denote $\left|\zeta\right|^{2}=n_{\zeta}=4^{a_{\zeta}}n_{1}^{\zeta}$
with $4\nmid n_{1}^{\zeta}$.
\begin{defn}
Define
\[
\mathcal{N}_{0}^{\zeta}=\left\{ n\in\mathcal{N}_{3}:\,n=4^{a}n_{1},\,4\nmid n_{1}\Rightarrow a>a_{\zeta}\right\} 
\]
and
\[
\mathcal{N}_{1}^\zeta=\left\{ n\in\mathcal{N}_{3}:\,n=4^{a}n_{1},\,4\nmid n_{1}\Rightarrow a\le a_{\zeta}\right\} .
\]
Denote by $n_{\lambda}$ the element of $\mathcal{N}_{3}$ which is
closest to $\lambda\in\Lambda$ (if there are two elements with the
same distance, we take the $n_{\lambda}$ to be the smallest among
them). We recall the following results, proved in \cite{Yesha1}.
\end{defn}

\begin{lem}[{\cite[Lemma 3.6, Corollary 3.7, Lemma 3.8]{Yesha1}}]
\label{lem:LemmaN_0}~

\begin{enumerate}
	\item  For every $\xi\in\mathbb{Z}^{3}$, if $\left|\xi\right|^{2}=\left|\xi-\zeta\right|^{2},$
	then $\left|\xi\right|^{2}\in\mathcal{N}_{1}^{\zeta}.$ 
	
	\item For every $\xi\in\mathbb{Z}^{3}$, if $\left|\xi\right|^{2}\in\mathcal{N}_{0}^{\zeta}$,
	then $\left|\left|\xi\right|^{2}-\left|\xi-\zeta\right|^{2}\right|\ge1$.
	
	\item If $n_{\lambda}\in\mathcal{N}_{0}^{\zeta}$, then $\left|\left|\xi\right|^{2}-\lambda\right|<1/2\Rightarrow\left|\left|\xi-\zeta\right|^{2}-\lambda\right|>1/2$,
	for every $\xi\in\mathbb{Z}^{3}$.
\end{enumerate}

\end{lem}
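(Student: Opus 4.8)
The plan is to deduce all three parts from one elementary $2$-adic fact about sums of three squares, namely that \emph{if $m = x^{2}+y^{2}+z^{2}$ with $m \equiv 0 \pmod 4$, then $x$, $y$ and $z$ are all even}. This is immediate since every square is $\equiv 0$ or $1 \pmod 4$, so three of them can sum to $0 \pmod 4$ only if each is $\equiv 0 \pmod 4$. Iterating (dividing $\eta$ by $2$ divides $|\eta|^{2}$ by $4$), I would record the consequence that for any $\eta \in \mathbb{Z}^{3}$ and any integer $k \ge 0$, $4^{k} \mid |\eta|^{2}$ forces $2^{k}$ to divide every component of $\eta$.

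For part (1), suppose $|\xi|^{2} = |\xi-\zeta|^{2}$ and write $|\xi|^{2} = 4^{a}n_{1}$ with $4 \nmid n_{1}$; I would argue by contradiction, assuming $|\xi|^{2} \in \mathcal{N}_{0}^{\zeta}$, i.e.\ $a \ge a_{\zeta}+1$. Then $4^{a_{\zeta}+1}$ divides both $|\xi|^{2}$ and $|\xi-\zeta|^{2}$, so $2^{a_{\zeta}+1}$ divides every component of $\xi$ and of $\xi-\zeta$, hence of $\zeta = \xi-(\xi-\zeta)$, whence $4^{a_{\zeta}+1} \mid |\zeta|^{2} = n_{\zeta}$, contradicting $n_{\zeta}=4^{a_{\zeta}}n_{1}^{\zeta}$ with $4 \nmid n_{1}^{\zeta}$. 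Since $\mathcal{N}_{3}$ is the disjoint union $\mathcal{N}_{0}^{\zeta} \sqcup \mathcal{N}_{1}^{\zeta}$, this gives $|\xi|^{2} \in \mathcal{N}_{1}^{\zeta}$.

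Parts (2) and (3) I expect to fall out of (1) using only integrality and the triangle inequality. For (2): $\bigl| |\xi|^{2}-|\xi-\zeta|^{2} \bigr|$ is a non-negative integer, vanishing exactly when $|\xi|^{2}=|\xi-\zeta|^{2}$, which by (1) cannot occur for $|\xi|^{2} \in \mathcal{N}_{0}^{\zeta}$; hence it is $\ge 1$. For (3): if $\bigl| |\xi|^{2}-\lambda \bigr| < 1/2$ then $|\xi|^{2}$ is the only integer within distance $1/2$ of $\lambda$, so $|\xi|^{2} = n_{\lambda} \in \mathcal{N}_{0}^{\zeta}$; and if also $\bigl| |\xi-\zeta|^{2}-\lambda \bigr| \le 1/2$, the triangle inequality would give $\bigl| |\xi|^{2}-|\xi-\zeta|^{2} \bigr| < 1$, forcing $|\xi|^{2}=|\xi-\zeta|^{2}$ and contradicting (1). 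The only delicate point is the $2$-adic bookkeeping in part (1) --- keeping straight the relation between the exponent $a$ in the normalization $m=4^{a}m_{1}$, $4\nmid m_{1}$, and divisibility of the underlying integer vectors --- but once the mod-$4$ step is in hand, nothing else is more than arithmetic over $\mathbb{Z}$.
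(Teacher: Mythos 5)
Your proof is correct, and the core $2$-adic lemma you use (that $4\mid x^{2}+y^{2}+z^{2}$ forces $x,y,z$ all even, iterated to show $4^{k}\mid|\eta|^{2}\Rightarrow 2^{k}\mid\eta$ componentwise) is exactly the right engine for all three parts; the paper only cites \cite[Lemma 3.6, Corollary 3.7, Lemma 3.8]{Yesha1} without reproducing the argument, but this is the standard proof of that result. One small remark: the disjointness of $\mathcal{N}_{0}^{\zeta}$ and $\mathcal{N}_{1}^{\zeta}$ that you invoke is only literally true for nonzero $n$ (the conditions are vacuous at $n=0$), but this is harmless since $\zeta\neq 0$ rules out $\xi=\xi-\zeta$ and hence $|\xi|^{2}=0$ under the hypothesis of part (1), and the edge cases of parts (2) and (3) with $|\xi|^{2}=0$ can be checked directly.
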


To evaluate the right-hand-side of (\ref{eq:3dFormulaAfterExpansion}),
we consider separately the summation over $0<\left|\zeta\right|<T$
such that $n_{\lambda}\in\mathcal{N}_{0}^{\zeta}$, and over $0<\left|\zeta\right|<T$
such that $n_{\lambda}\in\mathcal{N}_{1}^{\zeta}$. Our main tool
for evaluating the sums in both cases, will be the following estimate
for the number of lattice points inside spherical strips, which is
a uniform version of \cite[Lemma A.1]{Yesha1}.
\begin{lem}
\label{lem:SphericalStripsLemma}Let $C_{1},C_{2},C_{3}>0$, $0<\delta<1$,
$L=\lambda^{\delta},$ $0<\left|\zeta\right|\le C_{1}\sqrt{\lambda}$.
For every $n\in\mathcal{N}_{3}$ satisfying $\left|n-\lambda\right|<C_{2}L$ and every $ \eta>0 $,
we have
\[
\#\left\{ \xi\in\mathbb{Z}^{3}:\,\left|\xi\right|^{2}=n,\,\left|\left\langle 2\xi-\zeta,\zeta\right\rangle \right|<C_{3}L\right\} \ll_{C_{1},C_{2},C_{3},\eta}L\lambda^{\eta}.
\]
\end{lem}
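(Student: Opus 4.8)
The plan is to count lattice points $\xi$ on the sphere $|\xi|^2 = n$ which additionally lie in the thin slab $|\langle 2\xi - \zeta,\zeta\rangle| < C_3 L$, i.e.\ between two parallel planes with normal direction $\zeta$. Geometrically, the intersection of the sphere of radius $\sqrt n \asymp \sqrt\lambda$ with a slab of thickness $\asymp L/|\zeta|$ (measured perpendicular to $\zeta$) is a spherical zone of small width; lattice points on it, being also on the sphere, lie on a bounded number of \emph{planar} circular arcs, each of which is the intersection of $|\xi|^2 = n$ with a plane $\langle \xi,\zeta\rangle = c$ for some $c$ in an arithmetic progression with common difference $\gcd$ of the coordinates of $\zeta$, and within a window of length $O(C_3 L)$. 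So the first step is to reduce, exactly as in the two-dimensional slab argument of Lemma \ref{lem:BUpperBound}, to bounding, for each admissible value $c$ of $\langle \xi,\zeta\rangle$, the number of $\xi \in \mathbb{Z}^3$ with $|\xi|^2 = n$ and $\langle \xi,\zeta\rangle = c$, and then summing over the $O(L/|\zeta| + 1)$ relevant values of $c$.

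The second step is the count on a single such slice. Fixing $\langle \xi,\zeta\rangle = c$ pins $\xi$ to a plane; intersecting with $|\xi|^2 = n$ gives a circle (or point, or empty set) in that plane of radius $\ll \sqrt n \ll \sqrt\lambda$, and we must bound $\mathbb{Z}^3$-points on it. Two regimes: if the circle is small, we use a trivial bound; if it is not small, we invoke the classical fact that a circle of radius $R$ in a plane contains at most $O_\eta(R^\eta)$ lattice points of the ambient integer lattice (this is the three-dimensional incarnation of the divisor-type bound (\ref{eq:r2nBound}), and is precisely how \cite[Lemma A.1]{Yesha1} is proved — the planar section lattice has covolume $\asymp |\zeta|$, so the relevant quadratic form has bounded coefficients and $r_2$-type bounds apply). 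Thus each slice carries $\ll_\eta \lambda^\eta$ points. Summing over the $\ll L/|\zeta| + 1 \ll L$ slices (using $|\zeta| \ge 1$) gives the claimed $\ll_\eta L\lambda^\eta$.

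The point that needs care — and which is the real content over the naive slab count — is \emph{uniformity in $\zeta$ and $n$}. The thickness of the slab in the $\zeta$-direction is $C_3 L/(2|\zeta|)$, so when $|\zeta|$ is large the slab is thin and there are few slices (good), but when $|\zeta|$ is of size $\asymp 1$ the slab is as thick as $\asymp L$ and one genuinely has $\asymp L$ slices, each contributing $\lambda^\eta$; this is where the factor $L$ in the bound comes from and why it cannot be improved in general. The constants $C_1, C_2, C_3$ only affect the count by bounded multiplicative factors (they rescale the slab thickness and the range of $n$), so tracking them is routine. The main obstacle is making the per-slice $\lambda^\eta$ bound genuinely uniform: one must verify that the planar section lattice, as $\zeta$ and the slice vary, always has its Gram matrix bounded in terms of $|\zeta|$ in such a way that the divisor bound $O_\eta((\text{radius})^\eta) = O_\eta(\lambda^\eta)$ holds with an implied constant depending only on $\eta$ (and $C_1,C_2,C_3$), not on $\zeta$. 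This is where one appeals to the structure already set up in \cite[Lemma A.1]{Yesha1}, now run with explicit tracking of the dependence on the data, rather than for a single fixed $\zeta$.
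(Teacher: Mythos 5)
Your proposal takes essentially the same route as the paper: both decompose by the integer value $m=\langle 2\xi-\zeta,\zeta\rangle$ (equivalently $\langle\xi,\zeta\rangle$), bound the $\mathbb{Z}^3$-points on each resulting planar circle by the divisor-type bound from \cite[Lemma A.1]{Yesha1} obtained via reduction to a binary form $x^2+Dy^2=k$ with $k\ll\lambda^{O(1)}$, and sum over the $O(L)$ admissible slices. Two cosmetic slips: the number of slices is governed by $\gcd(\zeta_1,\zeta_2,\zeta_3)$ rather than $|\zeta|$ (harmless, as both are $\geq 1$ so the count is $\ll L$ either way), and the uniformity you correctly flag as the real content comes not from the section form having ``bounded coefficients'' (it does not) but from the bound $6\tau(k)$ on representations by a positive definite binary form being independent of the form once $k$ is polynomial in $\lambda$ — which is precisely the bookkeeping the paper carries out by tracking $\left|\zeta\right|\le C_1\sqrt\lambda$ and $\left|n-\lambda\right|<C_2L$ through the change of variables.
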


\begin{proof}
Denote $\zeta=\left(\zeta_{1},\zeta_{2},\zeta_{3}\right),$ and assume
without loss of generality that $\zeta_{3}\ne0$. Let $n\in\mathcal{N}_{3}$
so that $\left|n-\lambda\right|<C_{2}L$, so we are looking for $\xi=\left(x,y,z\right)$
such $\left|\xi\right|^{2}=n,$ and such that $\left\langle 2\xi-\zeta,\zeta\right\rangle =m$,
where $\left|m\right|<C_{3}L$. In this notation, we have 
\begin{equation}
x^{2}+y^{2}+z^{2}=n\label{eq:Triplet}
\end{equation}
and 
\[
2x\zeta_{1}+2y\zeta_{2}+2z\zeta_{3}=m+\left|\zeta\right|^{2}.
\]
Since $\zeta_{3}\ne0$, we can write
\begin{equation}
z=\frac{m+\left|\zeta\right|^{2}-2\zeta_{1}x-2\zeta_{2}y}{2\zeta_{3}}, \label{eq:Z_equation}
\end{equation}
and substituting (\ref{eq:Z_equation}) into (\ref{eq:Triplet}), we get
\begin{equation}
ax^{2}+2bxy+cy^{2}+2dx+2ey+f=0\label{eq:EllipseEquationPreSimplification}
\end{equation}
where
\begin{align*}
a & =4\zeta_{1}^{2}+4\zeta_{3}^{2}\\
b & =4\zeta_{1}\zeta_{2}\\
c & =4\zeta_{2}^{2}+4\zeta_{3}^{2}\\
d & =-2\zeta_{1}\left(m+\left|\zeta\right|^{2}\right)\\
e & =-2\zeta_{2}\left(m+\left|\zeta\right|^{2}\right)\\
f & =-4\zeta_{3}^{2}n+\left(m+\left|\zeta\right|^{2}\right)^{2}.
\end{align*}
Note that $c>0$, $ac-b^{2}=16\zeta_{3}^{2}\left|\zeta\right|^{2}>0$,
and denote $ac-b^{2}=t^{2}D$, where $D>0$ is squarefree. By a simple
sequence of changes of variables (see the proof of \cite[Lemma A.1]{Yesha1}),
the number of integer solutions to equation (\ref{eq:EllipseEquationPreSimplification})
is bounded above by the number of integer solutions to the equation
\begin{equation}
x^{2}+Dy^{2}=k,\label{eq:EllipseEquationSimplified}
\end{equation}
where 
\[
k=\left(ac-b^{2}\right)\left(-cf+e^{2}\right)+\left(cd-be\right)^{2}.
\]
In the proof of \cite[Lemma A.1]{Yesha1}), we obtained that the number
of integer solutions to equation (\ref{eq:EllipseEquationSimplified})
is bounded above by $6\tau\left(k\right)$, where $\tau\left(k\right)\ll_{\eta}k^{\eta}$
is the number of divisors of $k$. Since $\left|\zeta\right|\le C_{1}\sqrt{\lambda}$
and $\left|n-\lambda\right|<C_{2}L$, we deduce that the number of
solutions is $\ll_{C_{1},C_{2},C_{3},\eta}\lambda^{\eta}$ for every
fixed $m$ such that $\left|m\right|<C_{3}L$. Taking into account
the various choices for $m$ we get that the number of solutions is
$\ll_{C_{1},C_{2},C_{3},\eta}L\lambda^{\eta}$.
\end{proof}

\subsection{The case \texorpdfstring{$n_{\lambda}\in\mathcal{N}_{0}^{\zeta}$}{n\textlambda \textin N0\textzeta}}

In light of Lemma \ref{lem:LemmaN_0}, for every $0<\left|\zeta\right|<T$
such that $n_{\lambda}\in\mathcal{N}_{0}^{\zeta}$, we can split the
inner summation on the right-hand-side of (\ref{eq:3dFormulaAfterExpansion})
into three sums over the following ranges:

\begin{enumerate}
	\item $\frac{1}{2}\le\left|\left|\xi\right|^{2}-\lambda\right|<L$,
	$\frac{1}{2}\le\left|\left|\xi-\zeta\right|^{2}-\lambda\right|\le L$
	(Denote this sum by $\sum^{1}$).
	
	\item  $\left|\left|\xi\right|^{2}-\lambda\right|<\frac{1}{2}$, $\frac{1}{2}\le\left|\left|\xi-\zeta\right|^{2}-\lambda\right|\le L$
	(Denote this sum by $\sum^{2}$).

	\item $\frac{1}{2}\le\left|\left|\xi\right|^{2}-\lambda\right|<L$,
	$\left|\left|\xi-\zeta\right|^{2}-\lambda\right|<\frac{1}{2}$ (Denote
	this sum by $\sum^{3}$).
\end{enumerate}

We will use Lemma \ref{lem:SphericalStripsLemma} to estimate $\sum^{1},\sum^{2}$
and $\sum^{3}$, which will give the required bound in the case $n_{\lambda}\in\mathcal{N}_{0}^{\zeta}$.
\begin{lem}
\label{lem:MianBoundN0}Let $0<\delta<1$, $L=\lambda^{\delta}.$
For every fixed $0<\left|\zeta\right|<T$ such that $n_{\lambda}\in\mathcal{N}_{0}^{\zeta}$ and every $ \eta>0 $,
we have
\[
\sum_{\begin{subarray}{c}
\left|\left|\xi\right|^{2}-\lambda\right|<L\\
\left|\left|\xi-\zeta\right|^{2}-\lambda\right|<L
\end{subarray}}\frac{1}{\left(\left|\xi\right|^{2}-\lambda\right)\left(\left|\xi-\zeta\right|^{2}-\lambda\right)}\ll_{\eta}\lambda^{2\delta+\eta}+\left\Vert G_{\lambda,L}\right\Vert \lambda^{\delta/2+\eta/2}.
\]
\end{lem}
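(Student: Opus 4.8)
The plan is to estimate the three pieces $\sum^1,\sum^2,\sum^3$ separately; splitting the summation into these three ranges is legitimate precisely because, when $n_\lambda\in\mathcal N_0^\zeta$, Lemma~\ref{lem:LemmaN_0}(3) rules out the remaining case in which both denominators have modulus $<1/2$. We may also assume from the outset that $|\zeta|\le C\sqrt\lambda$ for a suitable constant $C$, since otherwise (as in the two-dimensional argument) the constraints $||\xi|^2-\lambda|<L$, $||\xi-\zeta|^2-\lambda|<L$ cannot be met and the sum is empty; in particular Lemma~\ref{lem:SphericalStripsLemma} is applicable throughout.

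For $\sum^1$, note that $\langle 2\xi-\zeta,\zeta\rangle=|\xi|^2-|\xi-\zeta|^2$, so the two constraints $1/2\le||\xi|^2-\lambda|<L$ and $1/2\le||\xi-\zeta|^2-\lambda|\le L$ force $|\langle 2\xi-\zeta,\zeta\rangle|<2L$. Using the elementary inequality $|ab|^{-1}\le\frac12(a^{-2}+b^{-2})$, it is enough to estimate $\sum_\xi(|\xi|^2-\lambda)^{-2}$ over the admissible $\xi$ (the symmetric sum with $|\xi-\zeta|^2$ in place of $|\xi|^2$ is of the same shape after the substitution $\xi\mapsto\zeta-\xi$). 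Grouping these $\xi$ by the value $n=|\xi|^2$, Lemma~\ref{lem:SphericalStripsLemma} bounds the number of admissible $\xi$ on each shell by $O_\eta(L\lambda^\eta)$; since $\sum_{n\in\mathbb Z,\,|n-\lambda|\ge 1/2}(n-\lambda)^{-2}=O(1)$, this gives $\sum^1\ll_\eta L\lambda^\eta$, which is even stronger than the claimed $O_\eta(\lambda^{2\delta+\eta})$ (alternatively one may crudely bound $|\sum^1|$ by a constant times the number of admissible $\xi$, which is $\ll_\eta L^2\lambda^\eta$).

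For $\sum^2$ the only contributing $\xi$ are those with $|\xi|^2=n_\lambda$, because at most one integer lies within distance $<1/2$ of $\lambda$ and it must be $n_\lambda$; thus $\sum^2=(n_\lambda-\lambda)^{-1}\sum_{|\xi|^2=n_\lambda}(|\xi-\zeta|^2-\lambda)^{-1}$, the inner sum being over $\xi$ with $1/2\le||\xi-\zeta|^2-\lambda|\le L$. Here the norm of the truncated Green's function enters: by Parseval $\|G_{\lambda,L}\|^2=\sum_{|n-\lambda|<L}r_3(n)(n-\lambda)^{-2}\ge r_3(n_\lambda)(n_\lambda-\lambda)^{-2}$, so $|n_\lambda-\lambda|^{-1}\le\|G_{\lambda,L}\|/\sqrt{r_3(n_\lambda)}\le\|G_{\lambda,L}\|$. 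For the inner sum I would use Cauchy--Schwarz: the number of $\xi$ with $|\xi|^2=n_\lambda$ and $||\xi-\zeta|^2-\lambda|\le L$ is $\ll_\eta L\lambda^\eta$ by Lemma~\ref{lem:SphericalStripsLemma} (note $||\xi-\zeta|^2-n_\lambda|<2L$), while $\sum_{|\xi|^2=n_\lambda}(|\xi-\zeta|^2-\lambda)^{-2}\ll_\eta\lambda^\eta$ after grouping by $m=|\xi-\zeta|^2$ and using that for each fixed $m$ one has $\#\{\xi:|\xi|^2=n_\lambda,\ \langle 2\xi-\zeta,\zeta\rangle=n_\lambda-m\}\ll_\eta\lambda^\eta$ (the divisor bound inside the proof of Lemma~\ref{lem:SphericalStripsLemma}) together with $\sum_{|m-\lambda|\ge1/2}(m-\lambda)^{-2}=O(1)$. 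This yields $\sum^2\ll_\eta\|G_{\lambda,L}\|\,(L\lambda^\eta)^{1/2}(\lambda^\eta)^{1/2}=\|G_{\lambda,L}\|\,\lambda^{\delta/2+\eta}$. The estimate for $\sum^3$ is identical after the substitution $\xi\mapsto\xi-\zeta$, which interchanges the two denominators and replaces $\zeta$ by $-\zeta$. Combining the three bounds and relabelling $\eta$ gives the lemma.

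As for the main difficulty: since Lemma~\ref{lem:SphericalStripsLemma} already provides the uniform count of lattice points on a sphere contained in a thin slab, no serious arithmetic obstruction remains. The essential point is rather to notice that the terms with $||\xi|^2-\lambda|<1/2$ (those forming $\sum^2$ and $\sum^3$) cannot be bounded term by term — the factor $|n_\lambda-\lambda|^{-1}$ may be arbitrarily large — and must instead be absorbed into $\|G_{\lambda,L}\|$ through the lower bound $\|G_{\lambda,L}\|^2\ge r_3(n_\lambda)(n_\lambda-\lambda)^{-2}$, which is ultimately a consequence of the interlacing of $\Lambda$ with $\mathcal N_3$. This is exactly why $\|G_{\lambda,L}\|$, and not merely a power of $\lambda$, appears on the right-hand side.
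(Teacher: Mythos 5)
Your proof is correct and follows essentially the same strategy as the paper: the same decomposition into $\sum^1,\sum^2,\sum^3$ (justified by Lemma \ref{lem:LemmaN_0}), the same invocation of Lemma \ref{lem:SphericalStripsLemma}, and Cauchy--Schwarz to produce the $\left\Vert G_{\lambda,L}\right\Vert$ factor. Two small remarks: your treatment of $\sum^1$ actually yields the sharper bound $L\lambda^\eta$ rather than the paper's cruder $L^2\lambda^\eta$; and for $\sum^2$ the paper applies Cauchy--Schwarz to $\left|\left|\xi\right|^{2}-\lambda\right|^{-1}$ and absorbs the square-sum directly into $\left\Vert G_{\lambda,L}\right\Vert^2$ as a sub-sum, thereby avoiding the separate divisor-bound estimate $\sum_{|\xi|^2=n_\lambda}(|\xi-\zeta|^2-\lambda)^{-2}\ll_\eta\lambda^\eta$ that your factorization of $(n_\lambda-\lambda)^{-1}$ requires — but both routes are valid and give the same conclusion after relabelling $\eta$.
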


\begin{proof}
Recall that if $\left|\left|\xi\right|^{2}-\lambda\right|<L$ and
$\left|\left|\xi-\zeta\right|^{2}-\lambda\right|<L$, then
\[
\left|\left\langle 2\xi-\zeta,\zeta\right\rangle \right|=\left|\left|\xi\right|^{2}-\left|\xi-\zeta\right|^{2}\right|\le\left|\left|\xi\right|^{2}-\lambda\right|+\left|\left|\xi-\zeta\right|^{2}-\lambda\right|<2L.
\]
By Lemma \ref{lem:SphericalStripsLemma}, we have
\begin{align*}
\sum\nolimits ^{1}\frac{1}{\left(\left|\xi\right|^{2}-\lambda\right)\left(\left|\xi-\zeta\right|^{2}-\lambda\right)} & \ll\sum\nolimits ^{1}1\\
 & \ll\sum_{\left|n-\lambda\right|<L}\#\left\{ \xi\in\mathbb{Z}^{3}:\,\left|\xi\right|^{2}=n,\,\left|\left\langle 2\xi-\zeta,\zeta\right\rangle \right|<2L\right\} \\
 & \ll_{\eta}L\lambda^{\eta}\sum_{\left|n-\lambda\right|<L}1\ll L^{2}\lambda^{\eta}=\lambda^{2\delta+\eta}.
\end{align*}
Next, using the Cauchy-Schwarz inequality and Lemma \ref{lem:SphericalStripsLemma}
again, we have
\begin{align*}
 & \sum\nolimits ^{2}\frac{1}{\left(\left|\xi\right|^{2}-\lambda\right)\left(\left|\xi-\zeta\right|^{2}-\lambda\right)}\ll\sum\nolimits ^{2}\frac{1}{\left|\left|\xi\right|^{2}-\lambda\right|}\ll\left\Vert G_{\lambda,L}\right\Vert \left(\sum\nolimits ^{2}1\right)^{1/2}\\
 & \ll\left\Vert G_{\lambda,L}\right\Vert \left(\#\left\{ \xi\in\mathbb{Z}^{3}:\,\left|\xi\right|^{2}=n_{\lambda},\,\left|\left\langle 2\xi-\zeta,\zeta\right\rangle \right|<2L\right\} \right)^{1/2}\\
 & \ll_{\eta}\left\Vert G_{\lambda,L}\right\Vert L^{1/2}\lambda^{\eta/2}=\left\Vert G_{\lambda,L}\right\Vert \lambda^{\delta/2+\eta/2}.
\end{align*}
The sum $\sum^{3}$ is treated like $\sum^{2}$ and is bounded by
the same quantity. 
\end{proof}
We deduce a bound for the sum on the right-hand-side of (\ref{eq:3dFormulaAfterExpansion})
restricted to $0<\left|\zeta\right|<T$ such that $n_{\lambda}\in\mathcal{N}_{0}^{\zeta}$.
\begin{cor}
Let $T,r>0$ such that $Tr>1$, and let $0<\delta<1$, $L=\lambda^{\delta}.$ For every $ \eta>0 $,
we have
\begin{align}
 & \sum_{\begin{subarray}{c}
0<\left|\zeta\right|<T\\
n_{\lambda}\in\mathcal{N}_{0}^{\zeta}
\end{subarray}}\hat{b}_{x}^{\pm}\left(\zeta\right)e^{-i\left\langle x_{0},\zeta\right\rangle }\sum_{\begin{subarray}{c}
\left|\left|\xi\right|^{2}-\lambda\right|<L\\
\left|\left|\xi-\zeta\right|^{2}-\lambda\right|<L
\end{subarray}}\frac{1}{\left(\left|\xi\right|^{2}-\lambda\right)\left(\left|\xi-\zeta\right|^{2}-\lambda\right)}\label{eq:N0FinalBound}\\
 & \ll_{\eta}r^{3}T^{3}\left(\lambda^{2\delta+\eta}+\left\Vert G_{\lambda,L}\right\Vert \lambda^{\delta/2+\eta/2}\right).\nonumber 
\end{align}
\end{cor}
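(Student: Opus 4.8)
The plan is to obtain (\ref{eq:N0FinalBound}) by assembling three pieces that are already in place: the size bound for the Fourier coefficients $\hat b_x^\pm$, the pointwise-in-$\zeta$ estimate of Lemma \ref{lem:MianBoundN0}, and a trivial count of the relevant frequencies $\zeta$. In particular, essentially all of the content lives in Lemma \ref{lem:MianBoundN0}; the present step is just bookkeeping.

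First I would apply the triangle inequality to the left-hand side of (\ref{eq:N0FinalBound}). Since $|e^{-i\langle x_0,\zeta\rangle}|=1$ and $|\hat b_x^\pm(\zeta)|\ll r^3$ by property~(4) of Lemma \ref{lem:HarmanLemma} (which persists for $b_x^\pm$ by the remark following that lemma), the left-hand side is bounded by
\[
r^3\sum_{\substack{0<|\zeta|<T\\ n_\lambda\in\mathcal N_0^\zeta}}\Bigl|\,\sum_{\substack{||\xi|^2-\lambda|<L\\ ||\xi-\zeta|^2-\lambda|<L}}\frac{1}{(|\xi|^2-\lambda)(|\xi-\zeta|^2-\lambda)}\,\Bigr|.
\]
For each $\zeta$ whose inner sum is nonempty one has $|\zeta|\ll\sqrt\lambda$: if both $||\xi|^2-\lambda|<L$ and $||\xi-\zeta|^2-\lambda|<L$, then expanding $|\xi-\zeta|^2=|\xi|^2-2\langle\xi,\zeta\rangle+|\zeta|^2$ and applying Cauchy--Schwarz exactly as in the two-dimensional argument gives $|\zeta|^2<2L+2|\langle\xi,\zeta\rangle|\ll|\zeta|\sqrt\lambda$. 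Hence Lemma \ref{lem:MianBoundN0} applies to every surviving term and bounds the inner double sum by $O_\eta\bigl(\lambda^{2\delta+\eta}+\|G_{\lambda,L}\|\,\lambda^{\delta/2+\eta/2}\bigr)$, with an implied constant independent of $\zeta$.

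Finally I would pull this uniform bound out of the $\zeta$-sum and count the remaining frequencies by the trivial estimate $\#\{\zeta\in\mathbb Z^3:0<|\zeta|<T\}=O(T^3)$; multiplying through by the factor $r^3$ coming from the Fourier coefficients yields exactly the right-hand side of (\ref{eq:N0FinalBound}). There is no genuine obstacle: the only point that deserves attention is that the estimate in Lemma \ref{lem:MianBoundN0} — which rests on the lattice-point count of Lemma \ref{lem:SphericalStripsLemma} — is genuinely uniform over $\zeta$ in the range $|\zeta|\ll\sqrt\lambda$, and that range is precisely the one that contributes to the sum.
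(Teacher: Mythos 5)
Your proof is correct and follows exactly the same route as the paper: bound $|\hat b_x^\pm(\zeta)|\ll r^3$ by property~(4) of Lemma~\ref{lem:HarmanLemma}, apply the uniform-in-$\zeta$ estimate of Lemma~\ref{lem:MianBoundN0}, and count the $O(T^3)$ frequencies. Your extra remark that only $|\zeta|\ll\sqrt\lambda$ contributes is a welcome bit of care (it is what makes the constant in Lemma~\ref{lem:SphericalStripsLemma}, and hence in Lemma~\ref{lem:MianBoundN0}, genuinely uniform), but the substance is identical to the paper's one-line proof.
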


\begin{proof}
This follows immediately from Lemma \ref{lem:MianBoundN0} and property
(4) of Lemma \ref{lem:HarmanLemma}. 
\end{proof}

\subsection{The case \texorpdfstring{$n_{\lambda}\in\mathcal{N}_{1}^{\zeta}$}{n\textlambda \textin N1\textzeta}}
\begin{lem}
\label{lem:N1Lemma}Let $0<\delta<1$, $L=\lambda^{\delta}.$ For
every fixed $0<\left|\zeta\right|<T$ such that $n_{\lambda}\in\mathcal{N}_{1}^{\zeta}$ and every $ \eta>0 $,
we have
\[
\sum_{\begin{subarray}{c}
\left|\left|\xi\right|^{2}-\lambda\right|<L\\
\left|\left|\xi-\zeta\right|^{2}-\lambda\right|<L
\end{subarray}}\frac{1}{\left(\left|\xi\right|^{2}-\lambda\right)\left(\left|\xi-\zeta\right|^{2}-\lambda\right)}\ll_{\eta}\lambda^{2\delta+\eta}+\left\Vert G_{\lambda,L}\right\Vert ^{2}T\lambda^{\delta-\frac{1}{2}+2\eta}.
\]
\end{lem}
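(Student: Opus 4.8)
The plan is to split the sum in Lemma~\ref{lem:N1Lemma} according to whether the lattice point $\xi$ is \emph{resonant}---meaning $\left|\left|\xi\right|^{2}-\lambda\right|<\frac{1}{2}$ or $\left|\left|\xi-\zeta\right|^{2}-\lambda\right|<\frac{1}{2}$, which, since $\left|\xi\right|^{2},\left|\xi-\zeta\right|^{2}\in\mathbb{Z}$, forces the corresponding square to equal $n_{\lambda}$ (so in particular $\left|n_{\lambda}-\lambda\right|<\frac{1}{2}<L$)---or \emph{generic}, meaning both quantities are $\ge\frac{1}{2}$. First I would record the identity $\left\langle 2\xi-\zeta,\zeta\right\rangle =\left|\xi\right|^{2}-\left|\xi-\zeta\right|^{2}$, so that on the range of summation $\left|\left\langle 2\xi-\zeta,\zeta\right\rangle \right|<2L$, and (exactly as in the two-dimensional computation leading to~(\ref{eq:zetaBound})) $\left|\zeta\right|\ll\sqrt{\lambda}$ whenever the sum is nonempty; this makes Lemma~\ref{lem:SphericalStripsLemma} applicable throughout with resonance parameter $2L$.

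For the generic part the summand is $\le4$, so the contribution is bounded by $4\,\#\{\xi:\left|\left|\xi\right|^{2}-\lambda\right|<L,\ \left|\left\langle 2\xi-\zeta,\zeta\right\rangle \right|<2L\}$; grouping by the value $n=\left|\xi\right|^{2}$, which runs over the $\ll L$ elements of $\mathcal{N}_{3}$ in $(\lambda-L,\lambda+L)$, and applying Lemma~\ref{lem:SphericalStripsLemma} to each such $n$, this is $\ll_{\eta}L^{2}\lambda^{\eta}=\lambda^{2\delta+\eta}$. This is the first term in the claimed bound; it uses nothing about $\mathcal{N}_{1}^{\zeta}$ and is the exact analogue of the estimate for the first of the three sums treated in Lemma~\ref{lem:MianBoundN0}.

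For the resonant part, the involution $\xi\mapsto\zeta-\xi$ interchanges $\left|\xi\right|^{2}$ and $\left|\xi-\zeta\right|^{2}$, so up to a factor $2$ it suffices to bound $\Sigma':=\sum\frac{1}{\left|\left|\xi\right|^{2}-\lambda\right|\,\left|\left|\xi-\zeta\right|^{2}-\lambda\right|}$, the sum over $\xi$ with $\left|\left|\xi\right|^{2}-\lambda\right|<\frac{1}{2}$ (hence $\left|\xi\right|^{2}=n_{\lambda}$) and $\left|\left|\xi-\zeta\right|^{2}-\lambda\right|<L$. I would isolate the ``doubly resonant'' terms $\left|\xi-\zeta\right|^{2}=n_{\lambda}$, for which $\left\langle 2\xi-\zeta,\zeta\right\rangle =0$: by Lemma~\ref{lem:SphericalStripsLemma} there are $\ll_{\eta}L\lambda^{\eta}$ of them, contributing $\ll_{\eta}L\lambda^{\eta}/\left|n_{\lambda}-\lambda\right|^{2}$. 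For the remaining terms $\left|\xi-\zeta\right|^{2}=n'\ne n_{\lambda}$ one has $\left|n'-\lambda\right|>\frac{1}{2}$, so the factor $1/\left|n'-\lambda\right|<2$ is harmless and may be dropped, leaving $\frac{2}{\left|n_{\lambda}-\lambda\right|}\,\#\{\xi:\left|\xi\right|^{2}=n_{\lambda},\ \left|\left\langle 2\xi-\zeta,\zeta\right\rangle \right|<2L\}\ll_{\eta}L\lambda^{\eta}/\left|n_{\lambda}-\lambda\right|$, again by Lemma~\ref{lem:SphericalStripsLemma}. Since $\left|n_{\lambda}-\lambda\right|<\frac{1}{2}$ and $L\ge1$, both contributions are $\ll_{\eta}L\lambda^{\eta}/\left|n_{\lambda}-\lambda\right|^{2}$, hence $\Sigma'\ll_{\eta}L\lambda^{\eta}/\left|n_{\lambda}-\lambda\right|^{2}$.

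The last---and only genuinely new---step is to exploit $n_{\lambda}\in\mathcal{N}_{1}^{\zeta}$ through Siegel's bound~(\ref{eq:SiegelBound}). Writing $n_{\lambda}=4^{a_{\lambda}}m_{1}$ with $4\nmid m_{1}$, the condition $n_{\lambda}\in\mathcal{N}_{1}^{\zeta}$ means $a_{\lambda}\le a_{\zeta}$, i.e.\ $4^{a_{\lambda}}\mid\left|\zeta\right|^{2}$; hence $m_{1}=n_{\lambda}/4^{a_{\lambda}}\ge n_{\lambda}/\left|\zeta\right|^{2}\gg\lambda/\left|\zeta\right|^{2}$ (using $n_{\lambda}>\lambda-\frac{1}{2}\gg\lambda$), and therefore $r_{3}(n_{\lambda})=r_{3}(m_{1})\gg_{\eta}(\lambda/\left|\zeta\right|^{2})^{1/2-\eta}\gg_{\eta}\lambda^{1/2-\eta}\left|\zeta\right|^{-1}$. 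As the term $r_{3}(n_{\lambda})/(n_{\lambda}-\lambda)^{2}$ appears in $\left\Vert G_{\lambda,L}\right\Vert ^{2}$, we have $\left|n_{\lambda}-\lambda\right|^{-2}\le\left\Vert G_{\lambda,L}\right\Vert ^{2}/r_{3}(n_{\lambda})$, so $\Sigma'\ll_{\eta}L\lambda^{\eta}\left\Vert G_{\lambda,L}\right\Vert ^{2}\lambda^{-1/2+\eta}\left|\zeta\right|\ll_{\eta}\left\Vert G_{\lambda,L}\right\Vert ^{2}T\lambda^{\delta-1/2+2\eta}$, using $\left|\zeta\right|<T$ and $L=\lambda^{\delta}$. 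Adding this to the generic contribution yields the lemma. The main obstacle is engineering precisely this final step: one must recognise that the combinatorial membership $n_{\lambda}\in\mathcal{N}_{1}^{\zeta}$ is equivalent to the divisibility $4^{a_{\lambda}}\mid\left|\zeta\right|^{2}$, which is exactly what forces Siegel's lower bound for $r_{3}(n_{\lambda})$ to be large enough, in a $\zeta$-dependent way, to absorb the potentially tiny denominator $(n_{\lambda}-\lambda)^{2}$; everything else---the generic count and the trivial estimate $1/\left|n'-\lambda\right|<2$ away from the resonance---is routine.
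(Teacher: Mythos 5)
Your proof is correct and reaches the same two-term bound, but via a genuinely different decomposition. The paper first applies Cauchy--Schwarz to separate the two factors into two $L^2$-type sums, and then splits each of those sums according to whether $\left|\xi\right|^{2}=n_{\lambda}$ (giving $\sum^{5}$) or $\left|\xi\right|^{2}\ne n_{\lambda}$ (giving $\sum^{4}$). You instead decompose the original double-factor sum directly into a ``generic'' part, where both $\left|\left|\xi\right|^{2}-\lambda\right|$ and $\left|\left|\xi-\zeta\right|^{2}-\lambda\right|$ are at least $\frac{1}{2}$ and the summand can be trivially bounded by $4$, and a ``resonant'' part where at least one factor hits $n_{\lambda}$, which you reduce by the involution $\xi\mapsto\zeta-\xi$ and then isolate the doubly-resonant diagonal. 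Both routes then lean on exactly the same ingredients: Lemma~\ref{lem:SphericalStripsLemma} for the lattice-point count, the inequality $\left\Vert G_{\lambda,L}\right\Vert ^{2}\ge r_{3}\left(n_{\lambda}\right)/\left(n_{\lambda}-\lambda\right)^{2}$, Siegel's bound~(\ref{eq:SiegelBound}), and the observation that $n_{\lambda}\in\mathcal{N}_{1}^{\zeta}$ forces $4^{a_{\lambda}}\le4^{a_{\zeta}}\le\left|\zeta\right|^{2}$ so that the odd part of $n_{\lambda}$ is $\gg\lambda/\left|\zeta\right|^{2}$. Your version is arguably more transparent, since it avoids the preliminary Cauchy--Schwarz and makes the resonance structure visible directly; the paper's version is slightly more mechanical but extends with no change to the symmetric second factor in~(\ref{eq:N1Bound}). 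Neither approach has an essential advantage in the final exponents; the arithmetic at the end is identical.
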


\begin{proof}
By the Cauchy-Schwarz inequality,
\begin{align}
 & \sum_{\begin{subarray}{c}
\left|\left|\xi\right|^{2}-\lambda\right|<L\\
\left|\left|\xi-\zeta\right|^{2}-\lambda\right|<L
\end{subarray}}\frac{1}{\left(\left|\xi\right|^{2}-\lambda\right)\left(\left|\xi-\zeta\right|^{2}-\lambda\right)}\label{eq:N1Bound}\\
 & \ll\left(\sum_{\begin{subarray}{c}
\left|\left|\xi\right|^{2}-\lambda\right|<L\\
\left|\left\langle 2\xi-\zeta,\zeta\right\rangle \right|<2L
\end{subarray}}\frac{1}{\left(\left|\xi\right|^{2}-\lambda\right)^{2}}\right)^{1/2}\left(\sum_{\begin{subarray}{c}
\left|\left|\xi-\zeta\right|^{2}-\lambda\right|<L\\
\left|\left\langle 2\xi-\zeta,\zeta\right\rangle \right|<2L
\end{subarray}}\frac{1}{\left(\left|\xi-\zeta\right|^{2}-\lambda\right)^{2}}\right)^{1/2}.\nonumber 
\end{align}
Consider the first summation on the right-hand-side of (\ref{eq:N1Bound})
\begin{equation}
\sum_{\begin{subarray}{c}
\left|\left|\xi\right|^{2}-\lambda\right|<L\\
\left|\left\langle 2\xi-\zeta,\zeta\right\rangle \right|<2L
\end{subarray}}\frac{1}{\left(\left|\xi\right|^{2}-\lambda\right)^{2}}\label{eq:FirstSumN1}
\end{equation}
and split it into two sums $\sum^{4}+\sum^{5}$, where in $\sum^{4}$
the summation is over $\xi$ such that $\left|\xi\right|^{2}\ne n_{\lambda}$,
$\left|\left|\xi\right|^{2}-\lambda\right|<L$ and $\left|\left\langle 2\xi-\zeta,\zeta\right\rangle \right|<2L$,
and in $\sum^{5}$ the summation is over $\xi$ such that $\left|\xi\right|^{2}=n_{\lambda}$
and $\left|\left\langle 2\xi-\zeta,\zeta\right\rangle \right|<2L$.

If $\left|\xi\right|^{2}\ne n_{\lambda}$, then $\left|\left|\xi\right|^{2}-\lambda\right|\ge\frac{1}{2},$
and therefore Lemma \ref{lem:SphericalStripsLemma} yields
\begin{align}
\sum\nolimits ^{4}\frac{1}{\left(\left|\xi\right|^{2}-\lambda\right)^{2}} & \ll\sum_{\begin{subarray}{c}
\left|\left|\xi\right|^{2}-\lambda\right|<L\\
\left|\left\langle 2\xi-\zeta,\zeta\right\rangle \right|<2L
\end{subarray}}1\ll\sum_{\left|n-\lambda\right|<L}\#\left\{ \xi\in\mathbb{Z}^{3}:\,\left|\xi\right|^{2}=n,\,\left|\left\langle 2\xi-\zeta,\zeta\right\rangle \right|<2L\right\} \label{eq:Sigma4Bound}\\
 & \ll_{\eta}\sum_{\left|n-\lambda\right|<L}L\lambda^{\eta}\ll L^{2}\lambda^{\eta}=\lambda^{2\delta+\eta}.\nonumber 
\end{align}
To bound $\sum^{5}$, note that by Lemma \ref{lem:SphericalStripsLemma},
\begin{align*}
\sum\nolimits ^{5}\frac{1}{\left(\left|\xi\right|^{2}-\lambda\right)^{2}} & =\frac{1}{\left(n_{\lambda}-\lambda\right)^{2}}\#\left\{ \xi\in\mathbb{Z}^{3}:\,\left|\xi\right|^{2}=n_{\lambda},\,\left|\left\langle 2\xi-\zeta,\zeta\right\rangle \right|<2L\right\} \\
& \ll_{\eta}\frac{L\lambda^{\eta}}{\left(n_{\lambda}-\lambda\right)^{2}}
\end{align*}
and therefore
\begin{equation}
\frac{\sum^{5}\frac{1}{\left(\left|\xi\right|^{2}-\lambda\right)^{2}}}{\left\Vert G_{\lambda,L}\right\Vert ^{2}}\ll_{\eta}\frac{\frac{L\lambda^{\eta}}{\left(n_{\lambda}-\lambda\right)^{2}}}{\left\Vert G_{\lambda,L}\right\Vert ^{2}}\ll\frac{\frac{L\lambda^{\eta}}{\left(n_{\lambda}-\lambda\right)^{2}}}{\frac{r_{3}\left(n_{\lambda}\right)}{\left(n_{\lambda}-\lambda\right)^{2}}}=\frac{L\lambda^{\eta}}{r_{3}\left(n_{\lambda}\right)}.\label{eq:Sigma5Bound}
\end{equation}
But since $n_{\lambda}\in\mathcal{N}_{1}^{\zeta}$, if we write $n_{\lambda}=4^{a}n_{1}$
with $4\nmid n_{1}$, then $n_{\lambda}\le4^{a_{\zeta}}n_{1}$. Moreover,
$2^{a_{\zeta}}\le T$, and therefore, by (\ref{eq:SiegelBound}),
\[
\frac{L\lambda^{\eta}}{r_{3}\left(n_{\lambda}\right)}=\frac{\lambda^{\delta+\eta}}{r_{3}\left(n_{1}\right)}\ll_{\eta}\frac{\lambda^{\delta+\eta}}{n_{1}^{1/2-\eta}}\ll\frac{2^{a_{\zeta}}\lambda^{\delta+\eta}}{n_{\lambda}^{1/2-\eta}}\ll T\lambda^{\delta-\frac{1}{2}+2\eta},
\]
so
\[
\sum\nolimits ^{5}\frac{1}{\left(\left|\xi\right|^{2}-\lambda\right)^{2}}\ll_{\eta}\left\Vert G_{\lambda,L}\right\Vert ^{2}T\lambda^{\delta-\frac{1}{2}+2\eta}.
\]
As for the second summation on the right-hand-side of (\ref{eq:N1Bound}),
note that
\begin{equation}
\sum_{\begin{subarray}{c}
\left|\left|\xi-\zeta\right|^{2}-\lambda\right|<L\\
\left|\left\langle 2\xi-\zeta,\zeta\right\rangle \right|<2L
\end{subarray}}\frac{1}{\left(\left|\xi-\zeta\right|^{2}-\lambda\right)^{2}}=\sum_{\begin{subarray}{c}
\left|\left|\tilde{\xi}\right|^{2}-\lambda\right|<L\\
\left|\left\langle 2\tilde{\xi}+\zeta,\zeta\right\rangle \right|<2L
\end{subarray}}\frac{1}{\left(\left|\tilde{\xi}\right|^{2}-\lambda\right)^{2}}\label{eq:SecondSumN1}
\end{equation}
 and this summation can be bounded in the same way. Therefore
\[
\sum_{\begin{subarray}{c}
\left|\left|\xi\right|^{2}-\lambda\right|<L\\
\left|\left|\xi-\zeta\right|^{2}-\lambda\right|<L
\end{subarray}}\frac{1}{\left(\left|\xi\right|^{2}-\lambda\right)\left(\left|\xi-\zeta\right|^{2}-\lambda\right)}\ll_{\eta}\lambda^{2\delta+\eta}+\left\Vert G_{\lambda,L}\right\Vert ^{2}T\lambda^{\delta-\frac{1}{2}+2\eta}.
\]
\end{proof}
We deduce a bound for the sum on the right-hand-side of (\ref{eq:3dFormulaAfterExpansion}),
now restricted to $0<\left|\zeta\right|<T$ such that $n_{\lambda}\in\mathcal{N}_{1}^{\zeta}$.
\begin{cor}
Let $T,r>0$ such that $Tr>1$, and let $0<\delta<1$, $L=\lambda^{\delta}.$ For every $ \eta>0 $, we have
\begin{align}
 & \sum_{\begin{subarray}{c}
0<\left|\zeta\right|<T\\
n_{\lambda}\in\mathcal{N}_{1}^{\zeta}
\end{subarray}}\hat{b}_{x}^{\pm}\left(\zeta\right)e^{-i\left\langle x_{0},\zeta\right\rangle }\sum_{\begin{subarray}{c}
\left|\left|\xi\right|^{2}-\lambda\right|<L\\
\left|\left|\xi-\zeta\right|^{2}-\lambda\right|<L
\end{subarray}}\frac{1}{\left(\left|\xi\right|^{2}-\lambda\right)\left(\left|\xi-\zeta\right|^{2}-\lambda\right)}\label{eq:N1FinalBound}\\
 & \ll_{\eta}r^{3}T^{3}\left(\lambda^{2\delta+\eta}+\left\Vert G_{\lambda,L}\right\Vert ^{2}T\lambda^{\delta-\frac{1}{2}+2\eta}\right).\nonumber 
\end{align}
\end{cor}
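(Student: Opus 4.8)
The plan is to deduce this corollary from Lemma~\ref{lem:N1Lemma} in precisely the way the preceding corollary follows from Lemma~\ref{lem:MianBoundN0}. First I would bound the outer sum trivially: by property~(4) of Lemma~\ref{lem:HarmanLemma} (with $d=3$) we have $\left|\hat{b}_{x}^{\pm}(\zeta)\right|\ll r^{3}$ uniformly in $\zeta$, while $\left|e^{-i\langle x_{0},\zeta\rangle}\right|=1$, so the left-hand side of (\ref{eq:N1FinalBound}) is, in absolute value, at most a constant times $r^{3}$ multiplied by the sum over those $\zeta$ with $0<|\zeta|<T$ and $n_{\lambda}\in\mathcal{N}_{1}^{\zeta}$ of the absolute value of the corresponding inner double sum.

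Next, for each such $\zeta$ I would apply Lemma~\ref{lem:N1Lemma}, which bounds that inner double sum by $\ll_{\eta}\lambda^{2\delta+\eta}+\left\Vert G_{\lambda,L}\right\Vert^{2}T\lambda^{\delta-\frac{1}{2}+2\eta}$. Here one should note that Lemma~\ref{lem:N1Lemma}, through Lemma~\ref{lem:SphericalStripsLemma}, carries the hypothesis $|\zeta|\le C_{1}\sqrt{\lambda}$; this costs nothing in the present range, since the two constraints $\left|\left|\xi\right|^{2}-\lambda\right|<L$ and $\left|\left|\xi-\zeta\right|^{2}-\lambda\right|<L$ confine both $\xi$ and $\xi-\zeta$ to a thin spherical shell of radius comparable to $\sqrt{\lambda}$, whence $|\zeta|\le|\xi|+|\xi-\zeta|\ll\sqrt{\lambda}$ whenever the inner sum is non-empty (and if it is empty the desired bound holds trivially). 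Since $\#\{\zeta\in\mathbb{Z}^{3}:|\zeta|<T\}=O(T^{3})$, summing the per-$\zeta$ estimate over all admissible $\zeta$ introduces the factor $T^{3}$, and collecting terms gives exactly $\ll_{\eta}r^{3}T^{3}\left(\lambda^{2\delta+\eta}+\left\Vert G_{\lambda,L}\right\Vert^{2}T\lambda^{\delta-\frac{1}{2}+2\eta}\right)$.

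There is no real obstacle at this step: all of the analytic content sits in Lemma~\ref{lem:N1Lemma}, and what remains is bookkeeping — replacing the Fourier coefficients by their trivial bound, counting lattice points in a ball of radius $T$, and tracking $\eta$ (one may use a single $\eta$ throughout, since it is arbitrary and the exponents $2\delta+\eta$ and $\delta-\frac{1}{2}+2\eta$ are already in the asserted form). The one point that genuinely warrants a line of verification, as indicated above, is that the hypothesis $|\zeta|\le C_{1}\sqrt{\lambda}$ of the spherical-strip count is automatically satisfied for every $\zeta$ that actually contributes to the sum.
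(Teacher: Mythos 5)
Your argument is correct and is exactly the paper's (one-line) proof: bound $|\hat{b}_x^\pm(\zeta)|\ll r^3$ via property (4) of Lemma~\ref{lem:HarmanLemma}, apply Lemma~\ref{lem:N1Lemma} to each inner sum, and count the $O(T^3)$ lattice points $\zeta$ with $|\zeta|<T$. Your extra check that $|\zeta|\ll\sqrt{\lambda}$ is automatic whenever the inner sum is non-empty is a worthwhile clarification of the implicit hypothesis inherited from Lemma~\ref{lem:SphericalStripsLemma}, which the paper leaves unstated.
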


\begin{proof}
This follows immediately from Lemma \ref{lem:N1Lemma} and property
(4) of Lemma \ref{lem:HarmanLemma}. 
\end{proof}

\subsection{Proof of Proposition \ref{prop:MainLemmaSmoothed}}

We are now able to prove Proposition \ref{prop:MainLemmaSmoothed}.
\begin{proof}[Proof of Proposition \ref{prop:MainLemmaSmoothed} ]
Substituting (\ref{eq:N0FinalBound}) and (\ref{eq:N1FinalBound})
into the right-hand-side of (\ref{eq:3dFormulaAfterExpansion}) and
using the third part of Lemma \ref{lem:LowerBndG3d}, we conclude
that for $r,T>0$ such that $Tr>1$, we have
\begin{align*}
\int_{\mathbb{T}^{3}}b_{x}^{\pm}\left(y\right)\left|g_{\lambda,L}\left(y\right)\right|^{2}\,dy-\frac{4}{3}\pi r^{3} & \ll_{\eta}r^{3}T^{3}\left(\frac{\lambda^{2\delta+\eta}}{\left\Vert G_{\lambda,L}\right\Vert ^{2}}+\frac{\lambda^{\frac{\delta}{2}+\frac{\eta}{2}}}{\left\Vert G_{\lambda,L}\right\Vert }+T\lambda^{\delta-\frac{1}{2}+2\eta}\right)+\frac{r^{2}}{T}\\
 & \ll r^{3}T^{4}\lambda^{\delta-\frac{1}{2}+2\eta}+r^{3}T^{3}\left(\lambda^{2\delta-\frac{1}{2}+2\eta}+\lambda^{\frac{\delta}{2}-\frac{1}{4}+\eta}\right)+\frac{r^{2}}{T}.
\end{align*}
 Choose $T=\lambda^{1/12-\frac{\epsilon}{2}}.$ If $r>\lambda^{-1/12+\epsilon},$
then $Tr>1$, and hence if we take $\eta=\epsilon$, 
\[
\int_{\mathbb{T}^{3}}b_{x}^{\pm}\left(y\right)\left|g_{\lambda,L}\left(y\right)\right|^{2}\,dy-\frac{4}{3}\pi r^{3}\ll r^{3}\left(\lambda^{\frac{\delta}{2}-\frac{\epsilon}{2}}+\frac{1}{r\lambda^{\frac{1}{12}-\frac{\epsilon}{2}}}\right)\ll r^{3}\left(\lambda^{\frac{\delta}{2}-\frac{\epsilon}{2}}+\lambda^{-\frac{\epsilon}{2}}\right)
\]
and the statement of the proposition follows by choosing $0<\delta<\epsilon$.
\end{proof}

\subsection{Proof of Theorem \ref{thm:Thm3dAE}}

In this subsection, we prove Theorem \ref{thm:Thm3dAE}. Denote 
\[
\mathcal{\widetilde{N}}_{1}=\left\{ n\in\mathcal{N}_{3}:\,n=4^{a}n_{1},\,4\nmid n_{1}\Rightarrow n_{1}>n/\log^{2}n\right\} .
\]

\begin{lem}
\label{lem:N1DensityOne}The set $\mathcal{\widetilde{N}}_{1}$ is
a density one subset in $\mathcal{N}_{3}$. 
\end{lem}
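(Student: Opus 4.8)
The plan is to prove the equivalent statement that the complement $\mathcal{N}_3\setminus\mathcal{\widetilde N}_1$ has density zero in $\mathcal{N}_3$. Writing this complement as
\[
\mathcal{B}=\bigl\{n\in\mathcal{N}_3:\ n=4^{a}n_1,\ 4\nmid n_1,\ 4^{a}\ge\log^{2}n\bigr\},
\]
and recalling that $\#\{n\in\mathcal{N}_3:\,n\le X\}\sim\tfrac{5}{6}X$, it suffices to show that $\#\{n\in\mathcal{B}:\,n\le X\}=o(X)$; in fact I will obtain the bound $O\!\bigl(X/\log^{2}X\bigr)$, and this holds already without the restriction $n\in\mathcal{N}_3$, so I may simply bound the number of integers $n\le X$ whose $4$-adic part $4^{a}$ satisfies $4^{a}\ge\log^{2}n$.

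First I would peel off the small values: there are at most $\sqrt{X}=o(X)$ integers $n\le\sqrt{X}$, so it is enough to count $n\in\mathcal{B}$ with $\sqrt{X}<n\le X$. For such $n$ one has $\log^{2}n>\tfrac14\log^{2}X$, so the defining condition forces the exponent $a$ in $n=4^{a}n_1$ ($4\nmid n_1$) to satisfy $4^{a}\ge\tfrac14\log^{2}X$, i.e.\ $a\ge a_0$, where $a_0:=\log_{4}\!\bigl(\tfrac14\log^{2}X\bigr)$.

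Next, for each fixed integer $a$ the number of $n\le X$ with $4^{a}\mid n$ is at most $X/4^{a}$, so summing the resulting geometric series over the integers $a\ge a_0$ gives
\[
\#\{n\in\mathcal{B}:\ \sqrt{X}<n\le X\}\ \le\ \sum_{a\ge a_0}\frac{X}{4^{a}}\ \ll\ \frac{X}{4^{a_0}}\ \ll\ \frac{X}{\log^{2}X}.
\]
Combining the two ranges, $\#\{n\in\mathcal{B}:\,n\le X\}\ll X/\log^{2}X+\sqrt{X}=o(X)$, and therefore
\[
\frac{\#\{n\in\mathcal{\widetilde N}_1:\,n\le X\}}{\#\{n\in\mathcal{N}_3:\,n\le X\}}=1-\frac{\#\{n\in\mathcal{B}:\,n\le X\}}{\#\{n\in\mathcal{N}_3:\,n\le X\}}\longrightarrow 1,
\]
which is the assertion of the lemma.

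The argument is entirely elementary: it uses only the asymptotic density of $\mathcal{N}_3$ together with the geometric decay of the count of integers having a large power of $4$ as a divisor. The single mild point to watch is that the cut-off $\log^{2}n$ in the definition of $\mathcal{B}$ depends on $n$, which is precisely why one first removes the range $n\le\sqrt{X}$, so as to replace $\log^{2}n$ by the fixed quantity $\tfrac14\log^{2}X$; beyond that there is no genuine obstacle.
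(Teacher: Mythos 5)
Your proof is correct. It takes a mildly different route from the paper's. The paper observes that for $n=4^{a}n_{1}\in\widetilde{\mathcal{N}}_{0}$ with $n\le X$ one has $a\le\log_{4}X$ and (using the monotonicity of $t/\log^{2}t$) $n_{1}\le X/\log^{2}X$, then counts pairs $(a,n_{1})$, obtaining $O(X/\log X)$. You instead remove the range $n\le\sqrt{X}$, note that for the remaining $n$ the defining condition forces divisibility by $4^{\lceil a_{0}\rceil}$ with $4^{a_{0}}\asymp\log^{2}X$, and bound the count by $X/4^{a_{0}}\ll X/\log^{2}X$, never needing the restriction $n\in\mathcal{N}_{3}$. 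Both arguments are elementary and equally short; yours yields the marginally sharper exponent $O(X/\log^{2}X)$, and your explicit peeling of $n\le\sqrt{X}$ makes the handling of the $n$-dependent threshold $\log^{2}n$ a bit more transparent than the paper's appeal to monotonicity, though either way the conclusion $o(X)$ follows immediately.
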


\begin{proof}
Consider the complement set 
\[
\widetilde{\mathcal{N}}_{0}=\mathcal{N}_{3}\setminus\mathcal{\widetilde{N}}_{1}=\left\{ n\in\mathcal{N}_{3}:\,n=4^{a}n_{1},\,4\nmid n_{1}\Rightarrow n_{1}\le n/\log^{2}n\right\} .
\]
Let $X>0$, and assume that $n\in\widetilde{\mathcal{N}}_{0}$ and
$n\le X$. Then $a\le\log_{4}X$, and $n_{1}\le X/\log^{2}X.$ Since
$n$ is uniquely determined by $a$ and $n_{1}$, we conclude that
\[
\#\left\{ n\in\widetilde{\mathcal{N}}_{0}:\,n\le X\right\} \ll X/\log X
\]
so $\widetilde{\mathcal{N}}_{0}$ is a density zero set in $\mathcal{N}_{3}$,
and therefore $\mathcal{\widetilde{N}}_{1}$ is a density one subset
in $\mathcal{N}_{3}$. 
\end{proof}
\begin{cor}
The set
\[
\Lambda'=\left\{ \lambda\in\Lambda:\,n_{\lambda}\in\mathcal{\widetilde{N}}_{1}\right\} 
\]
is a density one subset in $\Lambda$.
\end{cor}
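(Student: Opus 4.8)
The plan is to pass to the complement of $\Lambda'$ inside $\Lambda$ and show that its counting function is $o(X)$. Write $\widetilde{\mathcal{N}}_{0}=\mathcal{N}_{3}\setminus\widetilde{\mathcal{N}}_{1}$ as in the proof of Lemma \ref{lem:N1DensityOne}, so that $\Lambda\setminus\Lambda'=\left\{\lambda\in\Lambda:\,n_{\lambda}\in\widetilde{\mathcal{N}}_{0}\right\}$. The crucial structural input is the interlacing property (\ref{eq:Interlacing}): for a fixed $n=n_{k}\in\mathcal{N}_{3}$, the only new eigenvalues that can have $n$ as their nearest element of $\mathcal{N}_{3}$ are $\lambda_{k-1}$, lying in the gap $(n_{k-1},n_{k})$, and $\lambda_{k}$, lying in the gap $(n_{k},n_{k+1})$. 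Hence $\#\left\{\lambda\in\Lambda:\,n_{\lambda}=n\right\}\le 2$ for every $n\in\mathcal{N}_{3}$.

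Next I would turn this into a counting estimate. If $\lambda=\lambda_{k}\le X$, then $n_{k}<\lambda_{k}\le X$ and $n_{\lambda}\in\left\{n_{k},n_{k+1}\right\}$; in particular $n_{\lambda}\le n_{k+1}\le 2X$ for $X$ large, since $(X,2X]$ always contains an element of $\mathcal{N}_{3}$ (for instance a perfect square). Combining this with the two-to-one bound above,
\[
\#\left\{\lambda\in\Lambda\setminus\Lambda':\,\lambda\le X\right\}\le 2\,\#\left\{n\in\widetilde{\mathcal{N}}_{0}:\,n\le 2X\right\}\ll\frac{X}{\log X},
\]
where the last bound is precisely the estimate established in the proof of Lemma \ref{lem:N1DensityOne}.

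Finally, by the interlacing (\ref{eq:Interlacing}) the counting function of $\Lambda$ differs from that of $\mathcal{N}_{3}$ by at most $1$, so $\#\left\{\lambda\in\Lambda:\,\lambda\le X\right\}\sim\frac{5}{6}X$. Dividing,
\[
\frac{\#\left\{\lambda\in\Lambda':\,\lambda\le X\right\}}{\#\left\{\lambda\in\Lambda:\,\lambda\le X\right\}}=1-\frac{\#\left\{\lambda\in\Lambda\setminus\Lambda':\,\lambda\le X\right\}}{\#\left\{\lambda\in\Lambda:\,\lambda\le X\right\}}=1+O\!\left(\frac{1}{\log X}\right)\longrightarrow 1,
\]
which is the claim. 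I do not expect any genuine obstacle here: the only mild points requiring a little care are that a given $n\in\mathcal{N}_{3}$ may be the nearest value for two distinct new eigenvalues, and that $n_{\lambda}$ may slightly exceed $X$ when $\lambda\le X$; both are harmless and already absorbed in the constants above. The substantive work has effectively been done in Lemma \ref{lem:N1DensityOne}, and this corollary merely transports that density statement from $\mathcal{N}_{3}$ to $\Lambda$ via interlacing.
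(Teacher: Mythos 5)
Your argument is correct and is exactly the reasoning the paper leaves implicit when it declares the corollary ``immediate'' from Lemma~\ref{lem:N1DensityOne} together with the interlacing property; you have simply filled in the two small bookkeeping points (each $n$ is nearest to at most two new eigenvalues, and $n_{\lambda}\le 2X$ when $\lambda\le X$) and the comparison of the two counting functions, all of which are exactly what ``immediate via interlacing'' is meant to convey.
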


\begin{proof}
This is an immediate corollary of Lemma \ref{lem:N1DensityOne}, since
the elements of $\Lambda$ interlace with the elements of $\mathcal{N}_{3}$.
\end{proof}
\begin{lem}
\label{lem:3dAELemmaSmoothed}Let $\epsilon>0$. We have
\end{lem}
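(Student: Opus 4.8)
The plan is to re-run the proof of Proposition~\ref{prop:MainLemmaSmoothed}, the one decisive new ingredient being that along $\Lambda'$ the representation number $r_{3}(n_{\lambda})$ is as large as Siegel's bound allows. Indeed, since $n_{\lambda}\in\widetilde{\mathcal{N}}_{1}$, writing $n_{\lambda}=4^{a}n_{1}$ with $4\nmid n_{1}$ we have $n_{1}>n_{\lambda}/\log^{2}n_{\lambda}$, so by (\ref{eq:SiegelBound}) $r_{3}(n_{\lambda})=r_{3}(n_{1})\gg_{\eta}n_{1}^{1/2-\eta}\gg_{\eta}n_{\lambda}^{1/2-\eta}$ for every $\eta>0$; thus, since $|n_{\lambda}-\lambda|<L=\lambda^{\delta}<\lambda$, we get $r_{3}(n_{\lambda})\gg_{\eta}\lambda^{1/2-\eta}$. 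Inserting this into the estimates for the two cases $n_{\lambda}\in\mathcal{N}_{0}^{\zeta}$ and $n_{\lambda}\in\mathcal{N}_{1}^{\zeta}$ is exactly what is needed to push the admissible radius exponent from $-1/12$ to $-1/6$: it removes the stray factor $2^{a_{\zeta}}\le T$ lost in Lemma~\ref{lem:N1Lemma}, and it lets one bypass the lossy Cauchy--Schwarz step in Lemma~\ref{lem:MianBoundN0} (which discarded $r_{3}(n_{\lambda})$ in favour of the crude $\|G_{\lambda,L}\|^{2}$) by using instead that the lattice points $\xi$ occurring in the sums $\sum\nolimits^{2}$ and $\sum\nolimits^{3}$ all lie on the single sphere $|\xi|^{2}=n_{\lambda}$.

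Concretely, I would start from the expansion (\ref{eq:3dFormulaAfterExpansion}) with $L=\lambda^{\delta}$ and parameters $T,r$ with $Tr>1$, and split the outer sum according to $n_{\lambda}\in\mathcal{N}_{0}^{\zeta}$ or $n_{\lambda}\in\mathcal{N}_{1}^{\zeta}$. For a fixed $\zeta$ with $n_{\lambda}\in\mathcal{N}_{0}^{\zeta}$, split the inner sum into $\sum\nolimits^{1}+\sum\nolimits^{2}+\sum\nolimits^{3}$ as before: Lemma~\ref{lem:SphericalStripsLemma} gives $\sum\nolimits^{1}\ll_{\eta}\lambda^{2\delta+\eta}$ (as in Lemma~\ref{lem:MianBoundN0}), while $\sum\nolimits^{2}$ is non-empty only when $|n_{\lambda}-\lambda|<1/2$, has $\ll_{\eta}L\lambda^{\eta}$ terms each of size $\ll1/|n_{\lambda}-\lambda|$ (the remaining factor being $\ge1/2$), and one has $\|G_{\lambda,L}\|^{2}\ge r_{3}(n_{\lambda})/(n_{\lambda}-\lambda)^{2}\gg_{\eta}\lambda^{1/2-\eta}/(n_{\lambda}-\lambda)^{2}$; hence $\sum\nolimits^{2}/\|G_{\lambda,L}\|^{2}\ll_{\eta}\lambda^{\delta-1/2+2\eta}$, and $\sum\nolimits^{3}$ is handled identically after the substitution $\xi\mapsto\xi-\zeta$. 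For a fixed $\zeta$ with $n_{\lambda}\in\mathcal{N}_{1}^{\zeta}$, I would run the proof of Lemma~\ref{lem:N1Lemma} unchanged except for the bound on $\sum\nolimits^{5}$, where the estimate $r_{3}(n_{\lambda})\gg_{\eta}\lambda^{1/2-\eta}$ replaces $r_{3}(n_{1})\gg n_{\lambda}^{1/2-\eta}/2^{a_{\zeta}}$, giving $\sum\nolimits^{5}\ll_{\eta}\|G_{\lambda,L}\|^{2}\lambda^{\delta-1/2+2\eta}$ and, together with $\sum\nolimits^{4}\ll_{\eta}\lambda^{2\delta+\eta}$ and Cauchy--Schwarz, the same per-$\zeta$ estimate. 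Using $\|G_{\lambda,L}\|^{2}\gg_{\eta}\lambda^{1/2-\eta}$ (Lemma~\ref{lem:LowerBndG3d}(3)) I conclude that in both cases
\[
\frac{1}{\|G_{\lambda,L}\|^{2}}\Biggl|\,\sum_{\substack{\left|\,\left|\xi\right|^{2}-\lambda\right|<L\\ \left|\,\left|\xi-\zeta\right|^{2}-\lambda\right|<L}}\frac{1}{\left(\left|\xi\right|^{2}-\lambda\right)\left(\left|\xi-\zeta\right|^{2}-\lambda\right)}\Biggr|\ll_{\eta}\lambda^{2\delta-1/2+2\eta}.
\]

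Summing this over the $\ll T^{3}$ vectors $\zeta$ with $0<|\zeta|<T$, bounding $|\hat{b}_{x}^{\pm}(\zeta)|\ll r^{3}$ by property~(4) of Lemma~\ref{lem:HarmanLemma}, and substituting back into (\ref{eq:3dFormulaAfterExpansion}), I obtain uniformly in $x$
\[
\Biggl|\frac{1}{\frac{4}{3}\pi r^{3}}\int_{\mathbb{T}^{3}}b_{x}^{\pm}(y)\left|g_{\lambda,L}(y)\right|^{2}dy-1\Biggr|\ll_{\eta}T^{3}\lambda^{2\delta-1/2+2\eta}+\frac{1}{rT}.
\]
Taking $T=\lambda^{1/6-\epsilon/2}$ makes $Tr>1$ and $1/(rT)<\lambda^{-\epsilon/2}$ for all $r>\lambda^{-1/6+\epsilon}$, while $T^{3}\lambda^{2\delta-1/2+2\eta}=\lambda^{2\delta+2\eta-3\epsilon/2}$, which tends to $0$ once $\delta$ and $\eta$ are small enough in terms of $\epsilon$; letting $\lambda\to\infty$ along $\Lambda'$ then gives the claim. (If one wants $\delta$ throughout the full range of the statement, one first sharpens $\sum\nolimits^{1}$ and $\sum\nolimits^{4}$ to $\ll_{\eta}\lambda^{\delta+\eta}$ by a Cauchy--Schwarz argument exploiting $\sum_{n\in\mathbb{Z},\,|n-\lambda|\ge1/2}(n-\lambda)^{-2}=O(1)$, after which the per-$\zeta$ bound above improves to $\lambda^{\delta-1/2+2\eta}$.)

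The only place $\lambda\in\Lambda'$ enters is the bound $r_{3}(n_{\lambda})\gg_{\eta}\lambda^{1/2-\eta}$; the step needing genuine care is its deployment, namely keeping the $\|G_{\lambda,L}\|$-normalisation tight enough to retain the full power of $r_{3}(n_{\lambda})$ in the $\sum\nolimits^{2},\sum\nolimits^{3},\sum\nolimits^{5}$ estimates rather than reverting to $\|G_{\lambda,L}\|^{2}\gg\lambda^{1/2-\eta}$. The payoff is structural: the resulting per-$\zeta$ bound carries no factor of $T$, so the total loses only $T^{3}$ (instead of the $T^{4}$ of Proposition~\ref{prop:MainLemmaSmoothed}), allowing $T$ to be taken as large as $\lambda^{1/6-\epsilon/2}$.
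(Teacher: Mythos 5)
Your proof is correct, and it isolates the same decisive mechanism as the paper: once $\lambda\in\Lambda'$ forces $n_{\lambda}\in\widetilde{\mathcal{N}}_{1}$, Siegel's bound gives $r_{3}(n_{\lambda})\gg_{\eta}\lambda^{1/2-\eta}$, so in the $\sum^{5}$-type estimate the normalisation $\|G_{\lambda,L}\|^{2}\ge r_{3}(n_{\lambda})/(n_{\lambda}-\lambda)^{2}$ absorbs the full sphere count and removes the stray factor $2^{a_{\zeta}}\le T$, which is exactly what upgrades $T^{4}$ to $T^{3}$ and hence the admissible range of $T$ to $\lambda^{1/6-\Theta(\epsilon)}$.

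Where you diverge is organisational. The paper simply drops the $\mathcal{N}_{0}^{\zeta}$/$\mathcal{N}_{1}^{\zeta}$ dichotomy in this lemma and runs the Cauchy--Schwarz $+\ \sum^{4}+\sum^{5}$ argument of Lemma~\ref{lem:N1Lemma} for \emph{every} $\zeta$ with $0<|\zeta|<T$: the only place $n_{\lambda}\in\mathcal{N}_{1}^{\zeta}$ was used in that proof was to supply a lower bound on $r_{3}(n_{\lambda})$, which is now furnished uniformly in $\zeta$ by $n_{\lambda}\in\widetilde{\mathcal{N}}_{1}$. You instead keep the case split and, in the $\mathcal{N}_{0}^{\zeta}$ branch, replace the lossy Cauchy--Schwarz of Lemma~\ref{lem:MianBoundN0} by a direct count of $\sum^{2},\sum^{3}$ (one factor fixed at $n_{\lambda}-\lambda$, the other $\ge1/2$, $\ll_{\eta}L\lambda^{\eta}$ terms by Lemma~\ref{lem:SphericalStripsLemma}) together with the tight normalisation $\|G_{\lambda,L}\|^{2}\ge r_{3}(n_{\lambda})/(n_{\lambda}-\lambda)^{2}$. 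This is a valid and instructive reorganisation --- it shows Lemma~\ref{lem:MianBoundN0}'s bound was not sharp once the $\widetilde{\mathcal{N}}_{1}$ hypothesis is available --- but the case split is an unnecessary detour; the $\mathcal{N}_{1}^{\zeta}$-style argument covers both cases once $r_{3}(n_{\lambda})$ is large. Both routes arrive at the same per-$\zeta$ bound $\ll_{\eta}\lambda^{2\delta-1/2+O(\eta)}$ and close the argument with $T=\lambda^{1/6-\Theta(\epsilon)}$; the paper's choice $T=\lambda^{1/6-7\epsilon/8}$, $\eta=\epsilon$ and yours ($T=\lambda^{1/6-\epsilon/2}$) both work under the stated $0<\delta<\epsilon/16$. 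Your closing parenthetical sharpening of $\sum^{1},\sum^{4}$ to $\lambda^{\delta+\eta}$ (via $\sum_{|n-\lambda|\ge1/2}(n-\lambda)^{-2}=O(1)$) is correct, but not needed: the range $\delta<\epsilon/16$ in the lemma already suffices with the cruder $\lambda^{2\delta+\eta}$, as your own main computation shows.
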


\[
\sup_{\substack{x\in\mathbb{T}^{2}\\
r>\lambda^{-1/6+\epsilon}
}
}\left|\frac{1}{\frac{4}{3}\pi r^{3}}\int_{\mathbb{T}^{3}}b_{x}^{\pm}\left(y\right)\left|g_{\lambda,L}\left(y\right)\right|^{2}\,dy-1\right|\to0
\]
as $\lambda\to\infty$ along $\Lambda'$, where $L=\lambda^{\delta}$,
$0<\delta<\epsilon/16$. 
\begin{proof}
Recall that by Lemma \ref{lem:ExpansionLemma}, for $T,r>0$ such
that $Tr>1$, we have
\begin{align}
 & \int_{\mathbb{T}^{3}}b_{x}^{\pm}\left(y\right)\left|g_{\lambda,L}\left(y\right)\right|^{2}\,dy-\frac{4}{3}\pi r^{3}\nonumber \\
 & =\frac{1}{\left\Vert G_{\lambda,L}\right\Vert ^{2}}\sum_{0<\left|\zeta\right|<T}\hat{b}_{x}^{\pm}\left(\zeta\right)e^{-i\left\langle x_{0},\zeta\right\rangle }\sum_{\begin{subarray}{c}
\left|\left|\xi\right|^{2}-\lambda\right|<L\\
\left|\left|\xi-\zeta\right|^{2}-\lambda\right|<L
\end{subarray}}\frac{1}{\left(\left|\xi\right|^{2}-\lambda\right)\left(\left|\xi-\zeta\right|^{2}-\lambda\right)}+O\left(r^{2}/T\right).\label{eq:AEexpansion}
\end{align}
Assume that $\lambda\in\Lambda'$, so that $n_{\lambda}\in\mathcal{\widetilde{N}}_{1}$,
and proceed as in the proof of Lemma \ref{lem:N1Lemma}: recall the
bound (\ref{eq:N1Bound}), and as before split the first summation
(\ref{eq:FirstSumN1}) on the right-hand-side of (\ref{eq:N1Bound})
into the sums $\sum^{4}+\sum^{5}$. The bound (\ref{eq:Sigma4Bound})
for $\sum^{4}$ still holds:
\[
\sum\nolimits ^{4}\frac{1}{\left(\left|\xi\right|^{2}-\lambda\right)^{2}}\ll_{\eta}\lambda^{2\delta+\eta}.
\]
In order to bound $\sum^{5}$, recall that by the bound (\ref{eq:Sigma5Bound})
we have
\[
\sum\nolimits ^{5}\frac{1}{\left(\left|\xi\right|^{2}-\lambda\right)^{2}}\ll_{\eta}\left\Vert G_{\lambda,L}\right\Vert ^{2}\frac{L\lambda^{\eta}}{r_{3}\left(n_{\lambda}\right)}.
\]
Write $n_{\lambda}=4^{a}n_{1}$ with $4\nmid n_{1}$ so that $n_{1}>n_{\lambda}/\log^{2}n_{\lambda}$.
Then by (\ref{eq:SiegelBound}), 
\[
\frac{L\lambda^{\eta}}{r_{3}\left(n_{\lambda}\right)}=\frac{\lambda^{\delta+\eta}}{r_{3}\left(n_{1}\right)}\ll\frac{\lambda^{\delta+\eta}}{n_{1}^{1/2-\eta}}\ll\frac{\lambda^{\delta+\eta}}{\left(n_{\lambda}/\log^{2}n_{\lambda}\right)^{1/2-\eta}}\ll\lambda^{\delta-\frac{1}{2}+2.5\eta}.
\]
The second summation on the right-hand-side of (\ref{eq:N1Bound})
satisfies (\ref{eq:SecondSumN1}), and therefore can be bounded similarly.
Hence,
\begin{equation}
\sum_{\begin{subarray}{c}
\left|\left|\xi\right|^{2}-\lambda\right|<L\\
\left|\left|\xi-\zeta\right|^{2}-\lambda\right|<L
\end{subarray}}\frac{1}{\left(\left|\xi\right|^{2}-\lambda\right)\left(\left|\xi-\zeta\right|^{2}-\lambda\right)}\ll_{\eta}\lambda^{2\delta+\eta}+\left\Vert G_{\lambda,L}\right\Vert ^{2}\lambda^{\delta-\frac{1}{2}+2.5\eta}.\label{eq:AEmainBound}
\end{equation}
Substituting the bound (\ref{eq:AEmainBound}) in (\ref{eq:AEexpansion}),
and using property (4) of Lemma \ref{lem:HarmanLemma} and the third
part of Lemma \ref{lem:LowerBndG3d}, we obtain
\begin{align*}
\int_{\mathbb{T}^{3}}b_{x}^{\pm}\left(y\right)\left|g_{\lambda,L}\left(y\right)\right|^{2}\,dy-\frac{4}{3}\pi r^{3} & \ll r^{3}T^{3}\left(\frac{\lambda^{2\delta+\eta}}{\left\Vert G_{\lambda,L}\right\Vert ^{2}}+\lambda^{\delta-\frac{1}{2}+2.5\eta}\right)+\frac{r^{2}}{T}\\
 & \ll r^{3}\left(T^{3}\lambda^{2\delta-\frac{1}{2}+2.5\eta}+\frac{1}{rT}\right).
\end{align*}
If $r>\lambda^{-1/6+\epsilon}$ , and $T=\lambda^{1/6-\frac{7}{8}\epsilon}>\frac{1}{r}$,
then taking $\eta=\epsilon$, we have
\[
\int_{\mathbb{T}^{3}}b_{x}^{\pm}\left(y\right)\left|g_{\lambda,L}\left(y\right)\right|^{2}\,dy-\frac{4}{3}\pi r^{3}\ll r^{3}\left(\lambda^{2\delta-\epsilon/8}+\lambda^{-\epsilon/8}\right).
\]
The statement of the lemma follows taking $0<\delta<\epsilon/16.$
\end{proof}
By a standard diagonal argument, and using property (1) of Lemma \ref{lem:HarmanLemma},
we deduce the following proposition.
\begin{prop}
\label{prop:MainProp3d-2}There exists a density one subset $\Lambda'\subseteq\Lambda$
such that for every $\epsilon>0$,
\[
\sup_{\substack{x\in\mathbb{T}^{2}\\
r>\lambda^{-1/6+\epsilon}
}
}\left|\frac{1}{\frac{4}{3}\pi r^{3}}\int_{B_{x}\left(r\right)}\left|g_{\lambda,L}\left(y\right)\right|^{2}\,dy-1\right|\to0
\]
as $\lambda\to\infty$ along $\Lambda'$, where $L=\lambda^{\delta}$,
$0<\delta<\epsilon/16$. 
\end{prop}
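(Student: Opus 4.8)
The plan is to deduce Proposition \ref{prop:MainProp3d-2} from Lemma \ref{lem:3dAELemmaSmoothed} by sandwiching the ball integral between the smoothed integrals, and then to conclude with a routine diagonal argument. First I would take $\Lambda'$ to be the density one set $\{\lambda\in\Lambda:\,n_{\lambda}\in\widetilde{\mathcal{N}}_{1}\}$ furnished by the corollary to Lemma \ref{lem:N1DensityOne}; this is precisely the set appearing in Lemma \ref{lem:3dAELemmaSmoothed}, and, crucially, it does not depend on $\epsilon$. Fix $\epsilon>0$, choose $0<\delta<\epsilon/16$ and set $L=\lambda^{\delta}$, and for each $r>\lambda^{-1/6+\epsilon}$ take the parameter $T=\lambda^{1/6-\frac{7}{8}\epsilon}$ exactly as in the proof of Lemma \ref{lem:3dAELemmaSmoothed}; since $\epsilon>\frac{7}{8}\epsilon$, this guarantees $Tr>1$, so the Beurling--Selberg polynomials $b_{x}^{\pm}$ of Lemma \ref{lem:HarmanLemma} are available for every $x\in\mathbb{T}^{3}$.

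Next I would invoke property (1) of Lemma \ref{lem:HarmanLemma} in the translated form (\ref{eq:Prop1b}): for all such $x$ and $r$,
\[
\frac{1}{\frac{4}{3}\pi r^{3}}\int_{\mathbb{T}^{3}}b_{x}^{-}(y)\left|g_{\lambda,L}(y)\right|^{2}\,dy\le\frac{1}{\frac{4}{3}\pi r^{3}}\int_{B_{x}(r)}\left|g_{\lambda,L}(y)\right|^{2}\,dy\le\frac{1}{\frac{4}{3}\pi r^{3}}\int_{\mathbb{T}^{3}}b_{x}^{+}(y)\left|g_{\lambda,L}(y)\right|^{2}\,dy.
\]
By Lemma \ref{lem:3dAELemmaSmoothed}, both extreme quantities tend to $1$ uniformly over $x\in\mathbb{T}^{3}$ and $r>\lambda^{-1/6+\epsilon}$ as $\lambda\to\infty$ along $\Lambda'$, hence so does the quantity sandwiched between them; this establishes the asserted estimate for the fixed $\epsilon$.

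Finally, because $\Lambda'$ is independent of $\epsilon$, a standard diagonal argument---identical to the one at the end of the proof of Proposition \ref{prop:MainProp}, applying the previous step with $\epsilon=1/m$ for $m=1,2,\dots$---upgrades this to the simultaneous statement for all $\epsilon>0$. I do not expect any genuine obstacle here: all of the analytic work has already been carried out in Lemma \ref{lem:3dAELemmaSmoothed}, and the only point needing attention is to choose $\delta$ and $T$ as functions of $\epsilon$ so that the hypothesis $Tr>1$ of Lemma \ref{lem:HarmanLemma} holds throughout the range $r>\lambda^{-1/6+\epsilon}$, which is exactly how those parameters were selected in the proof of that lemma.
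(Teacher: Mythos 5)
Your proposal is correct and matches the paper's intended argument: the proposition follows from Lemma \ref{lem:3dAELemmaSmoothed} by sandwiching $\mathbf{1}_{B_x(r)}$ between $b_x^{\mp}$ via property (1) of Lemma \ref{lem:HarmanLemma}, with $\Lambda'$ the (already $\epsilon$-independent) set $\{\lambda\in\Lambda:\,n_\lambda\in\widetilde{\mathcal{N}}_1\}$. The only minor point is that the diagonal argument is essentially vacuous here, since $\Lambda'$ does not depend on $\epsilon$ (unlike in the two-dimensional Proposition \ref{prop:MainProp}), but you correctly note this and the argument is sound regardless.
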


\begin{proof}[Proof of Theorem \ref{thm:Thm3dAE} ]
 By Lemma \ref{lem:ApproximationLemma3d}, for every $x\in\mathbb{T}^{3},\,r>\lambda^{-1/6+\epsilon},$
we have
\begin{align*}
\left|\frac{1}{\frac{4}{3}\pi r^{3}}\int_{B_{x}\left(r\right)}\left|g_{\lambda}\left(y\right)\right|^{2}\,dy-1\right| & \le\left|\frac{1}{\frac{4}{3}\pi r^{3}}\int_{B_{x}\left(r\right)}\left|g_{\lambda,L}\left(y\right)\right|^{2}\,dy-1\right|+O_{\eta}\left(\lambda^{\eta-\delta/2}\right)
\end{align*}
Hence,
\begin{align*}
\sup_{\substack{x\in\mathbb{T}^{2}\\
r>\lambda^{-1/6+\epsilon}
}
}\left|\frac{1}{\frac{4}{3}\pi r^{3}}\int_{B_{x}\left(r\right)}\left|g_{\lambda}\left(y\right)\right|^{2}\,dy-1\right| & \le\sup_{\substack{x\in\mathbb{T}^{2}\\
r>\lambda^{-1/6+\epsilon}
}
}\left|\frac{1}{\frac{4}{3}\pi r^{3}}\int_{B_{x}\left(r\right)}\left|g_{\lambda,L}\left(y\right)\right|^{2}\,dy-1\right|\\
 & +O_{\eta}\left(\lambda^{\eta-\delta/2}\right).
\end{align*}
Theorem \ref{thm:Thm3dAE} now follows from Proposition \ref{prop:MainProp3d-2},
choosing $\delta<\epsilon/16$ and $\eta<\delta/2$.
\end{proof}

\end{document}